\crefname{hypothesis}{Hypothesis}{Hypotheses}
\title{Flux-Preserving Adaptive Finite State Projection for Multiscale Stochastic Reaction Networks
\thanks{Submitted to the editors \today.}}
\author{Aditya Dendukuri\thanks{Department of Computer Science, University of California, Santa Barbara
  (\email{aditya\_dendukuri@ucsb.edu}, \email{petzold@engineering.ucsb.edu}).}
\and Shivkumar Chandrasekaran\thanks{Department of Electrical and Computer Engineering, University of California, Santa Barbara 
  (\email{shiv@ece.ucsb.edu}).}
\and Linda Petzold\footnotemark[2]}
\DeclareMathOperator{\diag}{diag}
\newcommand*{\addFileDependency}[1]{% argument=file name and extension
  \typeout{(#1)}% latexmk will find this if $recorder=0 (however, in that case, it will ignore #1 if it is a .aux or .pdf file etc and it exists! if it doesn't exist, it will appear in the list of dependents regardless)
  \@addtofilelist{#1}% if you want it to appear in \listfiles, not really necessary and latexmk doesn't use this
  \IfFileExists{#1}{}{\typeout{No file #1.}}% latexmk will find this message if #1 doesn't exist (yet)
}
  \def\refstepcounter@optarg[#1]#2{%
    \cref@old@refstepcounter{#2}%
    \cref@constructprefix{#2}{\cref@result}%
    \@ifundefined{cref@#1@alias}%
      {\def\@tempa{#1}}%
      {\def\@tempa{\csname cref@#1@alias\endcsname}}%
    \protected@edef\cref@currentlabel{%
      [\@tempa][\arabic{#2}][\cref@result]%
      \csname p@#2\endcsname\csname the#2\endcsname}}%
\begin{document}
\maketitle

\begin{abstract}
The Finite State Projection (FSP) method approximates the Chemical Master Equation (CME) by restricting the dynamics to a finite subset of the (typically infinite) state space, enabling direct numerical solution with computable error bounds. Adaptive variants update this subset in time, but multiscale systems with widely separated reaction rates remain challenging, as low-probability bottleneck states can carry essential probability flux and the dynamics alternate between fast transients and slowly evolving stiff regimes. We propose a flux-based adaptive FSP method that uses probability flux to drive both state-space pruning and time-step selection. The pruning rule protects low-probability states with large outgoing flux, preserving connectivity in bottleneck systems, while the time-step rule adapts to the instantaneous total flux to handle rate constants spanning several orders of magnitude. Numerical experiments on stiff, oscillatory, and bottleneck reaction networks show that the method maintains accuracy while using substantially smaller state spaces.
\end{abstract}

\begin{keywords}
  chemical master equation, finite state projection, adaptive methods, stiff stochastic systems, multiscale modeling, flux-based pruning
\end{keywords}

\begin{AMS}
  60H35, 65C40, 60J27  
\end{AMS}

\section{Introduction}
The Chemical Master Equation (CME) governs the evolution of probability distributions for well-mixed stochastic reaction networks and is the canonical description of intrinsic noise in biochemical systems \cite{Gillespie1977, Wilkinson2006-so}. 
Formally, the CME is an infinite system of linear ODEs posed on a countably infinite lattice of molecular population states. 
Because the reachable state space typically grows combinatorially with copy numbers, the CME cannot be solved directly without an adaptive or carefully structured truncation scheme.

An important observation for practical biochemical systems is that their probability distributions are naturally sparse: most of the theoretically infinite state space carries negligible probability. The Finite State Projection (FSP) method \cite{Munsky2006} exploits this sparsity by restricting the CME to a finite subset of states that carry most of the probability mass, yielding a finite-dimensional system that can be solved exactly together with \emph{a priori} bounds on the truncation error. Several optimizations and variants have since been developed~\cite{Kazeev2014, Vo2017, Dinh_2020,Sunkara2012, Jahnke2010, Burrage2006AKF, Sidje2015}. We briefly review some of these variants in Section~\ref{sec:related-work} and direct the reader to \cite{Dinh2016} for a detailed literature review. 

In this paper we focus on stiff biochemical reaction networks \cite{Gillespie2001, CAO20083472, Wilkinson2006-so}, where reaction rates span many orders of magnitude and the dynamics must be tracked over long time horizons. Stiff stochastic systems create several numerical complications and require adaptive methods that adjust simulation parameters locally in time to match the instantaneous demands of the dynamics. We introduce a flux-based adaptive FSP method that addresses both spatial and temporal stiffness. The central idea is to use a quantitative measure of boundary activity, given by the rate at which the probability distribution flows, to adaptively control both state space truncation and time stepping and to obtain computable error bounds without stationary information. This allows the method to retain states with very low probability that are nevertheless essential for preserving connectivity of the underlying Markov process. The time step selection uses the system flux as an activity indicator and automatically adjusts step sizes across multiple time scales. 

\subsection{Related Work}
\label{sec:related-work}
We review prior work on adaptive FSP methods and state space truncation strategies for stochastic chemical kinetics. The original FSP method \cite{Munsky2006} introduced fixed truncation with rigorous error bounds based on the probability flux leaving the truncated domain. Munsky and Khammash \cite{MUNSKY2007818} extended this framework to time-stepping variants that repeatedly expand the state space along reachable directions, providing adaptive control of the truncation error over time.

Peles et al.~\cite{peles} combined FSP with time scale separation techniques, using eigenvalue-based decompositions to separate slow and fast subsystems and aggregating rarely visited states into sink states with \emph{a priori} error bounds. Related aggregation techniques for stiff Markov chains were developed by Bobbio and Trivedi \cite{aggregation}, who compute cumulative measures efficiently by lumping rapidly evolving subsets of states. This approach achieves substantial model reduction but requires identifying and separating fast and slow dynamics \emph{a priori}, which is difficult when time scales are not well separated. Zhang, Watson, and Cao \cite{zhang_cao} introduced an adaptive aggregation method for the CME that groups micro-states into macro-states using information gathered from Monte Carlo simulation, thereby reducing the effective dimensionality of the system despite the lack of explicit \emph{a priori} error estimates.

MacNamara et al.\cite{shev} developed hybrid FSP--leap (FSP--tau-leaping) schemes using Krylov subspace methods, showing that CME-based solvers can efficiently detect equilibrium, compute moments, and generate approximate sample paths for bistable systems such as the genetic toggle switch. Kuntz et al.~\cite{kuntz2010} introduced the exit-time finite state projection (ETFSP) approach, a truncation-based extension of FSP that yields lower bounds on exit distributions and occupation measures, together with computable total-variation error bounds that decrease monotonically and converge as the truncation is enlarged. For reaction networks with conserved quantities, the \textit{slack reactant} method \cite{slack} exploits stoichiometric constraints to reduce the effective dimensionality of the state space, pruning states that violate conservation laws.

The rest of the paper is organized as follows. Section~\ref{sec:background} reviews the CME and FSP framework. Section~\ref{sec:methodology} presents the flux-based pruning rule, adaptive time step selection, and CME matrix construction algorithm. Section~\ref{sec:theory} derives local and global error bounds. Section~\ref{sec:results} reports numerical experiments on four benchmark systems, including rigorous error analysis for a bottleneck reaction network, computational efficiency evaluation of the matrix reconstruction approach, and a head-to-head comparison with the Krylov-FSP-SSA method~\cite{Sidje2015}. Section~\ref{sec:discussion} discusses the results and Section~\ref{sec:conclusions} concludes with a summary and outlook.

\section{Background}
\label{sec:background}

We first briefly review the necessary definitions and methods, then state the main motivation of this paper in Remark~\ref{rem:stiffness-challenges}. 

\subsection{The Chemical Master Equation}

\begin{definition}[Chemical Master Equation (CME)]
The \textit{Chemical Master Equation (CME)} describes the time evolution of the probability distribution $p(\boldsymbol{x}, t)$ over the state $\boldsymbol{x}$. For a single state $\boldsymbol{x}$, the CME is
\begin{equation}
    \frac{d}{dt} p(\boldsymbol{x}, t) =
    \sum_{k=1}^{M} \big[\alpha_k(\boldsymbol{x} - \boldsymbol{\nu}_k)\, p(\boldsymbol{x} - \boldsymbol{\nu}_k, t)
    - \alpha_k(\boldsymbol{x})\, p(\boldsymbol{x}, t)\big],
\end{equation}
where $\alpha_k(\boldsymbol{x})$ is the propensity for reaction $k$ at state $\boldsymbol{x}$ and $\boldsymbol{\nu}_k$ is its stoichiometric change vector.

For the entire state space $\mathbf{X}$, the CME can be written in matrix form as
\begin{equation}
    \frac{d}{dt}\, \boldsymbol{p}(\mathbf{X}, t) = \mathbf{A}\, \boldsymbol{p}(\mathbf{X}, t),
\end{equation}
where $\boldsymbol{p}(\mathbf{X}, t)$ is the probability vector over $\mathbf{X}$ and the \textit{generator matrix} $\mathbf{A}$ has entries
\begin{align}
    \mathbf{A}_{ij} =
    \begin{cases}
        -\displaystyle\sum_{k=1}^{M} \alpha_k(\mathbf{X}_j), & \text{if } i=j, \\
         \alpha_k(\mathbf{X}_j) & \text{if }\, \mathbf{X}_i=\mathbf{X}_j+\boldsymbol{\nu}_k, \\
    \end{cases}
\end{align}
and $\mathbf{A}_{ij}=0$ when no reaction connects $\mathbf{X}_j$ to $\mathbf{X}_i$.
\end{definition}

\begin{remark}[Generator matrix properties]
The generator $\mathbf{A}$ satisfies:  
\begin{enumerate}
    \item $\mathbf{A}_{ij}\ge 0$ for $i\neq j$ (off-diagonal nonnegative),
    \item $\mathbf{A}_{ii}\le 0$ (diagonal nonpositive),
    \item $\sum_{i} \mathbf{A}_{ij}=0$ for all $j$ (each column sums to zero),
\end{enumerate}
\end{remark}

\subsection{Finite State Projection Method}

The Finite State Projection (FSP) method \cite{Munsky2006} approximates the CME by restricting dynamics to a finite subset $J \subset \mathbf{X}$. The complement $J' = \mathbf{X} \setminus J$ is aggregated into a single absorbing point (a sink state). Writing the full generator in block form with respect to $(J, J')$, the evolution of the projected distribution satisfies
\begin{equation}
    \dot{\boldsymbol{p}}_J(t) = \mathbf{A}_J\,\boldsymbol{p}_J(t) + \mathbf{A}_{JJ'}\,\boldsymbol{p}_{J'}(t),
    \label{eq:fsp-exact}
\end{equation}
where $\mathbf{A}_J$ is the principal submatrix of $\mathbf{A}$ on $J$ (transitions within $J$, exact), and $\mathbf{A}_{JJ'}$ couples $J'$ back into $J$. If all initial probability is placed inside $J$, then $\boldsymbol{p}_{J'}(0) = \mathbf{0}$ and the coupling term is initially zero. The FSP approximation drops this term entirely, yielding the closed finite-dimensional system
\begin{equation}
    \boldsymbol{p}(\mathbf{X}_J, t) \;\approx\; e^{\mathbf{A}_J t}\, \boldsymbol{p}_J(0).
    \label{eq:fsp-approx}
\end{equation}
Because $\mathbf{A}_J$ is a principal submatrix of a conservative generator, its column sums are nonpositive (transitions to $J'$ are missing from the off-diagonal, so probability leaks out at the boundary). Consequently $\Gamma_J(t_f) := \mathbf{1}^\top e^{\mathbf{A}_J t_f}\boldsymbol{p}_J(0) \le 1$ measures the probability mass retained in $J$ under the approximation. The following two theorems \cite{Munsky2006} give the theoretical guarantees.

\begin{theorem}[Monotonicity \cite{Munsky2006}]
\label{thm:fsp-monotone}
If $J \subseteq J'$, then $e^{\mathbf{A}_J t}\boldsymbol{p}_J(0) \le e^{\mathbf{A}_{J'} t}\boldsymbol{p}_{J'}(0)$ componentwise for all $t \ge 0$. That is, enlarging the projection set can only increase (or leave unchanged) the probability assigned to each state.
\end{theorem}

\begin{theorem}[FSP error bound \cite{Munsky2006}]
\label{thm:classical_fsp}
Let $\mathbf{A}_J$ be the principal submatrix of a CME generator on a finite set $J \subset \mathbf{X}$, and let $p_J(t)$ denote the true CME solution restricted to $J$. If
\begin{equation}
\Gamma_J(t_f) := \mathbf{1}^\top e^{\mathbf{A}_J t_f}\,\boldsymbol{p}_J(0) \;\ge\; 1-\varepsilon,
\label{eq:fsp-mass-cond}
\end{equation}
then the elementwise bounds
\begin{equation}
e^{\mathbf{A}_J t_f}\,\boldsymbol{p}_J(0) \;\le\; p_J(t_f) \;\le\; e^{\mathbf{A}_J t_f}\,\boldsymbol{p}_J(0) + \varepsilon\,\mathbf{1}
\label{eq:fsp-elementwise}
\end{equation}
hold for all $t_f \ge 0$.
\end{theorem}

The lower bound follows from monotonicity (Theorem~\ref{thm:fsp-monotone}): truncation can only reduce probability at each state. The upper bound says the pointwise error is at most $\varepsilon$, the total leakage through the boundary. Crucially, $\varepsilon = 1 - \Gamma_J(t_f)$ is computable without access to the true solution, so the FSP algorithm proceeds by iteratively enlarging $J$ until $\Gamma_J \ge 1-\varepsilon$ is verified.

\subsubsection{Time-stepping and sliding window approaches}

\begin{figure}[!htbp]
    \centering
    \includegraphics[width=\linewidth]{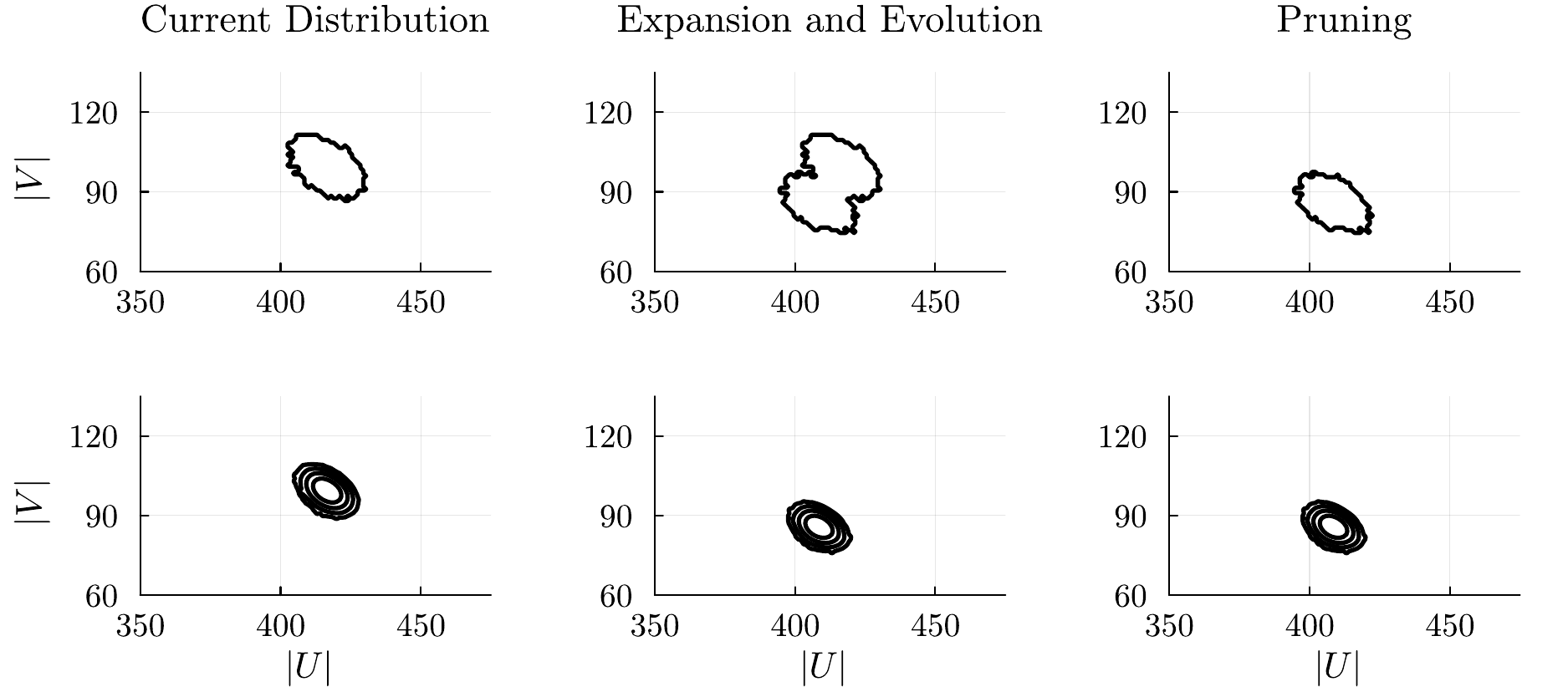}
    \caption{Visualization of the adaptive FSP procedure (species $U$ and $V$) during a single time step in Algorithm~\ref{alg:fsp-naive}. The top row shows the truncation boundary during the expand, evolve, and prune phases, while the bottom row shows the corresponding probability contours.}
    \label{fig:adapt-FSP}
\end{figure}

For simulations over long time intervals, maintaining a single fixed projection space is generally impractical: the reachable state space grows combinatorially in time, quickly making a static truncation too large to store or inefficient to evolve. A natural approach is to partition the time domain $[0,T]$ into subintervals $[t_k,t_{k+1}]$ and solve a sequence of restricted CME problems, each posed on a projection space $\mathbf{X}_J^{(k)}$ tailored to the states that are relevant during that subinterval \cite{Wolf2010, Dinh_2020}.  
At each time step, the method initializes $\boldsymbol{p}_J^{(k)}(t_k)$ from the previous solution, expands $\mathbf{X}_J^{(k)}$ to include newly reachable states, evolves the distribution forward in time, and prunes states whose probabilities fall below a chosen tolerance.

\begin{algorithm}[H]
\caption{FSP with Time Stepping and State Space Control}
\label{alg:fsp-naive}
\begin{algorithmic}[1]
\REQUIRE initial state $x_0$, time range $[t_0,t_f]$, propensities $\{\alpha_k(x)\}$, stoichiometries $\{\nu_k\}$, step $\Delta t$
\STATE $t \gets t_0$, \; $J \gets \{x_0\}$, \; $\boldsymbol{p} \gets [1]$
\WHILE{$t < t_f$}
  \STATE \textbf{Expand:} add neighbors of $J$ reachable by one (or preset $r$) reactions
  \STATE \textbf{Assemble:} build $\mathbf{A}_{JJ}$ on $J$
  \STATE \textbf{Evolve:} $\boldsymbol{p} \gets \exp(\mathbf{A}_{JJ}\Delta t)\,\boldsymbol{p}$
  \STATE \textbf{Prune:} remove states with $p(x)$ below a heuristic threshold
  \STATE $t \gets \min\{t+\Delta t,\,t_f\}$
\ENDWHILE
\STATE \textbf{return} $(J,\boldsymbol{p})$ at $t_f$
\end{algorithmic}
\end{algorithm}

This \emph{sliding-window} approach provides a general template for adaptive FSP methods with dynamically evolving state spaces \cite{Dinh2016}. Many variants differ in how expansion and pruning are performed, but they all follow the same broad framework, which is to enlarge the state space to capture new probability flow, evolve the distribution on the truncated generator, and prune regions that have become irrelevant. 
The remainder of this paper focuses on refining and analyzing the pruning and time step determination steps within this general adaptive FSP skeleton.

\begin{remark}[Stiffness and multiscale challenges in adaptive FSP]
\label{rem:stiffness-challenges}
Adaptive FSP methods must grapple with stiffness that arises when the generator $\mathbf{A}$ exhibits a wide separation of timescales across the state space, a characteristic of many biochemical reaction networks. This stiffness manifests in two coupled ways. First, \textit{spatial stiffness} emerges through connectivity bottlenecks. As illustrated in Figure~\ref{fig:bottleneck}, a bridge state $s_b$ with negligible probability but large exit rate $w(s_b) := \sum_{k=1}^M \alpha_k(s_b)$ (defined formally in Definition~\ref{def:flux}) creates a fast pathway between slow metastable regions, and pruning $s_b$ based on probability alone severs this connection, rendering the generator reducible and trapping probability mass in isolated components. Second, stiffness also manifests itself temporally. When probability rapidly redistributes during transient phases, the total flux $\Phi_{\mathrm{total}}(t) = \sum_{\boldsymbol{x}} p(\boldsymbol{x},t) w(\boldsymbol{x})$ becomes large and requires small steps $\Delta t$ for stability, while metastable slow periods with low flux permit large steps for efficiency (Figure~\ref{fig:temporal-stiffness}). These manifestations are coupled through the exit rates $w(\boldsymbol{x})$, which simultaneously create spatial bottlenecks and drive temporal flux variations. Our objective is to design pruning and time-stepping heuristics that jointly address both challenges, preserving network connectivity through flux-aware state selection while adapting $\Delta t$ to track the instantaneous activity level $\Phi_{\mathrm{total}}(t)$, thereby maintaining both structural integrity and computational tractability for robust multiscale simulation.

% Requires: \usepackage{subcaption}
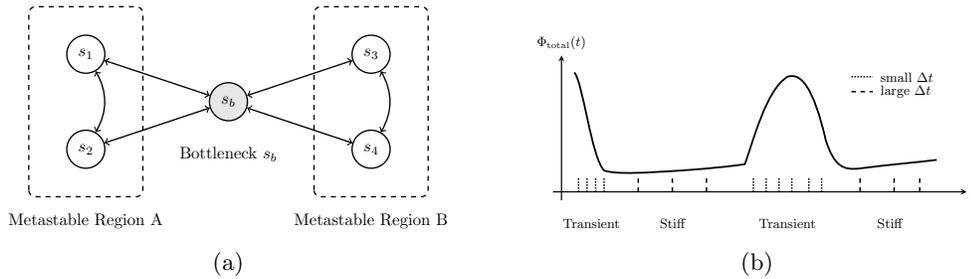
\begin{figure}[ht!]
\centering

\begin{subfigure}[b]{0.46\textwidth}
\centering
\resizebox{\linewidth}{!}{%
\begin{tikzpicture}[node distance=2.2cm, auto, thick,
  state/.style={circle, draw, minimum size=0.8cm}]
  \node[state] (A1) at (0,1) {$s_1$};
  \node[state] (A2) at (0,-1) {$s_2$};
  \node[state, fill=gray!20] (B) at (3,0) {$s_b$};
  \node[state] (C1) at (6,1) {$s_3$};
  \node[state] (C2) at (6,-1) {$s_4$};

  % Internal metastable transitions (low flux: small w)
  \path[<->, thin] (A1) edge[bend left] (A2);
  \path[<->, thin] (C1) edge[bend left] (C2);
  % Bottleneck transitions (high flux: w(s_b) large despite p(s_b) small)
  \path[<->, line width=1.8pt] (A1) edge
    node[above, sloped, font=\scriptsize] {$\Phi(s_b)=p(s_b)w(s_b)$} (B);
  \path[<->, line width=1.8pt] (A2) edge (B);
  \path[<->, line width=1.8pt] (C1) edge (B);
  \path[<->, line width=1.8pt] (C2) edge (B);

  \draw[dashed, rounded corners] (-1.2,-2) rectangle (1.2,2);
  \draw[dashed, rounded corners] (4.8,-2) rectangle (7.2,2);

  \node at (0, -2.5) {Metastable Region A};
  \node at (6, -2.5) {Metastable Region B};
  \node[font=\scriptsize] at (3, -0.85) {$p(s_b)\!\ll\!1,\;w(s_b)\!\gg\!0$};
\end{tikzpicture}%
}
\caption{}
\label{fig:bottleneck}
\end{subfigure}
\hfill
\begin{subfigure}[b]{0.46\textwidth}
\centering
\resizebox{\linewidth}{!}{%
\begin{tikzpicture}[thick]
  % Axes
  \draw[->, >=stealth, thick] (-0.2,0) -- (9.5,0) node[right, font=\small] {$t$};
  \draw[->, >=stealth, thick] (0,-0.2) -- (0,3.2) node[above, font=\small] {$\Phi_{\text{total}}(t)$};
  % Curve with steeper transients and gently varying stiff regions
  \draw[line width=1.2pt]
    (0.3,2.8) .. controls (0.5,2.6) and (0.7,0.8) .. (1.0,0.5)
    .. controls (1.3,0.4) and (2.0,0.45) .. (2.8,0.5)
    .. controls (3.5,0.55) and (4.0,0.6) .. (4.3,0.65)
    .. controls (4.5,1.2) and (4.8,2.4) .. (5.3,2.7)
    .. controls (5.6,2.8) and (5.9,2.5) .. (6.2,1.2)
    .. controls (6.4,0.6) and (6.6,0.5) .. (7.0,0.55)
    .. controls (7.8,0.65) and (8.4,0.7) .. (8.8,0.75);
  % Small Δt ticks (transient regions)
  \foreach \x in {0.4,0.6,0.8,1.0,4.5,4.8,5.1,5.4,5.8,6.1}
    \draw[densely dotted, line width=1pt] (\x,0) -- (\x,0.35);
  % Large Δt ticks (stiff regions)
  \foreach \x in {1.8,2.6,3.4,7.0,7.8,8.4}
    \draw[dashed, line width=1pt] (\x,0) -- (\x,0.35);
  % Region labels
  \node[font=\small] at (0.7,-0.75) {Transient};
  \node[font=\small] at (2.6,-0.75) {Stiff};
  \node[font=\small] at (5.3,-0.75) {Transient};
  \node[font=\small] at (7.7,-0.75) {Stiff};
  % Legend
  \begin{scope}[shift={(6.8,2.7)}]
    \draw[densely dotted, line width=1pt] (0,0) -- (0.5,0);
    \node[right, font=\small] at (0.55,0) {small $\Delta t$};
    \draw[dashed, line width=1pt] (0,-0.35) -- (0.5,-0.35);
    \node[right, font=\small] at (0.55,-0.35) {large $\Delta t$};
  \end{scope}
\end{tikzpicture}%
}
\caption{}
\label{fig:temporal-stiffness}
\end{subfigure}
\caption{Illustration of stiffness in both spatial and temporal domains. (\subref{fig:bottleneck}) shows a bottleneck state $s_b$ that connects metastable regions, while (\subref{fig:temporal-stiffness}) shows alternating transient and stiff regimes reflected in the total flux $\Phi_{\text{total}}(t)$.}
\label{fig:stiffness-combined}
\end{figure}
\end{remark}

\section{Methodology: Flux-Based Adaptive FSP}
\label{sec:methodology}

The adaptive FSP involves three coupled decisions at each step: which states to add (\emph{expansion}), how far to advance in time (\emph{time stepping}), and which states to remove (\emph{pruning}).  Each decision is governed by a distinct quantity.  Expansion adds neighboring states to $J$, driving the boundary outflux $\Phi_{\mathrm{out}}$ (the rate probability escapes the active set) toward zero.  Pruning removes low-probability states while protecting those with significant flux; the Stage~1 quantile guarantees the pruned mass is at most $\alpha$, bounding the boundary outflux to $\alpha\,w_{\max}(J')\,\Delta t$, while the Stage~2 flux criterion ($p(\boldsymbol{x})w(\boldsymbol{x}) \geq \varepsilon_{\mathrm{flux}}\,\Phi_{\mathrm{total}}$) ensures high-rate connector states are retained, keeping $w_{\max}(J')$ small.  Time stepping limits $\Phi_{\mathrm{total}}\,\Delta t \leq \varepsilon_{\Delta t}$, controlling the fraction of mass undergoing transitions per step.  Together these yield a global error accumulation of $2N\alpha + N(\bar{\Phi}_{\mathrm{out}}\,\Delta t + \varepsilon_{\mathrm{ODE}})$ over $N$ steps (Corollary~\ref{cor:global-error}, Section~\ref{sec:theory}), where the outflux term is driven to near zero by expansion, leaving $2\alpha$ as the dominant per-step cost.  This section develops each component in turn.

\begin{remark}[Total flux versus boundary outflux as the protection criterion]
\label{rem:total-flux-criterion}
One might ask why Algorithm~\ref{alg:flux-prune} uses total flux $\Phi(\boldsymbol{x},t) = p(\boldsymbol{x},t)\,w(\boldsymbol{x})$ rather than boundary outflux $\Phi_\partial(\boldsymbol{x},t) := p(\boldsymbol{x},t)\sum_{\boldsymbol{y}\in J'}\mathbf{A}_{\boldsymbol{y}\boldsymbol{x}}$ to decide which states to protect.
When a bottleneck state $s_b$ first enters $J$, its destination states are not yet expanded into $J$, so $\Phi_\partial(s_b,t)$ is large and boundary-outflux protection would retain it.
However, once the destination regions are brought into $J$ by expansion (they have high probability and survive the quantile filter), the boundary outflux of $s_b$ drops to approximately zero, since its transitions now lead entirely within $J$.
At that mature stage, boundary-outflux protection would fail to retain $s_b$ despite its structural importance as a connector.
Total flux $p(s_b)\,w(s_b)$ is large throughout both stages because $w(s_b)$ is intrinsically large, reflecting the high rate at which probability cycles \emph{through} $s_b$ regardless of whether the destinations are in $J$ or $J'$.
An additional practical advantage is that total flux $p(\boldsymbol{x})\cdot(-\mathbf{A}_{\boldsymbol{x}\boldsymbol{x}})$ is a byproduct of reading the generator diagonal, requiring no knowledge of which states will be pruned.
\end{remark}

\begin{definition}[Outgoing probability flux]
\label{def:flux}
Let $w(\boldsymbol{x}) := \sum_{k=1}^M \alpha_k(\boldsymbol{x}) = -\mathbf{A}_{\boldsymbol{x},\boldsymbol{x}}$ denote the total exit rate from state $\boldsymbol{x}$. The \textit{outgoing probability flux} from state $\boldsymbol{x}$ at time $t$ is
\begin{equation}
\Phi(\boldsymbol{x},t) := p(\boldsymbol{x},t) \cdot w(\boldsymbol{x}).
\end{equation}
\end{definition}

The flux $\Phi(\boldsymbol{x},t)$ measures the instantaneous rate at which probability mass exits state $\boldsymbol{x}$. For small $\Delta t > 0$, approximately $\Phi(\boldsymbol{x},t) \cdot \Delta t$ units of probability mass leave $\boldsymbol{x}$ during the interval $[t, t+\Delta t]$. Crucially, states with large exit rates $w(\boldsymbol{x})$ but low probabilities $p(\boldsymbol{x},t)$ can maintain significant flux, acting as critical connections for probability flow between regions of state space despite being rarely occupied.

\begin{definition}[Total system flux]
The total probability flux of the system at time $t$ over state set $J$ is
\begin{equation}
\Phi_{\mathrm{total}}(J,t) := \sum_{\boldsymbol{x} \in J} \Phi(\boldsymbol{x},t) = \sum_{\boldsymbol{x} \in J} p(\boldsymbol{x},t) \cdot w(\boldsymbol{x}).
\end{equation}
\end{definition}

\begin{proposition}[Total flux properties]
\label{prop:total-flux}
For the full state space $\mathcal{X}$ and any time $t \geq 0$, the total system flux satisfies
\begin{equation}
\Phi_{\mathrm{total}}(\mathcal{X},t) = \sum_{\boldsymbol{x} \in \mathcal{X}} p(\boldsymbol{x},t) \cdot w(\boldsymbol{x}) = -\boldsymbol{p}(t)^\top \mathrm{diag}(\mathbf{A}).
\end{equation}
Since $\sum_{\boldsymbol{x}} p(\boldsymbol{x},t) = 1$ for all $t$, the total flux represents a weighted average of exit rates across the state space. For reaction networks with mass-action or similar kinetics where exit rates grow at most polynomially with state values, $\Phi_{\mathrm{total}}(\mathcal{X},t)$ remains bounded on compact time intervals and tracks the instantaneous activity level of the system.
\end{proposition}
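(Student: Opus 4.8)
The plan is to treat the three assertions of the statement in turn: the generator identity, the weighted-average interpretation, and the boundedness claim, the first two being essentially bookkeeping and the last carrying the real content. I would begin with the identity, which is just a rewriting of Definition~\ref{def:flux}. Since $w(\boldsymbol{x}) = -\mathbf{A}_{\boldsymbol{x},\boldsymbol{x}}$ by construction, reading $\diag(\mathbf{A})$ as the vector of diagonal entries $(\mathbf{A}_{\boldsymbol{x},\boldsymbol{x}})_{\boldsymbol{x}}$ yields
\begin{equation}
\Phi_{\mathrm{total}}(\mathcal{X},t) = \sum_{\boldsymbol{x}\in\mathcal{X}} p(\boldsymbol{x},t)\,w(\boldsymbol{x}) = -\sum_{\boldsymbol{x}\in\mathcal{X}} p(\boldsymbol{x},t)\,\mathbf{A}_{\boldsymbol{x},\boldsymbol{x}} = -\boldsymbol{p}(t)^\top \diag(\mathbf{A}).
\end{equation}
The only subtlety is that the sum be well defined over an infinite state space; since every summand is nonnegative the sum is unambiguous (possibly $+\infty$), and its finiteness is precisely the third assertion.

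For the weighted-average statement I would invoke conservation of probability. Differentiating $\mathbf{1}^\top\boldsymbol{p}(t)$ along the CME and using the column-sum-zero property $\mathbf{1}^\top\mathbf{A}=0^\top$ gives $\frac{d}{dt}\mathbf{1}^\top\boldsymbol{p}(t)=\mathbf{1}^\top\mathbf{A}\,\boldsymbol{p}(t)=0$, so $\mathbf{1}^\top\boldsymbol{p}(t)=\mathbf{1}^\top\boldsymbol{p}(0)=1$ for all $t\ge 0$; together with $p(\boldsymbol{x},t)\ge 0$ this makes $\boldsymbol{x}\mapsto p(\boldsymbol{x},t)$ a genuine probability distribution, so $\Phi_{\mathrm{total}}(\mathcal{X},t)=\mathbb{E}_{p(\cdot,t)}[w]$ is literally the expectation of the exit rate. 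I would then reduce boundedness to a moment bound: if $w(\boldsymbol{x})\le C(1+|\boldsymbol{x}|)^d$ (mass-action propensities are polynomials in the copy numbers, with $d$ the maximal reaction order), then $\Phi_{\mathrm{total}}(\mathcal{X},t)\le C\,\mathbb{E}[(1+|X_t|)^d]$, so it suffices to control the $d$-th moment on each compact $[0,T]$. The natural tool is a Foster--Lyapunov drift argument: take $V(\boldsymbol{x})=(1+|\boldsymbol{x}|)^m$ with $m\ge d$, use that the stoichiometric jumps $\boldsymbol{\nu}_k$ are uniformly bounded, compute the generator action $\mathcal{L}V(\boldsymbol{x})=\sum_k\alpha_k(\boldsymbol{x})\,[V(\boldsymbol{x}+\boldsymbol{\nu}_k)-V(\boldsymbol{x})]$, establish a drift inequality $\mathcal{L}V\le c_1 V + c_2$, and combine Dynkin's formula with Gronwall's inequality to get $\mathbb{E}[V(X_t)]\le (\mathbb{E}[V(X_0)]+c_2 t)\,e^{c_1 t}$, finite on $[0,T]$ whenever $V(X_0)$ has finite expectation.

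The hard part is exactly the drift inequality $\mathcal{L}V\le c_1 V + c_2$, and polynomial growth of $w$ alone does not suffice for it. A Taylor expansion of $V(\boldsymbol{x}+\boldsymbol{\nu}_k)-V(\boldsymbol{x})$ over the bounded displacement $\boldsymbol{\nu}_k$ loses one power and gives only $\mathcal{L}V\le C'(1+|\boldsymbol{x}|)^{m+d-1}$, which already exceeds $V$ when $d\ge 2$ (e.g. for bimolecular production), so a crude estimate fails and moments can in principle explode in finite time. The resolution is structural: one must exploit the sign of the top-degree term of $\mathcal{L}V$, showing that population-lowering reactions dominate at large $|\boldsymbol{x}|$ so the leading coefficient is nonpositive, or equivalently assume the network satisfies a standard non-explosivity/Lyapunov condition. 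I would therefore phrase the boundedness claim under such a condition, which holds for the benchmark networks treated later, and cite the corresponding moment-bound result rather than assert it for arbitrary polynomial kinetics.
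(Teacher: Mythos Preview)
Your treatment of the identity matches the paper's proof exactly: both simply substitute $w(\boldsymbol{x})=-\mathbf{A}_{\boldsymbol{x},\boldsymbol{x}}$ into the definition and rewrite the sum as $-\boldsymbol{p}(t)^\top\diag(\mathbf{A})$. That is all the paper actually proves; its proof ends after this line with the remark that the diagonal entries depend only on the network structure, so the total flux ``evolves with the probability distribution.'' The paper does not separately derive probability conservation, and it does not attempt any argument for the boundedness claim at all.

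Your proposal therefore goes well beyond the paper on the second and third assertions. Deriving $\mathbf{1}^\top\boldsymbol{p}(t)=1$ from the column-sum property and framing $\Phi_{\mathrm{total}}$ as $\mathbb{E}_{p(\cdot,t)}[w]$ is a genuine strengthening over the paper's bare statement. More importantly, your Foster--Lyapunov analysis of the boundedness claim is substantially more careful than anything in the paper, and your diagnosis is correct: polynomial growth of $w$ alone does \emph{not} guarantee bounded moments on compact intervals, since mass-action networks with net-producing bimolecular or higher reactions (e.g.\ $2X\to 3X$) can explode in finite time. The paper's proposition, read literally, overstates what is true, and its proof simply sidesteps the issue. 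Your choice to make the non-explosivity/Lyapunov hypothesis explicit and to cite a standard moment-bound result is the right way to repair the statement; this is a case where your proposal is more rigorous than the source.
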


\begin{proof}
By definition of the generator matrix,
\begin{align}
\Phi_{\mathrm{total}}(\mathcal{X},t) &= \sum_{\boldsymbol{x} \in \mathcal{X}} p(\boldsymbol{x},t) \cdot (-\mathbf{A}_{\boldsymbol{x},\boldsymbol{x}}) = -\boldsymbol{p}(t)^\top \mathrm{diag}(\mathbf{A}).
\end{align}
Since the diagonal entries $\mathbf{A}_{\boldsymbol{x},\boldsymbol{x}} = -\sum_{k=1}^M \alpha_k(\boldsymbol{x})$ depend only on the reaction network structure and state $\boldsymbol{x}$, the total flux evolves with the probability distribution. 
\end{proof}

\subsection{Flux and Network Connectivity}
\label{sec:flux-connectivity}

Having defined flux, we briefly review how it connects to the structural properties governing long-time behavior. This connection motivates our pruning rule, though the error bounds in Section~\ref{sec:theory} will take a more direct approach based on system fluxes. A generator $\mathbf{A}$ on a finite state space $J$ is \textit{irreducible} if the corresponding state-space graph is strongly connected. Equivalently (with our column convention, where $A_{x,y}>0$ is a transition $y\!\to\!x$), there is no nontrivial partition $J=J_1\cup J_2$ with $J_1\cap J_2=\emptyset$ such that $A_{J_1,J_2}=0$ (no transitions from $J_2$ into $J_1$). Irreducibility ensures that the chain can reach any state from any other in finite expected time and implies a unique stationary distribution $\boldsymbol{\pi}\gg 0$ for the finite continuous time Markov chains. The spectral gap
\[
\lambda := \min\{\operatorname{Re}(-\lambda_k(\mathbf{A})):\lambda_k(\mathbf{A})\neq 0\} > 0
\]
governs exponential convergence to stationarity (e.g., in $\pi$-weighted norms), i.e., $\|\boldsymbol{p}(t)-\boldsymbol{\pi}\|\le C e^{-\lambda t}$. The connection between flux and connectivity is formalized through Cheeger's inequality for continuous-time Markov chains \cite{Diaconis1991,63529}. Define the \textit{conductance} of a subset $S \subset J$ as
\begin{equation}
\Psi(S) := \frac{\sum_{\boldsymbol{x} \in S}\sum_{\boldsymbol{y} \in J \setminus S} \pi(\boldsymbol{x}) \mathbf{A}_{\boldsymbol{y},\boldsymbol{x}}}{\min\{\pi(S), \pi(J\setminus S)\}},
\label{eq:cheeger}
\end{equation}
where $\pi(S) := \sum_{\boldsymbol{x} \in S} \pi(\boldsymbol{x})$. Let $\Psi_* := \min_{S: 0 < \pi(S) \leq 1/2} \Psi(S)$ denote the minimum conductance, which quantifies the flux across the most restrictive bottleneck. Cheeger's inequality establishes the two-sided bound
\begin{equation}
\label{eq:cheeger1}
\frac{\Psi_*^2}{2} \leq \lambda \leq 2\Psi_*,
\end{equation}
linking the spectral gap to the minimum conductance.

\begin{figure}[H]
\centering
\includegraphics[width=0.72\linewidth]{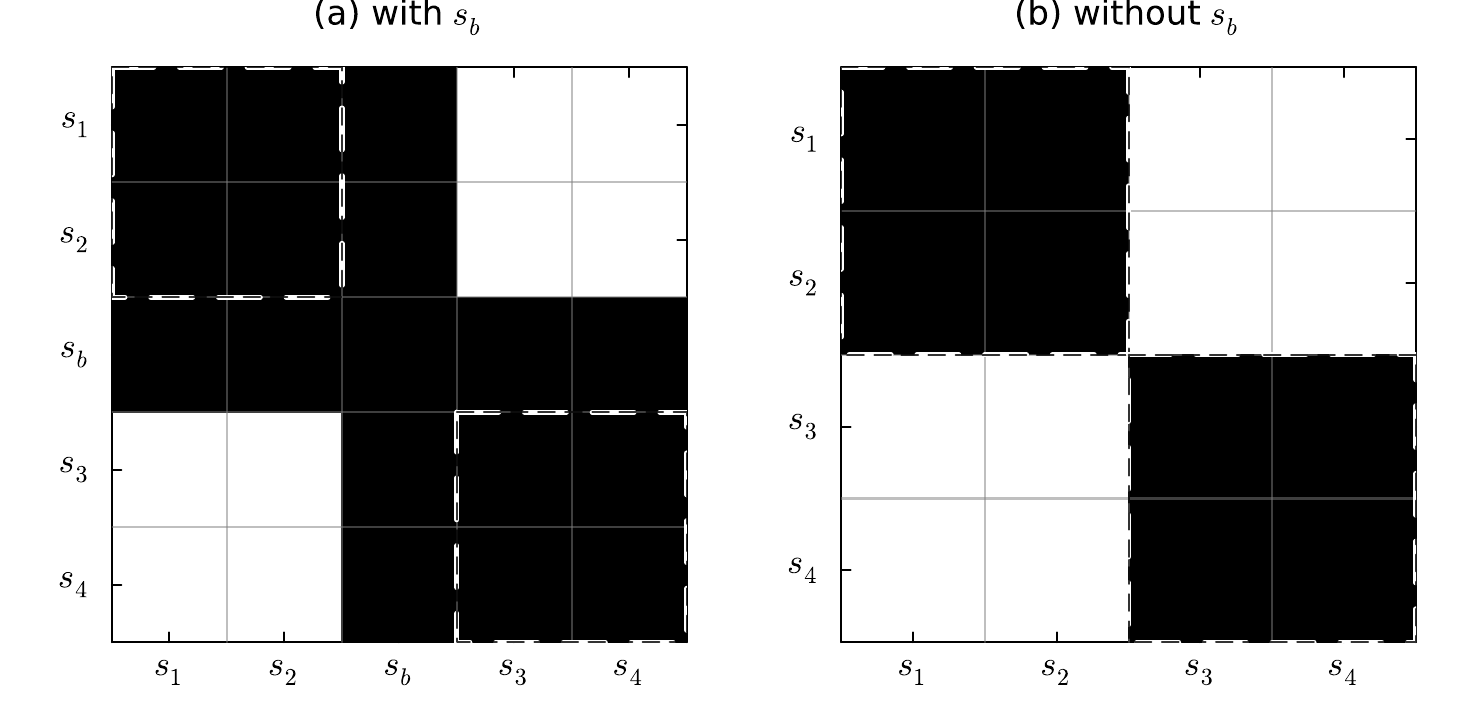}
\caption{Sparsity pattern of the generator $\mathbf{A}$ for the bottleneck network (Figure~\ref{fig:bottleneck}), ordered as $[s_1,s_2,s_b,s_3,s_4]$. Black cells are nonzero entries; white cells are zero. Dashed boxes mark the two metastable regions. \textbf{(a)} With $s_b$: the $s_b$ row and column (index 3) bridge both regions, making $\mathbf{A}$ irreducible. \textbf{(b)} Without $s_b$: the generator is exactly block-diagonal; the two metastable regions are disconnected and probability cannot flow between them.}
\label{fig:generator-sparsity}
\end{figure}

Our pruning rule protects states with large instantaneous exit activity $\Phi(\boldsymbol{x},t)=p(\boldsymbol{x},t)\,w(\boldsymbol{x})$. The connection to cut conductance is through the bound: for any $S\subset J$ and $\boldsymbol{x}\in S$,
\begin{equation}
\label{eq:flux-cut-bound}
\sum_{\boldsymbol{y}\notin S}\pi(\boldsymbol{x})\,\mathbf{A}_{\boldsymbol{y},\boldsymbol{x}} \;\le\; \pi(\boldsymbol{x})\,w(\boldsymbol{x}),
\end{equation}
so the stationary exit activity $\pi(\boldsymbol{x})w(\boldsymbol{x})$ upper-bounds the contribution of $\boldsymbol{x}$ to any cut flux. Thus, states with small probability but large $w(\boldsymbol{x})$ may contribute disproportionately to cut numerators. Pruning such states can reduce the flux across critical cuts and thereby degrade conductance, particularly when they act as bridges between metastable regions. In the extreme case where all cross-region paths are severed, $\Psi_*=0$ and irreducibility is lost entirely. This is the mechanism illustrated in Figure~\ref{fig:generator-sparsity}: $s_b$ has small occupancy yet large $w(s_b)$, so removing it collapses the cut flux to zero and disconnects the generator. These observations motivate the flux metric as a proxy for conductance-relevant activity. States with small $p(\boldsymbol{x},t)$ but large $w(\boldsymbol{x})$ have the capacity to sustain large cut flux, and protecting them is expected to reduce spectral degradation under truncation.

\subsubsection{Flux-Aware Truncation Algorithm}

We propose a two-stage pruning approach that combines probability-based candidate selection with flux-based protection. The algorithm identifies states for potential removal using a quantile threshold, then protects low-probability states that carry significant probability flow.

\begin{algorithm}[H]
\caption{Flux-Preserving Pruning}
\label{alg:flux-prune}
\begin{algorithmic}[1]
\REQUIRE Active set $\mathcal{S}$, probability vector $\boldsymbol{p}$, generator $\mathbf{A}$, quantile tolerance $\alpha \in (0,1)$, flux tolerance $\varepsilon_{\mathrm{flux}} > 0$
\ENSURE Updated set $\mathcal{S}_{\mathrm{new}}$ and renormalized probability $\boldsymbol{p}_{\mathrm{new}}$
\STATE
\STATE \textit{Stage 1: Candidate Selection}
\STATE Sort states $\boldsymbol{x} \in \mathcal{S}$ in ascending order of $p(\boldsymbol{x}, t)$
\STATE Define candidate set $\mathcal{C} \subseteq \mathcal{S}$ as the minimal set satisfying
\begin{equation}
\sum_{\boldsymbol{x} \in \mathcal{C}} p(\boldsymbol{x}, t) \leq \alpha \quad \text{and} \quad p(\boldsymbol{x}, t) \leq p(\boldsymbol{y}, t) \; \forall \boldsymbol{x} \in \mathcal{C}, \boldsymbol{y} \in \mathcal{S} \setminus \mathcal{C}
\end{equation}
\STATE \textit{Stage 2: Flux-Based Protection}
\FOR{each $\boldsymbol{x} \in \mathcal{S}$}
    \STATE Compute exit rate: $w(\boldsymbol{x}) = -\mathbf{A}_{\boldsymbol{x},\boldsymbol{x}}$
    \STATE Compute flux: $\Phi(\boldsymbol{x}, t) = p(\boldsymbol{x}, t) \cdot w(\boldsymbol{x})$
\ENDFOR
\STATE Compute total flux: $\Phi_{\mathrm{total}} = \sum_{\boldsymbol{x} \in \mathcal{S}} \Phi(\boldsymbol{x}, t)$
\STATE Set flux threshold: $\Phi_{\mathrm{thr}} = \varepsilon_{\mathrm{flux}} \cdot \Phi_{\mathrm{total}}$
\STATE Define protected set: $\mathcal{P} = \{\boldsymbol{x} \in \mathcal{C} : \Phi(\boldsymbol{x}, t) \geq \Phi_{\mathrm{thr}}\}$
\STATE Define pruning set: $\mathcal{S}_{\mathrm{prune}} = \mathcal{C} \setminus \mathcal{P}$
\STATE
\STATE \textit{Stage 3: Update and Renormalization}
\STATE $\mathcal{S}_{\mathrm{new}} \gets \mathcal{S} \setminus \mathcal{S}_{\mathrm{prune}}$
\STATE $\boldsymbol{p}_{\mathrm{new}} \gets \boldsymbol{p}|_{\mathcal{S}_{\mathrm{new}}} / \|\boldsymbol{p}|_{\mathcal{S}_{\mathrm{new}}}\|_1$ \quad (restrict and renormalize)
\RETURN $\mathcal{S}_{\mathrm{new}}, \boldsymbol{p}_{\mathrm{new}}$
\end{algorithmic}
\end{algorithm}

Stage 1 identifies low-probability candidate states for removal based on the quantile threshold $\alpha$. Without this candidate selection, the method would need to evaluate flux for every state at every pruning step, incurring unnecessary computational cost when most high-probability states are never at risk of removal. The quantile filter focuses computational effort on the tail of the distribution where pruning decisions are nontrivial. Stage 2 protects states in $\mathcal{C}$ that carry significant flux relative to total system flux. Consider a bottleneck state $s_b$ connecting metastable regions as seen in Figure~\ref{fig:bottleneck}. Such states transmit substantial probability flow between regions despite low occupancy. Removing them degrades conductance, shrinks the spectral gap, and harms mixing. The flux tolerance $\varepsilon_{\mathrm{flux}}$ controls the trade-off between compression and connectivity preservation. Stage 3 removes unprotected candidates and renormalizes the probability distribution to maintain unit total mass.

\subsection{Adaptive Time Stepping}

Stochastic reaction networks commonly exhibit \emph{temporal stiffness}. The system passes through phases with widely separated characteristic timescales, and the appropriate time step differs by orders of magnitude across these phases.  During a slow transient, when probability is concentrated in low-propensity states, large steps can be taken safely without missing important dynamics.  During a stiff or burst phase, when reactions fire rapidly and probability redistributes quickly, small steps are required to maintain accuracy.  A fixed step size must be calibrated to the worst case and is therefore grossly conservative during slow phases, leading to unnecessary computation.

A natural measure of the current system activity is $\Phi_{\mathrm{total}}(J_n,t_n) = \sum_{\boldsymbol{x}\in J_n}p(\boldsymbol{x},t_n)\,w(\boldsymbol{x})$, the probability-weighted mean propensity.  $\Phi_{\mathrm{total}}$ tracks the actual rate of probability movement under the current distribution. It is small when probability sits in slow states and large when probability occupies fast states.  The adaptive step $\Delta t_n = \varepsilon_{\Delta t}/\Phi_{\mathrm{total}}$ therefore contracts during stiff phases and expands during slow transients automatically, matching the step size to the active timescale at each iteration.

\begin{proposition}[Adaptive time step rule]
\label{prop:flux-timestep}
The step size
\begin{equation}
\label{eq:flux-timestep}
\Delta t_n = \frac{\varepsilon_{\Delta t}}{\Phi_{\mathrm{total}}(J_n,t_n)}, \qquad \Phi_{\mathrm{total}}(J_n,t_n) := \sum_{\boldsymbol{x}\in J_n} p(\boldsymbol{x},t_n)\,w(\boldsymbol{x}),
\end{equation}
limits the leading-order fraction of probability mass redistributed within the active set during one step to at most $\varepsilon_{\Delta t} + O(\Delta t_n^2)$, and automatically shortens steps during fast or stiff phases while lengthening them during slow transients.
\end{proposition}
\begin{proof}
Let $\widehat{\boldsymbol{p}}(t)$ denote the compressed dynamics on $J_n$, so that $\widehat{\boldsymbol{p}}(t_n+\Delta t_n) = e^{\widetilde{\mathbf{A}}_{JJ}\Delta t_n}\widetilde{\boldsymbol{p}}_n$. Expanding to first order gives
\[
\widehat{\boldsymbol{p}}(t_n+\Delta t_n) - \widetilde{\boldsymbol{p}}_n = \widetilde{\mathbf{A}}_{JJ}\widetilde{\boldsymbol{p}}_n\,\Delta t_n + O(\Delta t_n^2).
\]
Define the retained-state exit rate
\[
\lambda_J(\boldsymbol{x}) := -\widetilde{A}_{\boldsymbol{x}\boldsymbol{x}} = \sum_{\substack{\boldsymbol{y}\in J_n\\ \boldsymbol{y}\neq \boldsymbol{x}}} \widetilde{A}_{\boldsymbol{y}\boldsymbol{x}},
\]
which satisfies $\lambda_J(\boldsymbol{x}) \leq w(\boldsymbol{x})$ because transitions from $\boldsymbol{x}$ to states outside $J_n$ are removed in the compressed generator. Using the non-negative off-diagonal entries of $\widetilde{\mathbf{A}}_{JJ}$ and the triangle inequality,
\begin{align*}
\|\widetilde{\mathbf{A}}_{JJ}\widetilde{\boldsymbol{p}}_n\|_1
&\leq \sum_{\boldsymbol{x}\in J_n}\lambda_J(\boldsymbol{x})\,\widetilde{p}_n(\boldsymbol{x})
+ \sum_{\boldsymbol{x}\in J_n}\sum_{\substack{\boldsymbol{y}\in J_n\\ \boldsymbol{y}\neq \boldsymbol{x}}}\widetilde{A}_{\boldsymbol{x}\boldsymbol{y}}\,\widetilde{p}_n(\boldsymbol{y}) \\
&= 2\sum_{\boldsymbol{x}\in J_n}\lambda_J(\boldsymbol{x})\,\widetilde{p}_n(\boldsymbol{x}) \\
&\leq 2\sum_{\boldsymbol{x}\in J_n}w(\boldsymbol{x})\,\widetilde{p}_n(\boldsymbol{x}) \\
&= 2\,\Phi_{\mathrm{total}}(J_n,t_n).
\end{align*}
Therefore
\[
\|\widehat{\boldsymbol{p}}(t_n+\Delta t_n) - \widetilde{\boldsymbol{p}}_n\|_1
\leq 2\,\Phi_{\mathrm{total}}(J_n,t_n)\,\Delta t_n + O(\Delta t_n^2),
\]
so the total variation distance, and hence the fraction of probability mass redistributed within the active set during the step, is bounded by
\[
\tfrac{1}{2}\|\widehat{\boldsymbol{p}}(t_n+\Delta t_n) - \widetilde{\boldsymbol{p}}_n\|_1
\leq \Phi_{\mathrm{total}}(J_n,t_n)\,\Delta t_n + O(\Delta t_n^2).
\]
The choice $\Delta t_n = \varepsilon_{\Delta t}/\Phi_{\mathrm{total}}$ therefore bounds this fraction by $\varepsilon_{\Delta t} + O(\Delta t_n^2)$. The adaptive property follows because $\Phi_{\mathrm{total}} = \mathbb{E}_p[w(\boldsymbol{x})]$ tracks the active timescale.
\begin{itemize}
\item \emph{Slow or stiff phase.} Probability is concentrated in low-propensity states, so $\Phi_{\mathrm{total}}$ is small and $\Delta t_n$ is large.
\item \emph{Fast or transient phase.} Probability shifts into high-propensity states, $\Phi_{\mathrm{total}}$ grows, and $\Delta t_n$ shrinks automatically without explicit phase detection.
\end{itemize}
\end{proof}

In practice, $\Phi_{\mathrm{total}}$ is computed as a byproduct of the flux computation in Algorithm~\ref{alg:flux-prune}, so the step size $\Delta t_n$ is available at no additional cost before each matrix exponential evaluation.

\subsection{Master Equation Matrix Construction}
\label{sec:matrix_construction}

The generator matrix $\mathbf{A}(t)$ must be constructed at each adaptive time step as the state space evolves. We employ a forward enumeration approach that exploits the sparse reaction connectivity structure to build the matrix efficiently. For a given state set $\mathcal{S}$ with $n = |\mathcal{S}|$ states and $R$ reaction channels, we construct $\mathbf{A}$ column by column. For each state $x \in \mathcal{S}$ and reaction $k$ with stoichiometry $\boldsymbol{\nu}_k$, we compute the destination state $y = x + \boldsymbol{\nu}_k$. If $y \in \mathcal{S}$, we evaluate the propensity $\alpha_k(x)$ and place it at position $(i_y, i_x)$ in the matrix, where $i_y$ and $i_x$ are the indices of states $y$ and $x$ in $\mathcal{S}$. We accumulate these off-diagonal entries to compute the column sum, then set the diagonal entry to the negative of this sum to ensure the generator property (zero column sums).

\begin{algorithm}[H]
\caption{Forward Enumeration Matrix Construction}
\label{alg:matrix_construction}
\begin{algorithmic}[1]
\REQUIRE State set $\mathcal{S}$, reaction model with stoichiometries $\{\boldsymbol{\nu}_k\}$ and propensities $\{\alpha_k\}$
\ENSURE Generator matrix $\mathbf{A}$
\STATE Initialize sparse matrix arrays: $I \leftarrow []$, $J \leftarrow []$, $V \leftarrow []$
\STATE Create index mapping: $\texttt{state\_id}: \mathcal{S} \rightarrow \{1, \ldots, |\mathcal{S}|\}$
\FOR{each state $x \in \mathcal{S}$ with index $j$}
    \STATE $\text{col\_sum} \leftarrow 0$
    \FOR{each reaction $k$ with stoichiometry $\boldsymbol{\nu}_k$}
        \STATE $y \leftarrow x + \boldsymbol{\nu}_k$
        \IF{$y \in \mathcal{S}$}
            \STATE $\alpha \leftarrow \texttt{propensity}_k(x)$
            \IF{$\alpha > 0$}
                \STATE $i \leftarrow \texttt{state\_id}[y]$
                \STATE Append $(i, j, \alpha)$ to sparse arrays
                \STATE $\text{col\_sum} \leftarrow \text{col\_sum} + \alpha$
            \ENDIF
        \ENDIF
    \ENDFOR
    \IF{$\text{col\_sum} > 0$}
        \STATE Append $(j, j, -\text{col\_sum})$ to sparse arrays \COMMENT{Diagonal entry}
    \ENDIF
\ENDFOR
\STATE $\mathbf{A} \leftarrow \texttt{sparse}(I, J, V, |\mathcal{S}|, |\mathcal{S}|)$
\RETURN $\mathbf{A}$
\end{algorithmic}
\end{algorithm}

The computational complexity is $O(|\mathcal{S}| \cdot R)$ for iterating through states and reactions. This exploits the sparse structure of the reaction graph as each state connects to at most $R$ other states through direct chemical transformations, avoiding the $O(|\mathcal{S}|^2)$ cost of checking all possible state pairs. For typical chemical reaction networks with $R \ll |\mathcal{S}|$, this forward enumeration approach yields substantial performance improvements over standard backward expansion methods.

\subsection{Full Algorithm}
\label{sec:full-algorithm}
This section presents the complete flux-adaptive FSP method integrating the components developed above including boundary expansion, forward enumeration matrix construction (Algorithm~\ref{alg:matrix_construction}), flux-based adaptive time stepping (Eq.~\eqref{eq:flux-timestep}), and connectivity-preserving pruning (Algorithm~\ref{alg:flux-prune}). The algorithm maintains a dynamically adapted state space $\mathcal{S}$ that expands to capture emerging probability mass and contracts via flux-aware pruning to control computational cost while preserving network connectivity.

\begin{algorithm}[H]
\caption{Flux-Adaptive FSP}
\label{alg:fsp-flux-adaptive}
\begin{algorithmic}[1]
\REQUIRE initial state $x_0$, time range $[t_0,t_f]$, propensities $\{\alpha_k(x)\}$, stoichiometries $\{\nu_k\}$, quantile tolerance $\alpha$, flux tolerance $\varepsilon_{\mathrm{flux}}$, time step tolerance $\varepsilon_{\Delta t}$
\STATE $t \gets t_0$, \; $\mathcal{S} \gets \{x_0\}$, \; $\boldsymbol{p} \gets [1]$, \; $\mathbf{A} \gets$ initial generator
\WHILE{$t < t_f$}
  \STATE \textbf{Expand:} $\mathcal{S}_{\text{new}} \gets \mathcal{S} \cup \partial \mathcal{S}$ (expand state space)
  \STATE \textbf{Construct:} $\mathbf{A}_{\text{new}} \gets$ forward enumeration (Algorithm~\ref{alg:matrix_construction})
  \STATE \textbf{Adapt step:} Compute $\Phi_{\mathrm{total}} = \sum_{x \in \mathcal{S}_{\text{new}}} p(x,t) \cdot w(x)$ and set $\Delta t = \varepsilon_{\Delta t}/\Phi_{\mathrm{total}}$ (Eq.~\eqref{eq:flux-timestep})
  \STATE \textbf{Evolve:} $\boldsymbol{p} \gets \exp(\mathbf{A}_{\text{new}}\Delta t)\,\boldsymbol{p}$
  \STATE \textbf{Prune:} $(\mathcal{S}_{\text{pruned}}, \boldsymbol{p}_{\text{pruned}}) \gets$ flux-preserving pruning (Algorithm~\ref{alg:flux-prune}) with parameters $\alpha, \varepsilon_{\mathrm{flux}}$
  \STATE $\mathcal{S} \gets \mathcal{S}_{\text{pruned}}$, \; $\boldsymbol{p} \gets \boldsymbol{p}_{\text{pruned}}$, \; $\mathbf{A} \gets \mathbf{A}_{\text{new}}|_{\mathcal{S}_{\text{pruned}}}$
  \STATE $t \gets \min\{t+\Delta t,\,t_f\}$
\ENDWHILE
\STATE \textbf{return} $(\mathcal{S},\boldsymbol{p})$ at $t_f$
\end{algorithmic}
\end{algorithm}

The computational bottleneck is the matrix exponential action $\exp(\mathbf{A}_{\text{new}}\Delta t)\boldsymbol{p}$ in the evolution step. We employ Krylov subspace methods via the EXPOKIT package \cite{Sidje1998}, which computes matrix exponential actions without forming the full exponential matrix. For a sparse generator $\mathbf{A}$ of dimension $n = |\mathcal{S}|$ with
$\mathrm{nnz}(\mathbf{A}) = O(n)$, EXPOKIT's Krylov--Arnoldi routine for
computing $e^{t\mathbf{A}} v$ has per-step cost
\[
  O\bigl(m\,\mathrm{nnz}(\mathbf{A}) + m^2 n\bigr) = O(m^2 n),
\]
where $m \ll n$ is the Krylov subspace dimension. In our setting, the
expansion and pruning operations scale as $O(nR)$, and constructing
$\mathbf{A}$ by forward enumeration also costs $O(nR)$, where $R$ is the
number of reactions. Since $m$ and $R$ are small constants independent of
$n$, the total per-step cost is effectively linear in the state-space size,
making simulations with $10^4$--$10^5$ states computationally tractable.

\section{Error Analysis}
\label{sec:theory}

We derive local and global error bounds for the adaptive FSP method based on probability fluxes across the truncation boundary. Let $\mathcal{X}$ denote the full state space and $J \subset \mathcal{X}$ be a finite active set. Define the complement $J' := \mathcal{X} \setminus J$ with $J \cap J' = \emptyset$ and $J \cup J' = \mathcal{X}$. The generator matrix admits a natural block decomposition:

\begin{equation}
\label{eq:block-partition}
\mathbf{A}=
\begin{pmatrix}
\mathbf{A}_{JJ} & \mathbf{A}_{JJ'}\\
\mathbf{A}_{J'J} & \mathbf{A}_{J'J'}
\end{pmatrix}, 
\qquad
\boldsymbol{p}(t)=
\begin{pmatrix}
\boldsymbol{p}_J(t)\\
\boldsymbol{p}_{J'}(t)
\end{pmatrix},
\end{equation}
Recall that under the column convention $\mathbf{A}_{ij} > 0$ represents a transition rate \emph{from} state $j$ \emph{to} state $i$. Accordingly, the block subscripts follow matrix index order (rows, columns): $\mathbf{A}_{JJ} \in \mathbb{R}^{|J| \times |J|}$ has rows and columns in $J$ and governs transitions within $J$; $\mathbf{A}_{J'J} \in \mathbb{R}^{|J'| \times |J|}$ has rows in $J'$ and columns in $J$, so its entries $(\mathbf{A}_{J'J})_{\boldsymbol{y}\boldsymbol{x}} = \mathbf{A}_{\boldsymbol{y}\boldsymbol{x}}$ for $\boldsymbol{y} \in J'$, $\boldsymbol{x} \in J$ describe transitions \emph{from} $J$ \emph{to} $J'$ (outflux); symmetrically, $\mathbf{A}_{JJ'} \in \mathbb{R}^{|J| \times |J'|}$ describes transitions \emph{from} $J'$ \emph{to} $J$ (influx). See Figure~\ref{fig:transitions}. The partitioned CME becomes:
\begin{equation}
\label{eq:partitioned-cme}
\frac{d}{dt}
\begin{pmatrix}
\boldsymbol{p}_J(t) \\
\boldsymbol{p}_{J'}(t)
\end{pmatrix}
=
\begin{pmatrix}
\mathbf{A}_{JJ} & \mathbf{A}_{JJ'}\\
\mathbf{A}_{J'J} & \mathbf{A}_{J'J'}
\end{pmatrix}
\begin{pmatrix}
\boldsymbol{p}_J(t) \\
\boldsymbol{p}_{J'}(t)
\end{pmatrix},
\end{equation}
where
\begin{equation}
\label{eq:row-sum-identities}
\mathbf{1}^\top \mathbf{A}_{JJ} = -\mathbf{1}^\top \mathbf{A}_{J'J}, \qquad \mathbf{1}^\top \mathbf{A}_{J'J'} = -\mathbf{1}^\top \mathbf{A}_{JJ'}.
\end{equation}
The column sum property of the generator $\mathbf{A}$ immediately yields the identities in Eq.~\eqref{eq:row-sum-identities}.
\begin{figure}[ht!]
    \centering
    \begin{tikzpicture}[node distance=3cm, auto, thick]
    \tikzstyle{state} = [circle, draw, minimum size=1cm, text centered]
    \tikzstyle{transition} = [->, thick]
    \node[state] (S) at (0, 0) {$J$};
    \node[state] (R) at (4, 0) {$J'$};
    \path[transition, loop left] (S) edge node[left] {$\mathbf{A}_{JJ}$} (S);
    \path[transition, loop right] (R) edge node[right] {$\mathbf{A}_{J'J'}$} (R);
    \path[transition] (S) edge[bend left=20] node[above] {$\mathbf{A}_{J'J}$} (R);
    \path[transition] (R) edge[bend left=20] node[below] {$\mathbf{A}_{JJ'}$} (S);
    \end{tikzpicture}
    \caption{Partitioned master equation transitions between the active set $J$ and its complement $J'$.}
    \label{fig:transitions}
\end{figure}
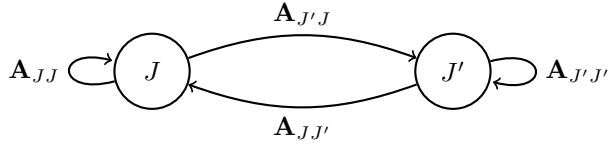

We define the boundary flux rates as
\begin{equation}
\label{eq:flux-defs}
\Phi_{\mathrm{out}}(t) := \mathbf{1}^\top \mathbf{A}_{J'J}\boldsymbol{p}_J(t), \qquad \Phi_{\mathrm{in}}(t) := \mathbf{1}^\top \mathbf{A}_{JJ'}\boldsymbol{p}_{J'}(t),
\end{equation}
where $\Phi_{\mathrm{out}}(t)$ is the total probability mass per unit time flowing from $J$ to $J'$ (the block $\mathbf{A}_{J'J}$ acts on $\boldsymbol{p}_J$ since its columns are indexed by $J$ and its entries give rates \emph{into} $J'$), and $\Phi_{\mathrm{in}}(t)$ is the mass flowing from $J'$ to $J$ (the block $\mathbf{A}_{JJ'}$ acts on $\boldsymbol{p}_{J'}$ since its columns are indexed by $J'$ and its entries give rates \emph{into} $J$). Both quantities are non-negative. The original FSP approach defines an absorbing sink state that collapses the complement of the truncated set. Following the block formulation, it is equivalent to the \textit{compressed generator} formed by adding the outgoing probability flux back to the diagonal elements:
\begin{equation}
\label{eq:compressed-generator}
\widetilde{\mathbf{A}}_{JJ} := \mathbf{A}_{JJ} + \diag(\mathbf{1}^\top \mathbf{A}_{J'J}),
\end{equation}
where $\operatorname{diag}(.)$ embeds a vector into the diagonal of a matrix. This operation ensures that any probability mass that would have left $J$ (as governed by $\mathbf{A}_{J'J}$) is instantaneously reflected back to its state of origin, creating a reflecting boundary condition.

\begin{proposition}[Compressed generator properties]
\label{prop:compressed-generator}
The compressed generator $\widetilde{\mathbf{A}}_{JJ}$ satisfies:
\begin{enumerate}
\item $\mathbf{1}^\top \widetilde{\mathbf{A}}_{JJ} = \mathbf{0}^\top$ (zero column sums, hence mass conservation),
\item $\widetilde{A}_{\boldsymbol{x}\boldsymbol{y}} \geq 0$ for $\boldsymbol{x} \neq \boldsymbol{y}$ (non-negative off-diagonals),
\item $\widetilde{A}_{\boldsymbol{x}\boldsymbol{x}} \leq 0$ for all $\boldsymbol{x}$ (non-positive diagonal),
\item $e^{t\widetilde{\mathbf{A}}_{JJ}} \geq 0$ and $\|e^{t\widetilde{\mathbf{A}}_{JJ}}\|_{1 \to 1} = 1$ for all $t \geq 0$ (generates a positive stochastic semigroup).
\end{enumerate}
\end{proposition}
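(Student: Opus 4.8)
The plan is to establish the four properties in sequence, deriving the sign and column-sum conditions (1)--(3) by direct bookkeeping and then bootstrapping them into the positivity and contraction statement (4). For property (1), the key observation is that $\mathbf{1}^\top\diag(\mathbf{v})=\mathbf{v}$ for any row vector $\mathbf{v}$, since pre-multiplying a diagonal matrix by the all-ones row simply reads off its diagonal. Applying this to the correction term in Eq.~\eqref{eq:compressed-generator} gives $\mathbf{1}^\top\widetilde{\mathbf{A}}_{JJ}=\mathbf{1}^\top\mathbf{A}_{JJ}+\mathbf{1}^\top\mathbf{A}_{J'J}$, which is $\mathbf{0}^\top$ by the first identity of Eq.~\eqref{eq:row-sum-identities}. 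This is the only place the generator's column-sum-zero structure enters the argument directly.

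Properties (2) and (3) reduce to entrywise accounting. Because the correction is diagonal, the off-diagonal entries of $\widetilde{\mathbf{A}}_{JJ}$ are unchanged from $\mathbf{A}_{JJ}$ and hence inherit nonnegativity from the full generator $\mathbf{A}$, giving (2). For (3) I would expand the diagonal using the column-sum-zero property of $\mathbf{A}$, writing $\mathbf{A}_{\boldsymbol{x}\boldsymbol{x}}=-\sum_{\boldsymbol{y}\in J,\,\boldsymbol{y}\neq\boldsymbol{x}}\mathbf{A}_{\boldsymbol{y}\boldsymbol{x}}-\sum_{\boldsymbol{y}\in J'}\mathbf{A}_{\boldsymbol{y}\boldsymbol{x}}$, and note that adding back $(\mathbf{1}^\top\mathbf{A}_{J'J})_{\boldsymbol{x}}=\sum_{\boldsymbol{y}\in J'}\mathbf{A}_{\boldsymbol{y}\boldsymbol{x}}$ exactly cancels the second sum, leaving $\widetilde{A}_{\boldsymbol{x}\boldsymbol{x}}=-\sum_{\boldsymbol{y}\in J,\,\boldsymbol{y}\neq\boldsymbol{x}}\mathbf{A}_{\boldsymbol{y}\boldsymbol{x}}\le 0$. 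A useful byproduct is that the compressed diagonal equals the negative of the within-$J$ exit rate, which confirms the reflecting-boundary interpretation stated after Eq.~\eqref{eq:compressed-generator}.

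Property (4) is the substantive step and splits into positivity and unit norm. For $e^{t\widetilde{\mathbf{A}}_{JJ}}\ge 0$ I would invoke the standard fact that a Metzler (essentially nonnegative) matrix has a nonnegative exponential: by (2)--(3), choosing any shift $c\ge\max_{\boldsymbol{x}}(-\widetilde{A}_{\boldsymbol{x}\boldsymbol{x}})$ makes $\widetilde{\mathbf{A}}_{JJ}+cI$ entrywise nonnegative, so all of its powers are nonnegative and the identity $e^{t\widetilde{\mathbf{A}}_{JJ}}=e^{-ct}\,e^{t(\widetilde{\mathbf{A}}_{JJ}+cI)}$ exhibits the exponential as a positive scalar times a convergent sum of nonnegative matrices. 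For the norm, I would set $g(t)^\top:=\mathbf{1}^\top e^{t\widetilde{\mathbf{A}}_{JJ}}$ and differentiate, obtaining $g'(t)^\top=\mathbf{1}^\top\widetilde{\mathbf{A}}_{JJ}\,e^{t\widetilde{\mathbf{A}}_{JJ}}=\mathbf{0}^\top$ by (1); hence $g(t)^\top\equiv g(0)^\top=\mathbf{1}^\top$, so every column of the exponential sums to one. Since the matrix is nonnegative, its maximum absolute column sum equals one, which is precisely $\|e^{t\widetilde{\mathbf{A}}_{JJ}}\|_{1\to 1}=1$. The main obstacle is the positivity of the exponential, which cannot be read off termwise and genuinely requires the Metzler shift argument; once positivity is in hand, the contraction property follows immediately from the column-sum identity (1).
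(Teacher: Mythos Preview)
Your proof is correct and follows essentially the same route as the paper: properties (1)--(3) are handled by the identical column-sum and entrywise bookkeeping, and property (4) uses the same Metzler shift to obtain positivity together with the column-sum identity to pin down the $\ell^1$ operator norm. The only cosmetic differences are that the paper writes the positivity step via the product limit $e^{-\alpha t}\lim_{n\to\infty}(I+\tfrac{t}{n}\mathbf{B})^n$ rather than the Taylor series, and derives the norm by differentiating $\|\boldsymbol{p}(t)\|_1$ for a probability initial vector rather than the row $\mathbf{1}^\top e^{t\widetilde{\mathbf{A}}_{JJ}}$ directly; your ordering (positivity first, then norm) is in fact slightly cleaner, since identifying $\|\boldsymbol{p}(t)\|_1$ with $\mathbf{1}^\top\boldsymbol{p}(t)$ already presupposes $\boldsymbol{p}(t)\ge 0$.
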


\begin{proof}
Properties (1)-(3) follow directly from the construction of $\widetilde{\mathbf{A}}_{JJ}$. For property (1), using Eq.~\eqref{eq:row-sum-identities} and $\mathbf{1}^\top \diag(\boldsymbol{v}) = \boldsymbol{v}^\top$,
\begin{align*}
\mathbf{1}^\top \widetilde{\mathbf{A}}_{JJ} = \mathbf{1}^\top \mathbf{A}_{JJ} + \mathbf{1}^\top \mathbf{A}_{J'J} = \mathbf{0}^\top.
\end{align*}
Property (2) holds because off-diagonal entries of $\widetilde{\mathbf{A}}_{JJ}$ equal those of $\mathbf{A}_{JJ}$, which are non-negative. For property (3), the diagonal entries are $\widetilde{A}_{xx} = A_{xx} + \sum_{y \in J'} A_{yx} = - \sum_{y \in J, y \neq x} A_{yx} \leq 0$. For property (4), we first establish the norm bound, then positivity. For any $\boldsymbol{p}(0) \geq \mathbf{0}$ with $\|\boldsymbol{p}(0)\|_1 = 1$, the solution $\boldsymbol{p}(t) = e^{t\widetilde{\mathbf{A}}_{JJ}}\boldsymbol{p}(0)$ satisfies
\begin{equation}
\frac{d}{dt}\|\boldsymbol{p}(t)\|_1 = \mathbf{1}^\top \frac{d\boldsymbol{p}}{dt} = \mathbf{1}^\top \widetilde{\mathbf{A}}_{JJ}\boldsymbol{p}(t) = (\mathbf{1}^\top \widetilde{\mathbf{A}}_{JJ})\boldsymbol{p}(t) = 0,
\end{equation}
by property (1). Therefore, $\|\boldsymbol{p}(t)\|_1 = \|\boldsymbol{p}(0)\|_1 = 1$ for all $t \geq 0$. Taking the supremum over all initial conditions yields $\|e^{t\widetilde{\mathbf{A}}_{JJ}}\|_{1 \to 1} = 1$. Next, we prove positivity: by properties (2)-(3), choose $\alpha \geq \max_{\boldsymbol{x}} |\widetilde{A}_{\boldsymbol{x}\boldsymbol{x}}|$ large enough that $\mathbf{B} := \widetilde{\mathbf{A}}_{JJ} + \alpha I$ has all non-negative entries. Then
\begin{align*}
e^{t\widetilde{\mathbf{A}}_{JJ}} &= e^{t(\mathbf{B} - \alpha I)} \\
                        &= e^{-\alpha t} e^{t\mathbf{B}} \\
                        &= e^{-\alpha t} \lim_{n \to \infty} \left(I + \frac{t}{n}\mathbf{B}\right)^n.
\end{align*}

Since $\mathbf{B} \geq 0$, each factor $I + \frac{t}{n}\mathbf{B} \geq 0$ for $t \geq 0$ and $n$ sufficiently large. Non-negative matrices are closed under multiplication, so $(I + \frac{t}{n}\mathbf{B})^n \geq 0$ for all $n$. Taking the limit and multiplying by the positive scalar $e^{-\alpha t} > 0$, we can conclude that $e^{t\widetilde{\mathbf{A}}_{JJ}} \geq 0$
\end{proof}
We now establish the contraction property for the truncated generator $\mathbf{A}_{JJ}$, which is needed for the local and global error bounds.

\begin{corollary}[Truncated generator contraction]
\label{cor:truncated-contraction}
The truncated generator $\mathbf{A}_{JJ}$ with non-positive column sums generates a positive contraction semigroup satisfying $e^{t\mathbf{A}_{JJ}} \geq 0$ and $\|e^{t\mathbf{A}_{JJ}}\|_{1 \to 1} \le 1$ for all $t \ge 0$.
\end{corollary}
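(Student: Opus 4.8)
The plan is to treat this as a direct corollary of Proposition~\ref{prop:compressed-generator} by relating $\mathbf{A}_{JJ}$ to the compressed generator $\widetilde{\mathbf{A}}_{JJ}$ through the nonnegative outflux vector. First I would establish positivity exactly as in Proposition~\ref{prop:compressed-generator}(4): the truncated generator inherits the nonnegative off-diagonal entries of $\mathbf{A}$, so it is a Metzler matrix, and choosing $\alpha \ge \max_{\boldsymbol{x}} |A_{\boldsymbol{x}\boldsymbol{x}}|$ makes $\mathbf{A}_{JJ} + \alpha I \ge 0$; then $e^{t\mathbf{A}_{JJ}} = e^{-\alpha t}\, e^{t(\mathbf{A}_{JJ}+\alpha I)} \ge 0$ for all $t \ge 0$, since nonnegative matrices are closed under products and limits.

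For the contraction bound, the key observation is that $\widetilde{\mathbf{A}}_{JJ} = \mathbf{A}_{JJ} + \diag(\mathbf{d})$ with $\mathbf{d}^\top := \mathbf{1}^\top \mathbf{A}_{J'J} \ge \mathbf{0}^\top$, since $\mathbf{A}_{J'J} \ge 0$ collects the (nonnegative) transition rates out of $J$. Thus $\mathbf{A}_{JJ} = \widetilde{\mathbf{A}}_{JJ} - \diag(\mathbf{d}) \le \widetilde{\mathbf{A}}_{JJ}$ entrywise, and both are Metzler. Shifting both by the same $\alpha I$ to make them nonnegative and comparing the exponential power series term by term (products of ordered nonnegative matrices preserve the ordering) yields the domination $0 \le e^{t\mathbf{A}_{JJ}} \le e^{t\widetilde{\mathbf{A}}_{JJ}}$ for all $t \ge 0$.

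I would then finish using the fact that for a nonnegative matrix the $1\to1$ operator norm equals its largest column sum, $\|M\|_{1\to1} = \max_j \mathbf{1}^\top M e_j$. The entrywise domination gives $\mathbf{1}^\top e^{t\mathbf{A}_{JJ}} e_j \le \mathbf{1}^\top e^{t\widetilde{\mathbf{A}}_{JJ}} e_j$ for each $j$, so $\|e^{t\mathbf{A}_{JJ}}\|_{1\to1} \le \|e^{t\widetilde{\mathbf{A}}_{JJ}}\|_{1\to1} = 1$ by Proposition~\ref{prop:compressed-generator}(4). Equivalently, and avoiding the comparison matrix, I could argue directly on the column sums: setting $\mathbf{v}(t)^\top := \mathbf{1}^\top e^{t\mathbf{A}_{JJ}}$, the identity $\mathbf{1}^\top \mathbf{A}_{JJ} = -\mathbf{d}^\top$ gives $\tfrac{d}{dt}\mathbf{v}(t)^\top = -\mathbf{d}^\top e^{t\mathbf{A}_{JJ}} \le \mathbf{0}^\top$ entrywise, using $\mathbf{d} \ge 0$ together with the positivity established above; since $\mathbf{v}(0) = \mathbf{1}$ this forces $0 \le \mathbf{v}(t) \le \mathbf{1}$ and hence $\max_j v_j(t) \le 1$.

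The main point to flag, rather than a genuine obstacle, is that positivity and contraction are coupled: the sign of $\tfrac{d}{dt}\mathbf{v}$ (equivalently, the validity of the domination) relies on $e^{t\mathbf{A}_{JJ}} \ge 0$, so positivity must be proved first. The only structural change from Proposition~\ref{prop:compressed-generator} is that replacing the zero column sums by non-positive ones converts the exact mass conservation $\|\boldsymbol{p}(t)\|_1 = 1$ into the inequality $\|\boldsymbol{p}(t)\|_1 \le 1$, the deficit being precisely the mass $\int_0^t \mathbf{d}^\top e^{s\mathbf{A}_{JJ}}\boldsymbol{p}(0)\,ds$ that leaks across the truncation boundary.
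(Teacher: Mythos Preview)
Your proposal is correct. Your ``equivalently'' direct argument on the column sums --- differentiate $\mathbf{v}(t)^\top = \mathbf{1}^\top e^{t\mathbf{A}_{JJ}}$, use $\mathbf{1}^\top \mathbf{A}_{JJ} = -\mathbf{d}^\top \le \mathbf{0}^\top$ and positivity to conclude $\mathbf{v}(t) \le \mathbf{1}$ --- is exactly the paper's proof, which differentiates $\|\boldsymbol{p}(t)\|_1 = \mathbf{1}^\top \boldsymbol{p}(t)$ for nonnegative initial data and invokes the same shifting argument for positivity. Your first route, the entrywise domination $e^{t\mathbf{A}_{JJ}} \le e^{t\widetilde{\mathbf{A}}_{JJ}}$ obtained by comparing the shifted power series, is an additional argument the paper does not use; it is a bit heavier but has the virtue of making the relationship to Proposition~\ref{prop:compressed-generator} completely explicit. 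You are also right to insist that positivity be established before the contraction step: the paper presents the norm bound first and positivity second, but the computation $\tfrac{d}{dt}\|\boldsymbol{p}(t)\|_1 = \mathbf{1}^\top \mathbf{A}_{JJ}\boldsymbol{p}(t) \le 0$ silently uses $\boldsymbol{p}(t) \ge 0$ to identify $\|\boldsymbol{p}(t)\|_1$ with $\mathbf{1}^\top \boldsymbol{p}(t)$, so your ordering is the cleaner one.
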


\begin{proof}
For the norm bound, consider any $\boldsymbol{p}(0) \geq \mathbf{0}$ with $\|\boldsymbol{p}(0)\|_1 = 1$. Since $\mathbf{1}^\top \mathbf{A}_{JJ} \leq \mathbf{0}^\top$, we have
\begin{equation}
\frac{d}{dt}\|\boldsymbol{p}(t)\|_1 = \mathbf{1}^\top \mathbf{A}_{JJ}\boldsymbol{p}(t) = (\mathbf{1}^\top \mathbf{A}_{JJ})\boldsymbol{p}(t) \leq 0,
\end{equation}
which gives $\|e^{t\mathbf{A}_{JJ}}\|_{1 \to 1} \leq 1$. Positivity follows by the same shifting argument as in Proposition~\ref{prop:compressed-generator}.
\end{proof}

\begin{remark}[Infinite-dimensional extension]
\label{rem:infinite-dimensional}
The arguments in Proposition~\ref{prop:compressed-generator} and Corollary~\ref{cor:truncated-contraction} extend to the full CME on infinite state space $\mathcal{X}$ via the Hille-Yosida theorem~\cite{Ethier1986,pazy_1983}. In the infinite-dimensional setting, the operator $\mathbf{A}$ is unbounded on $\ell^1(\mathcal{X})$, requiring verification that $\mathbf{A}$ is closed with dense domain $D(\mathbf{A})$ and that the resolvent bounds $\|(\lambda I - \mathbf{A})^{-1}\|_{1 \to 1} \leq 1/\lambda$ hold for all $\lambda > 0$. These conditions guarantee that $\mathbf{A}$ generates a strongly continuous contraction semigroup. In our finite-dimensional truncation to $J$, all operators are automatically bounded and closed with domain equal to the entire space.
\end{remark}

\subsection{Local Error}

We decompose the local error into model error from state space truncation and time-stepping error from numerical approximation.

\subsubsection{Model Error from State Space Truncation}

Consider the dynamics on $J$ starting from the true state $\boldsymbol{p}_J(t_n)$. The true restricted dynamics satisfy

\begin{equation}
\label{eq:true-restricted}
\frac{d}{dt} \boldsymbol{p}_J(t) = \mathbf{A}_{JJ}\boldsymbol{p}_J(t) + \mathbf{A}_{JJ'}\boldsymbol{p}_{J'}(t),
\end{equation}
while the compressed dynamics evolve according to
\begin{equation}
\label{eq:compressed-dynamics-exact}
\frac{d}{dt} \widehat{\boldsymbol{p}}_J(t) = \widetilde{\mathbf{A}}_{JJ}\widehat{\boldsymbol{p}}_J(t).
\end{equation}
Define the local model error $\boldsymbol{\varepsilon}(t) := \boldsymbol{p}_J(t) - \widehat{\boldsymbol{p}}_J(t)$ with $\boldsymbol{\varepsilon}(t_n) = \mathbf{0}$.

\begin{theorem}[Local model error bound]
\label{thm:local-model-error}
Over a single time step $[t_n, t_{n+1}]$, the local model error satisfies
\begin{equation}
\|\boldsymbol{\varepsilon}(t_{n+1})\|_1 \leq \int_{t_n}^{t_{n+1}} (\Phi_{\mathrm{in}}(s) + \Phi_{\mathrm{out}}(s)) \; ds.
\end{equation}
where $\Phi_{\mathrm{in}}$ and $\Phi_{\mathrm{out}}$ are boundary fluxes entering and leaving the truncation boundary respectively (Eq.~\eqref{eq:flux-defs}).
\end{theorem}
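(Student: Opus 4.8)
The plan is to reduce the statement to a Duhamel (variation-of-constants) representation for the local error and then exploit the stochastic-semigroup norm bound already proved for the compressed generator. Differentiating $\boldsymbol{\varepsilon} = \boldsymbol{p}_J - \widehat{\boldsymbol{p}}_J$ and substituting the true restricted dynamics \eqref{eq:true-restricted} and the compressed dynamics \eqref{eq:compressed-dynamics-exact}, I would rewrite $\mathbf{A}_{JJ} = \widetilde{\mathbf{A}}_{JJ} - \diag(\mathbf{1}^\top \mathbf{A}_{J'J})$ from \eqref{eq:compressed-generator} so that the error equation takes the form $\tfrac{d}{dt}\boldsymbol{\varepsilon}(t) = \widetilde{\mathbf{A}}_{JJ}\boldsymbol{\varepsilon}(t) + \boldsymbol{h}(t)$ with forcing term $\boldsymbol{h}(t) := \mathbf{A}_{JJ'}\boldsymbol{p}_{J'}(t) - \diag(\mathbf{1}^\top \mathbf{A}_{J'J})\boldsymbol{p}_J(t)$. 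The point of propagating with $\widetilde{\mathbf{A}}_{JJ}$ rather than $\mathbf{A}_{JJ}$ is that the forcing then depends only on the \emph{true} solution components $\boldsymbol{p}_J,\boldsymbol{p}_{J'}$, which is precisely what the definitions of $\Phi_{\mathrm{in}},\Phi_{\mathrm{out}}$ in \eqref{eq:flux-defs} involve.

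Since $\boldsymbol{\varepsilon}(t_n)=\mathbf{0}$, variation of constants gives $\boldsymbol{\varepsilon}(t_{n+1}) = \int_{t_n}^{t_{n+1}} e^{(t_{n+1}-s)\widetilde{\mathbf{A}}_{JJ}}\,\boldsymbol{h}(s)\,ds$. Taking the $\ell^1$ norm, applying submultiplicativity under the integral, and invoking $\|e^{\tau\widetilde{\mathbf{A}}_{JJ}}\|_{1\to1}=1$ for $\tau\ge 0$ from Proposition~\ref{prop:compressed-generator}(4) collapses the propagator and yields $\|\boldsymbol{\varepsilon}(t_{n+1})\|_1 \le \int_{t_n}^{t_{n+1}} \|\boldsymbol{h}(s)\|_1\,ds$. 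It then remains to show the pointwise bound $\|\boldsymbol{h}(s)\|_1 \le \Phi_{\mathrm{in}}(s) + \Phi_{\mathrm{out}}(s)$.

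For this last step I would split $\boldsymbol{h}$ by the triangle inequality and treat each piece separately, using that \emph{both are entrywise nonnegative}. Indeed $\mathbf{A}_{JJ'}\ge 0$ (all its entries are off-diagonal generator entries) and $\boldsymbol{p}_{J'}(s)\ge 0$, while $\diag(\mathbf{1}^\top \mathbf{A}_{J'J})$ collects the nonnegative total exit rates from each state of $J$ into $J'$ and multiplies the nonnegative $\boldsymbol{p}_J(s)$. For a nonnegative vector the $\ell^1$ norm equals $\mathbf{1}^\top(\cdot)$, so $\|\mathbf{A}_{JJ'}\boldsymbol{p}_{J'}(s)\|_1 = \mathbf{1}^\top \mathbf{A}_{JJ'}\boldsymbol{p}_{J'}(s) = \Phi_{\mathrm{in}}(s)$, and, using $\mathbf{1}^\top \diag(\boldsymbol{v}) = \boldsymbol{v}^\top$, $\|\diag(\mathbf{1}^\top \mathbf{A}_{J'J})\boldsymbol{p}_J(s)\|_1 = \mathbf{1}^\top \mathbf{A}_{J'J}\boldsymbol{p}_J(s) = \Phi_{\mathrm{out}}(s)$. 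Substituting into the integral inequality gives the claim.

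I expect the main obstacle to be precisely the choice made in the first paragraph: propagating the error with the compressed generator $\widetilde{\mathbf{A}}_{JJ}$ instead of $\mathbf{A}_{JJ}$. The latter choice (via Corollary~\ref{cor:truncated-contraction}) also yields a valid bound, but its forcing term carries the compressed outflux $\mathbf{1}^\top \mathbf{A}_{J'J}\widehat{\boldsymbol{p}}_J(s)$ rather than the true $\Phi_{\mathrm{out}}(s)=\mathbf{1}^\top \mathbf{A}_{J'J}\boldsymbol{p}_J(s)$, and so would not match the stated inequality. The only other point requiring genuine care is confirming the entrywise nonnegativity of both forcing components, since it is this sign structure, not a crude operator-norm estimate, that lets the two $\ell^1$ norms collapse exactly onto $\Phi_{\mathrm{in}}(s)$ and $\Phi_{\mathrm{out}}(s)$.
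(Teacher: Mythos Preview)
Your proposal is correct and in fact slightly cleaner than the paper's own argument. The paper propagates the error with the \emph{truncated} generator $\mathbf{A}_{JJ}$ and invokes Corollary~\ref{cor:truncated-contraction}; this produces the forcing term $\boldsymbol{f}(s) = \mathbf{A}_{JJ'}\boldsymbol{p}_{J'}(s) - \diag(\mathbf{1}^\top\mathbf{A}_{J'J})\,\widehat{\boldsymbol{p}}_J(s)$, whose second piece involves the \emph{compressed} solution $\widehat{\boldsymbol{p}}_J$. The paper then identifies $(\mathbf{1}^\top\mathbf{A}_{J'J})\widehat{\boldsymbol{p}}_J(s)$ with $\Phi_{\mathrm{out}}(s)$, even though by definition~\eqref{eq:flux-defs} $\Phi_{\mathrm{out}}$ is built from the true $\boldsymbol{p}_J$. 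Your choice to propagate with $\widetilde{\mathbf{A}}_{JJ}$ and use Proposition~\ref{prop:compressed-generator}(4) instead puts the true $\boldsymbol{p}_J$ directly into the forcing, so the two $\ell^1$ norms collapse \emph{exactly} onto $\Phi_{\mathrm{in}}(s)$ and $\Phi_{\mathrm{out}}(s)$ as stated, with no silent relabeling. You even anticipated this distinction in your final paragraph. Both routes are valid Duhamel-plus-contraction arguments, but yours matches the theorem statement literally, while the paper's version proves the bound with a ``compressed'' $\Phi_{\mathrm{out}}$ and tacitly treats the two as equivalent.
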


\begin{proof}
Subtracting Eq.~\eqref{eq:compressed-dynamics-exact} from Eq.~\eqref{eq:true-restricted} gives the following inhomogeneous equation,
\begin{align*}
\frac{d}{dt}\boldsymbol{\varepsilon}(t) &= \mathbf{A}_{JJ}\boldsymbol{\varepsilon}(t) + \underbrace{\mathbf{A}_{JJ'}\boldsymbol{p}_{J'}(t) - \diag(\mathbf{1}^\top\mathbf{A}_{J'J})\widehat{\boldsymbol{p}}_J(t)}_{=: \boldsymbol{f}(t)}.
\end{align*}
The solution of the inhomogeneous ODE is given by,
\begin{align}
\boldsymbol{\varepsilon}(t_{n+1}) &= \int_{t_n}^{t_{n+1}} e^{(t_{n+1}-s)\mathbf{A}_{JJ}}\boldsymbol{f}(s) \, ds + e^{(t_{n+1}-t_n)\mathbf{A}_{JJ}}\boldsymbol{\varepsilon}(t_n). \\
\end{align}
We set the initial condition $\boldsymbol{\varepsilon}(t_n)$ to zero as we start both systems with the same initial condition. Eliminating the initial condition and taking the $\ell_1$ norm both sides, we get:
\begin{align}
\|\boldsymbol{\varepsilon}(t_{n+1})\|_1 &\leq \int_{t_n}^{t_{n+1}} \underbrace{\|e^{(t_{n+1}-s)\mathbf{A}_{JJ}}\|_{1}}_{\leq 1 (\text{Corollary \ref{cor:truncated-contraction}})}\|\boldsymbol{f}(s)\|_1 \, ds \\
&\leq \int_{t_n}^{t_{n+1}} \left(\|\mathbf{A}_{JJ'}\boldsymbol{p}_{J'}(s)\|_1  +\|\diag(\mathbf{1}^\top\mathbf{A}_{J'J})\widehat{\boldsymbol{p}}_J(s)\|_1 \right)  \; ds \\
&= \int_{t_n}^{t_{n+1}} ( (\mathbf{1}^\top \mathbf{A}_{JJ'})\boldsymbol{p}_{J'}(s) + (\mathbf{1}^\top\mathbf{A}_{J'J})\widehat{\boldsymbol{p}}_J(s)) \;ds \\
&= \int_{t_n}^{t_{n+1}} (\Phi_{\mathrm{in}}(s) + \Phi_{\mathrm{out}}(s)) \; ds.
\end{align}

\end{proof}

While potentially conservative, this characterization provides a computationally tractable error estimate based on readily available flux quantities.

\subsubsection{Matrix Exponential Approximation Error}
\label{subsec:time_stepping_error_analysis}

In addition to state-space truncation, we approximate the matrix exponential $\exp(\widetilde{\mathbf{A}}_{JJ}\,\Delta t)$ numerically.

\begin{proposition}[Local time-stepping error]
\label{prop:local_time_error}
For a matrix exponential approximation with tolerance $\varepsilon_{\mathrm{ODE}}$, the error in a single time step is bounded by
\begin{equation}
\|\exp(\widetilde{\mathbf{A}}_{JJ}\,\Delta t)\,\boldsymbol{v} - \mathrm{Approx}(\widetilde{\mathbf{A}}_{JJ}, \Delta t)\,\boldsymbol{v}\|_1 \le \varepsilon_{\mathrm{ODE}}\,\|\boldsymbol{v}\|_1.
\end{equation}
Modern Krylov subspace implementations such as EXPOKIT \cite{Sidje1998} provide adaptive mechanisms to control this error by adjusting the Krylov subspace dimension based on the specified tolerance.
\end{proposition}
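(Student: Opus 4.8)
The plan is to treat the bound as a consequence of the adaptive error control already built into the Krylov subspace exponential integrator, making explicit the two ingredients it relies on: a computable a posteriori error estimate and a conversion to the $\ell^1$ norm. First, I would recall the structure of the approximation. The $m$-step Arnoldi process applied to $\widetilde{\mathbf{A}}_{JJ}$ with starting vector $\boldsymbol{v}$ produces an orthonormal basis $V_m$ of the Krylov subspace together with an upper Hessenberg matrix $H_m$ obeying the Arnoldi relation $\widetilde{\mathbf{A}}_{JJ} V_m = V_m H_m + h_{m+1,m}\boldsymbol{v}_{m+1}\boldsymbol{e}_m^\top$, and the approximation takes the form $\mathrm{Approx}(\widetilde{\mathbf{A}}_{JJ},\Delta t)\,\boldsymbol{v} = \|\boldsymbol{v}\|_2\, V_m \exp(\Delta t\, H_m)\boldsymbol{e}_1$.

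Second, I would invoke the standard exact error representation obtained by applying variation-of-constants to the Arnoldi defect, which expresses the error $\exp(\Delta t\, \widetilde{\mathbf{A}}_{JJ})\boldsymbol{v} - \mathrm{Approx}$ as an integral of the propagator $\exp((\Delta t - s)\widetilde{\mathbf{A}}_{JJ})$ against a residual proportional to $h_{m+1,m}$ and the last components of $\exp(s H_m)\boldsymbol{e}_1$. Bounding the propagator by $\|\exp((\Delta t - s)\widetilde{\mathbf{A}}_{JJ})\|_{1\to 1} = 1$ (Proposition~\ref{prop:compressed-generator}, property~(4)) reduces this to a computable quantity in $h_{m+1,m}$ and the Hessenberg exponential. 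This is precisely the residual-based estimator that EXPOKIT monitors; the routine then selects the Krylov dimension $m$ and, if necessary, subdivides $\Delta t$ into internal substeps so that the accumulated estimate stays below the user tolerance. By construction this delivers the stated bound with $\epsilon_{\mathrm{ODE}}$ in the role of the controlled relative error.

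The main subtlety, and the only place the argument is not fully mechanical, is the norm in which the error is controlled. The Arnoldi machinery and EXPOKIT's internal estimators operate naturally in the Euclidean norm, whereas the proposition is stated in $\ell^1$. On the finite-dimensional space $\mathbb{R}^{|J|}$ these norms are equivalent, with $\|\cdot\|_2 \le \|\cdot\|_1$ and $\|\cdot\|_1 \le \sqrt{|J|}\,\|\cdot\|_2$, so a tolerance imposed in one norm transfers to the other up to a dimension-dependent factor. I would therefore either absorb this factor into the definition of $\epsilon_{\mathrm{ODE}}$, interpreting it directly as the $\ell^1$ relative tolerance delivered by the solver, or carry the $\sqrt{|J|}$ constant explicitly. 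Under either reading the proposition is essentially the interface specification of the exponential integrator: it records that the numerical propagator reproduces $\exp(\widetilde{\mathbf{A}}_{JJ}\,\Delta t)$ up to a relative $\ell^1$ error $\epsilon_{\mathrm{ODE}}$, which is the only property the subsequent global error accumulation requires.
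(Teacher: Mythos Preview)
Your proposal is correct, and in fact supplies considerably more justification than the paper does. In the paper, Proposition~\ref{prop:local_time_error} is stated without proof: it is treated as the interface specification of the matrix-exponential solver, with the sentence about EXPOKIT serving as the entire justification. The bound is effectively taken as the \emph{definition} of the tolerance $\epsilon_{\mathrm{ODE}}$ that the user passes to the Krylov routine, and the proposition just records this contract so that it can be invoked in the global error accumulation (Corollary~\ref{cor:global-error}).

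Your sketch goes further by unpacking \emph{why} such a contract is achievable: the Arnoldi relation, the variation-of-constants representation of the Krylov error, the use of the semigroup contraction $\|e^{(\Delta t - s)\widetilde{\mathbf{A}}_{JJ}}\|_{1\to1}=1$ from Proposition~\ref{prop:compressed-generator} to bound the propagator, and the observation that the resulting residual is exactly what EXPOKIT monitors when choosing $m$ and internal substeps. You also flag the $\ell^2$-versus-$\ell^1$ issue, which the paper does not address at all; your resolution (absorb the equivalence constant into $\epsilon_{\mathrm{ODE}}$, or carry $\sqrt{|J|}$ explicitly) is the right one and is consistent with how the paper uses the proposition downstream. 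So your final reading---that this is an interface specification rather than a theorem requiring independent proof---matches the paper exactly, but your supporting argument is a genuine addition.
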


Combining Theorem~\ref{thm:local-model-error} with the pruning error gives the following complete one-step bound.  The tolerance $\alpha$ (probability mass threshold) appears directly; $\varepsilon_{\mathrm{flux}}$ does not appear directly but controls which states end up in $J'_n$, keeping $w_{\max}(J'_n)$ small in practice.

\begin{corollary}[Local step error]
\label{cor:local-error-pruning}
Under flux-based pruning at step $n$ with probability tolerance $\alpha$, the one-step error satisfies
\begin{equation}
\|\boldsymbol{\varepsilon}(t_{n+1})\|_1 \leq 2\alpha + \bigl(\alpha\,w_{\max}(J'_n) + \Phi_{\mathrm{out}}(J_n,t_n)\bigr)\Delta t_n + O(\Delta t_n^2),
\end{equation}
where $w_{\max}(J'_n) := \max_{\boldsymbol{x}\in J'_n} w(\boldsymbol{x})$ is the maximum exit rate of pruned states and $\Phi_{\mathrm{out}}(J_n,t_n) := \|\mathbf{A}_{J'_nJ_n}\widetilde{\boldsymbol{p}}_n\|_1$ is the boundary outflux.
\end{corollary}
\begin{proof}
\emph{Pruning and renormalization.}  Let $m \leq \alpha$ be the probability mass discarded by Stage~1 quantile pruning.  After flux-protection (Stage~2) and renormalization (Stage~3), the renormalized distribution $\widetilde{\boldsymbol{p}}_n$ satisfies
\[
\|\boldsymbol{p}(t_n) - \widetilde{\boldsymbol{p}}_n\|_1 = m + \frac{m}{1-m}(1-m) = 2m \leq 2\alpha,
\]
so $\|\boldsymbol{\varepsilon}(t_n)\|_1 \leq 2\alpha$.  The discarded mass obeys $\|\boldsymbol{p}_{J'}(t_n)\|_1 = m \leq \alpha$.

\emph{Evolution.}  Taking norms in the variation-of-constants formula from the proof of Theorem~\ref{thm:local-model-error} and using $\|e^{\tau\mathbf{A}_{JJ}}\|_{1\to1}\leq 1$ (Corollary~\ref{cor:truncated-contraction}),
\[
\|\boldsymbol{\varepsilon}(t_{n+1})\|_1 \leq \|\boldsymbol{\varepsilon}(t_n)\|_1 + \int_{t_n}^{t_{n+1}}\bigl(\Phi_{\mathrm{in}}(s)+\Phi_{\mathrm{out}}(s)\bigr)\,ds.
\]
Bounding each flux term at $t=t_n$ via the induced $\ell^1$ operator norm:
\begin{align*}
\Phi_{\mathrm{in}}(t_n) &= \|\mathbf{A}_{JJ'}\boldsymbol{p}_{J'}(t_n)\|_1 \leq \|\mathbf{A}_{JJ'}\|_{1\to1}\,\|\boldsymbol{p}_{J'}(t_n)\|_1 \leq w_{\max}(J'_n)\cdot\alpha, \\
\Phi_{\mathrm{out}}(t_n) &= \|\mathbf{A}_{J'J}\widetilde{\boldsymbol{p}}_n\|_1 =: \Phi_{\mathrm{out}}(J_n,t_n),
\end{align*}
where $\|\mathbf{A}_{JJ'}\|_{1\to1}\leq w_{\max}(J'_n)$ because each column of $\mathbf{A}_{JJ'}$, indexed by $\boldsymbol{x}\in J'_n$, sums to at most $w(\boldsymbol{x})$.  Substituting $\|\boldsymbol{\varepsilon}(t_n)\|_1\leq 2\alpha$ and integrating to leading order gives the result.
\end{proof}

The $2\alpha$ term captures the immediate effect of pruning and renormalization.  The influx contribution $\alpha\,w_{\max}(J'_n)$ reflects the fact that the pruned mass $\boldsymbol{p}_{J'}$ continues to drive probability current back into $J_n$ in the true dynamics.  The quantity $w_{\max}(J'_n)$ is available as a byproduct of Algorithm~\ref{alg:flux-prune}: propensities are evaluated for all active states \emph{before} any are removed, so the maximum exit rate over the pruned set requires no extra computation.  The outflux $\Phi_{\mathrm{out}}(J_n,t_n)$ is suppressed by the EXPAND step, which adds all stoichiometric neighbors of active states so that outgoing transitions are absorbed within $J_n$.

\subsection{Global Error}

Corollary~\ref{cor:local-error-pruning} bounds the error introduced at a single step.  The following corollary shows how these per-step errors accumulate over $N$ steps via the contractivity of the CME semigroup.

\begin{corollary}[Global error bound]
\label{cor:global-error}
Let $e_n := \|\boldsymbol{p}(t_n) - \widetilde{\boldsymbol{p}}_n\|_1$ and $\bar{\Phi}_{\mathrm{out}} := \max_n \Phi_{\mathrm{out}}(J_n,t_n)$.  For $N$ steps with $e_0=0$,
\begin{equation}
\label{eq:global-error-bound}
e_N \leq N\bigl(2\alpha + \bar{\Phi}_{\mathrm{out}}\,\Delta t + \varepsilon_{\mathrm{ODE}}\bigr),
\end{equation}
where $\Delta t$ denotes a representative step size.  After the EXPAND step, $\bar{\Phi}_{\mathrm{out}} \approx 0$, so the dominant cost per step is $2\alpha$.
\end{corollary}

\begin{proof}
Let $\boldsymbol{r}(t)$ be the exact CME solution starting from $\widetilde{\boldsymbol{p}}_n$.  Insert $\boldsymbol{r}(t_{n+1})$ as an intermediate and apply the triangle inequality:
\[
e_{n+1} = \|\boldsymbol{p}(t_{n+1}) - \widetilde{\boldsymbol{p}}_{n+1}\|_1
\leq \underbrace{\|\boldsymbol{p}(t_{n+1}) - \boldsymbol{r}(t_{n+1})\|_1}_{\text{(I) prior error propagated}}
+ \underbrace{\|\boldsymbol{r}(t_{n+1}) - \widetilde{\boldsymbol{p}}_{n+1}\|_1}_{\text{(II) new error at step }n}.
\]
\textit{Term (I).}  The CME semigroup is a contraction in $\ell^1$ (Proposition~\ref{prop:compressed-generator}), so prior errors do not amplify under exact evolution:
\[
\|\boldsymbol{p}(t_{n+1}) - \boldsymbol{r}(t_{n+1})\|_1 \leq \|\boldsymbol{p}(t_n) - \widetilde{\boldsymbol{p}}_n\|_1 = e_n.
\]
Crucially, this includes the effect of the influx from pruned mass ($\boldsymbol{p}_{J'} \neq \boldsymbol{0}$ in the true dynamics): the contractivity of the full CME semigroup ensures the total $\ell^1$ distance between $\boldsymbol{p}$ and $\boldsymbol{r}$ does not grow, so prior pruning errors accumulate \emph{additively}.

\textit{Term (II).}  Since $\boldsymbol{r}$ and $\widetilde{\boldsymbol{p}}$ start from the same $\widetilde{\boldsymbol{p}}_n$ (which is supported on $J_n$), the pruned mass in $J'_n$ is zero: $\boldsymbol{p}_{J'}(t_n) = \boldsymbol{0}$, so $\Phi_{\mathrm{in}}(t_n) = 0$.  Applying Theorem~\ref{thm:local-model-error} and Proposition~\ref{prop:local_time_error},
\[
\|\boldsymbol{r}(t_{n+1}) - \widetilde{\boldsymbol{p}}_{n+1}\|_1 \leq \underbrace{2\alpha}_{\text{prune at }t_{n+1}} + \underbrace{\Phi_{\mathrm{out}}(J_n,t_n)\,\Delta t_n + \varepsilon_{\mathrm{ODE}}}_{\text{truncation + ODE error}}.
\]
Combining: $e_{n+1} \leq e_n + 2\alpha + \Phi_{\mathrm{out}}(J_n,t_n)\,\Delta t_n + \varepsilon_{\mathrm{ODE}}$.  Taking maxima over $n$ and summing from $n=0$ to $N-1$ with $e_0=0$ gives the result.
\end{proof}

\section{Numerical Experiments}
\label{sec:results}

In this section we demonstrate the flux-aware FSP method on four benchmark systems from the stochastic chemical kinetics literature. We first present a detailed analysis on a toy bottleneck reaction system that validates the method's connectivity-preserving properties and error bounds along with a deterministic error analysis. The toggle switch tests spatial adaptivity for bimodal distributions. The Robertson system tests long-time integration with extreme temporal stiffness spanning nine orders of magnitude in rate constants. The Oregonator tests adaptive time-stepping for sustained oscillatory dynamics. These examples illustrate performance on well-established models with large state spaces and complex transient behavior. Finally, we experimentally study the matrix construction algorithm's efficiency. All experiments were conducted on an M1 MacBook Pro (M1 Pro chip, 16 GB RAM). The adaptive FSP algorithm was implemented in Julia, using Catalyst.jl~\cite{catalyst2021} for reaction network modeling and Expokit.jl for Krylov-based matrix exponential computations via EXPOKIT's routines~\cite{expokit2021}.

Tolerance parameters were set from the known rate constants of each system rather than by tuning.  For $\varepsilon_{\Delta t}$, well-mixed systems (Oregonator, Robertson, toggle switch) use $\varepsilon_{\Delta t} = 1$, since $\Phi_{\mathrm{total}}$ already scales with the dominant reaction rates and this value keeps the per-step activity fraction bounded.  For $\alpha$, connectivity-critical systems such as the bottleneck admit aggressive pruning ($\alpha = 0.9$) because the flux criterion independently protects any connector state; oscillatory and bimodal systems use moderate values ($\alpha = 0.1$--$0.3$) to avoid over-pruning the tails of a moving or split distribution.  For $\varepsilon_{\mathrm{flux}}$, simply-connected systems use $\varepsilon_{\mathrm{flux}} = 1$: since no individual state can carry flux equal to the total system flux $\Phi_{\mathrm{total}}$, the protection threshold $\varepsilon_{\mathrm{flux}}\cdot\Phi_{\mathrm{total}}$ is unreachable by any single state, which effectively disables flux-based protection and allows the quantile filter to operate freely.  For the bottleneck system, $\varepsilon_{\mathrm{flux}} = 10^{-12}$ was set to match the estimated relative flux of the connector state $p(B{=}1)\,w(B{=}1)/\Phi_{\mathrm{total}} \approx k_1/k_2 = 10^{-6}/0.1 = 10^{-5}$, and the results were insensitive to any value in $[10^{-12}, 10^{-6}]$.

\subsection{Reaction System With a Bottleneck Channel}

We consider a three-species network with a slow gateway followed by fast downstream dynamics:
\begin{equation}
A \xrightarrow{k_1} B, \qquad B \xrightarrow{k_2} B + C,
\end{equation}
with state $(N_A,N_B,N_C)\in\mathbb{Z}_{\ge 0}^3$, $k_1=10^{-6}\,\text{s}^{-1}$, $k_2=0.1\,\text{s}^{-1}$, and initial condition $(1,0,0)$. The reaction $B \to B+C$ conserves $B$ while producing $C$, creating a catalytic production pathway. The slow $A \to B$ transition acts as a bottleneck: once probability reaches states with $B \geq 1$, the fast catalytic reaction rapidly advances $C$ while maintaining $B \approx 1$.

\begin{figure}[htbp]
\centering
\includegraphics[width=\linewidth]{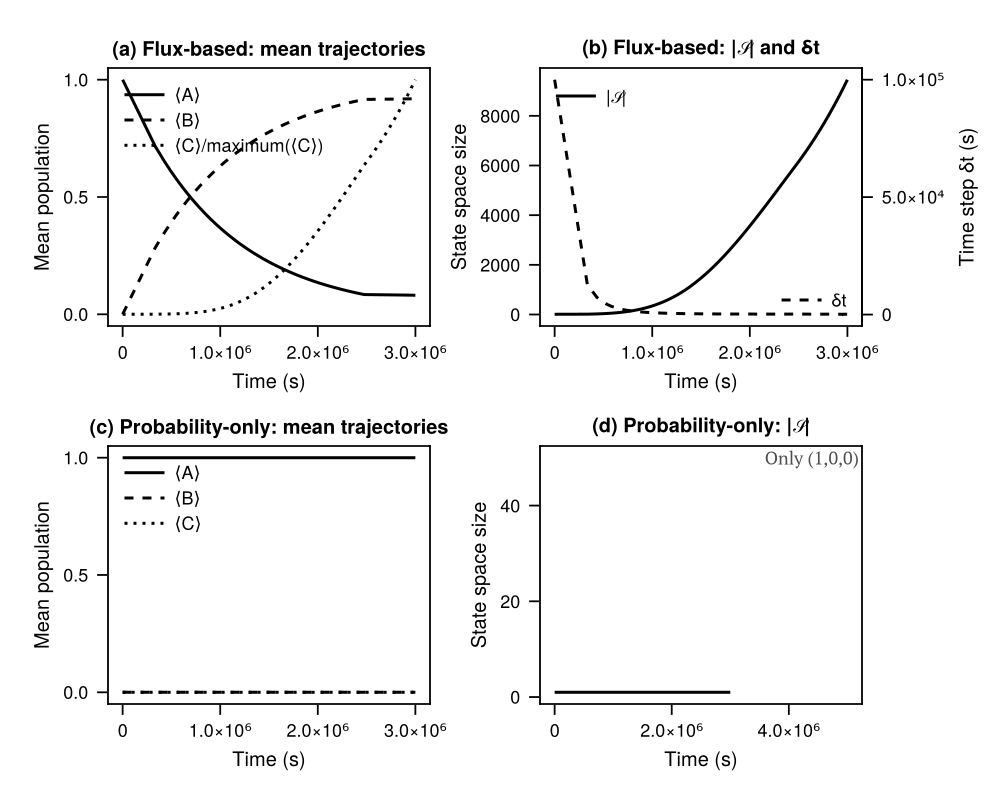}
\caption{Comparison of flux-based and probability-only pruning on the bottleneck system. (a) Flux-based pruning ($\alpha=0.9$, $\varepsilon_{\mathrm{flux}}=10^{-12}$) correctly captures the transition dynamics with $\langle A \rangle$ decaying and $\langle C \rangle$ growing. (b) State space grows continuously to accommodate the advancing wave front while time step $\delta t$ adjusts adaptively. (c) Probability-only pruning ($\alpha=0.9$, no flux preservation) removes all intermediate states immediately, severing the connectivity pathway and causing all means to remain at initial values. (d) Only the initial state $(1,0,0)$ is retained throughout the simulation, creating a reducible generator.}
\label{fig:flux_vs_naive}
\end{figure}

The probability mass concentrates at the initial state $(1,0,0)$ and a traveling wave front at $(0,1,C_{\max}(t))$ where $C_{\max}$ grows approximately linearly with time. Intermediate states $(0,1,C)$ for $0 < C < C_{\max}$ form a critical connectivity pathway with negligible probability but significant flux. Tables~\ref{tab:bottleneck_diagnostics} and \ref{tab:prob_flux_rankings} quantify this structure at representative timepoints during the transition phase.

\begin{table}[htbp]
\centering
\caption{Distribution structure showing a quasi-bimodal behavior and bottleneck pathway characteristics during the stiff phase. The initial state $(1,0,0)$ gradually releases probability mass to the advancing tail $(0,1,C_{\max})$ of the distribution. Intermediate states along the pathway from $C=0$ to $C=C_{\max}$ maintain nearly constant flux despite having probability $p \lesssim 10^{-5}$. The pathway flux $\phi$ is computed as the total flux through intermediate bottleneck states and approximately equals $k_1 \cdot p(1,0,0)$.}
\label{tab:bottleneck_diagnostics}
\begin{tabular}{lccccc}
\hline
\textbf{Time (s)} & \textbf{$p(1,0,0)$} & \textbf{$C_{\max}$} & \textbf{$p(0,1,C_{\max})$} & \textbf{$|\mathcal{S}|$} & \textbf{Flux (s$^{-1}$)} \\
\hline
$1.0 \times 10^5$ & 0.905 & 1   & 0.095 & 4     & $9.0 \times 10^{-7}$ \\
$2.5 \times 10^5$ & 0.779 & 11  & 0.221 & 8     & $7.8 \times 10^{-7}$ \\
$5.1 \times 10^5$ & 0.600 & 33  & 0.400 & 36    & $6.0 \times 10^{-7}$ \\
$7.7 \times 10^5$ & 0.458 & 76  & 0.542 & 79    & $4.6 \times 10^{-7}$ \\
$3.0 \times 10^6$ & 0.081 & 8344 & 0.910 & 9124  & $8.4 \times 10^{-8}$ \\
\hline
\end{tabular}
\end{table}

\begin{table}[htbp]
\centering
\caption{Probability vs. flux rankings at early ($t=0$) and late ($t=3.0\times10^6$ s) times, demonstrating the disconnect between probability and flux. At $t=0$, the system has just initialized with all mass at $(1,0,0)$; zero-probability states $(0,1,0)$, $(0,1,1)$, $(0,1,2)$ carry the highest flux as they form the initial reaction pathway. At $t=3.0\times10^6$ s, state $(0,1,9121)$ near the wave front has negligible probability ($8\times10^{-4}$) but flux exceeding $10^{-4}$ s$^{-1}$. This is 100-fold higher than probability alone would suggest. Probability-only pruning would incorrectly remove all states except the top probability carrier, severing network connectivity.}
\label{tab:prob_flux_rankings}
\begin{tabular}{cccc}
\hline
\multicolumn{4}{c}{\textbf{Early time: $t = 1.0 \times 10^3$ s}} \\
\hline
\textbf{Rank} & \textbf{By probability} & \textbf{By flux} & \textbf{Flux (s$^{-1}$)} \\
\hline
1 & $(1,0,0)$, $p=0.999$ & $(0,1,2)$, $\phi=6.3\times10^{-5}$ & $6.3\times10^{-5}$ \\
2 & $(0,1,2)$, $p=6.3 \times 10^{-4}$ & $(0,1,1)$, $\phi=3.7\times10^{-5}$ & $3.7\times10^{-5}$ \\
3 & $(0,1,1)$, $p=3.7 \times 10^{-4}$ & $(0,1,0)$, $\phi=1.0\times10^{-6}$ & $1.0\times10^{-6}$ \\
4 & $(0,1,0)$, $p=1.0 \times 10^{-5}$ & $(1,0,0)$, $\phi=1.0\times10^{-6}$ & $1.0\times10^{-6}$ \\
\hline
\multicolumn{4}{c}{\textbf{Late time: $t = 3.0 \times 10^6$ s}} \\
\hline
\textbf{Rank} & \textbf{By probability} & \textbf{By flux} & \textbf{Flux (s$^{-1}$)} \\
\hline
1 & $(0,1,9122)$, $p=0.910$ & $(0,1,9121)$, $\phi=8.4\times10^{-5}$ & $8.4\times10^{-5}$ \\
2 & $(1,0,0)$, $p=0.081$ & $(0,1,2697)$, $\phi=2.0\times10^{-5}$ & $2.0\times10^{-5}$ \\
3 & $(0,1,9121)$, $p=8\times10^{-4}$ & $(0,1,4298)$, $\phi=3.1\times10^{-6}$ & $3.1\times10^{-6}$ \\
4 & $(0,1,9120)$, $p=2\times10^{-4}$ & $(0,1,4755)$, $\phi=3.2\times10^{-7}$ & $3.2\times10^{-7}$ \\
\hline
\end{tabular}
\end{table}

At early times ($t \sim 10^5$ s), the system exhibits a clear quasi-bimodal distribution with significant mass at both $(1,0,0)$ and the advancing front. As time progresses, probability mass transfers from the trapped state to the wave front, with $p(1,0,0)$ decreasing from 0.905 to 0.081 while $p(0,1,C_{\max})$ increases from 0.095 to 0.910 over the observed time range. The state space grows continuously from 4 to over 9000 states to accommodate the advancing front, with $C_{\max}$ increasing approximately linearly at rate $k_2 \approx 0.1$ s$^{-1}$ once $B=1$ is established.

Table~\ref{tab:prob_flux_rankings} reveals the critical disconnect between probability and flux rankings. At $t=1.0\times10^3$ s (shown with flux tolerance $\epsilon=10^{-3}$ to capture early dynamics), while $(1,0,0)$ dominates by probability with $p=0.999$, state $(0,1,2)$ at the wave front carries the highest flux of $6.3\times10^{-5}$ s$^{-1}$ despite having probability $p=6.3\times10^{-4}$—a flux-to-probability ratio of 100. State $(0,1,1)$ exhibits a similar pattern with flux $3.7\times10^{-5}$ s$^{-1}$ and probability $3.7\times10^{-4}$, while the bottleneck connector $(0,1,0)$ with $p=1.0\times10^{-5}$ carries flux comparable to the dominant state. At $t=3.0\times10^6$ s, this disconnect becomes even more pronounced: state $(0,1,9121)$ ranks third by probability with $p=8\times10^{-4}$ but first by flux with $\phi=8.4\times10^{-5}$ s$^{-1}$. Similarly, states $(0,1,2697)$ and $(0,1,4298)$ appear nowhere in the top probability rankings yet carry flux values of $2.0\times10^{-5}$ and $3.1\times10^{-6}$ s$^{-1}$, respectively. These states form the critical bottleneck chain connecting the trapped region at $(1,0,0)$ to the active wave front at $(0,1,9122)$.

The bottleneck pathway flux $\phi \approx k_1 \cdot p(1,0,0)$ decreases proportionally as mass depletes from the initial state, ranging from $9.0 \times 10^{-7}$ s$^{-1}$ at early times down to $8.4 \times 10^{-8}$ s$^{-1}$ at $t = 3 \times 10^6$ s. Despite this decreasing flux magnitude, the intermediate states maintain flux values that are orders of magnitude larger than their probabilities, as quantified in Table~\ref{tab:prob_flux_rankings}.

Figure~\ref{fig:flux_vs_naive} compares flux-based pruning against probability-only pruning. The flux-based method correctly preserves intermediate states despite their negligible probability, maintaining network connectivity and capturing the transition dynamics. State space growth is adaptive and continuous, expanding to accommodate the advancing wave front. The adaptive time step $\delta t$ begins at $\sim 10^5$ s and decreases to $\sim 10^3$ s during rapid transitions before stabilizing.

In contrast, probability-only pruning removes all states except $(1,0,0)$ immediately, creating a reducible generator where the initial state becomes absorbing. This severs the connectivity pathway entirely, causing all mean values to remain at their initial conditions: $\langle A \rangle = 1$, $\langle B \rangle = 0$, $\langle C \rangle = 0$ for all time. This method incorrectly predicts zero production of species $C$ when the correct answer shows $\langle C \rangle$ growing linearly with time to reach thousands by $t = 10^6$ s.

This example demonstrates that bottleneck states are essential for preserving network ergodicity and accurate long-time dynamics. Flux-based pruning correctly identifies and retains these states even when they carry negligible mass, while probability-based thresholds incorrectly remove them. The method successfully maintains connectivity across state spaces ranging from single-digit to thousands of states, adapting the representation as the distribution evolves.

\paragraph{Expansion strategy.}
With only $R=2$ reactions, stoichiometric expansion adds at most two neighbors per active state per step, so state-space growth is naturally bounded.  We use \emph{stoichiometric expansion} throughout; the flux-adaptive step size naturally limits to $\sim 1/k_2 = 10$\,s during the bottleneck phase, keeping expansion pace-matched with the probability wavefront.

\subsubsection{Experimental Error Analysis}
\begin{figure}[htbp]
\centering
\includegraphics[width=\textwidth]{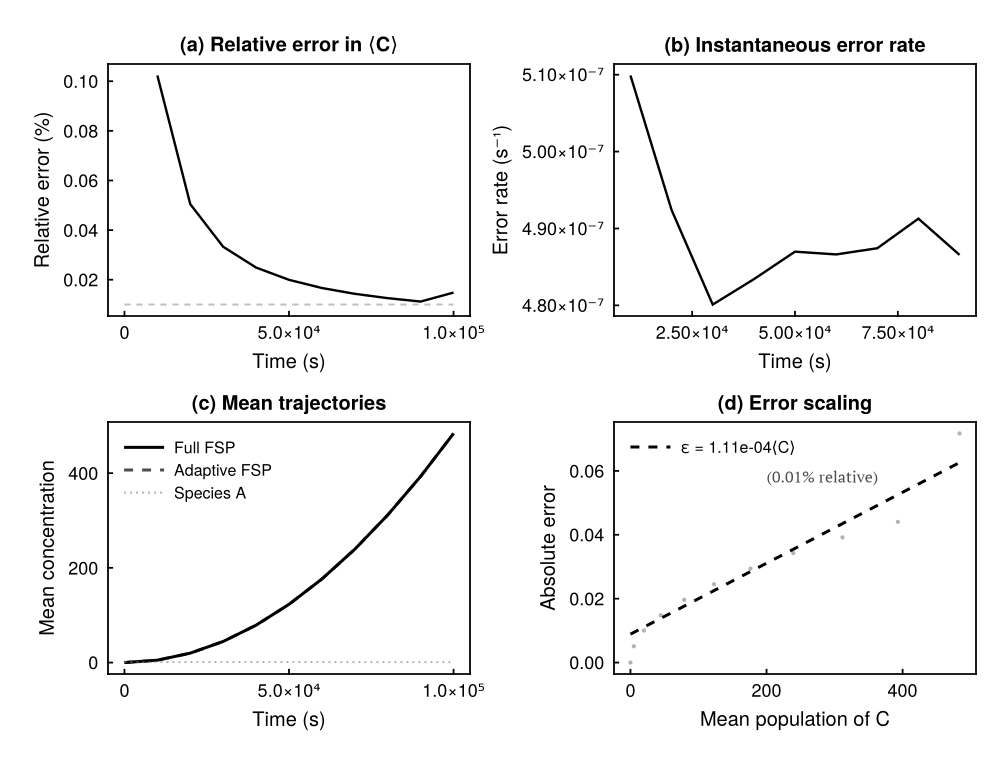}
\caption{Error accumulation analysis over $10^5$ seconds with flux tolerance $\varepsilon_{\mathrm{flux}} = 10^{-6}$. (a) Relative error in mean of species $C$ decays from $0.1\%$ to $0.015\%$ as the distribution spreads. (b) Instantaneous error rate remains approximately constant at $4.9 \times 10^{-7}$ s$^{-1}$ throughout the stable region. (c) Mean trajectories for species $C$ (full FSP vs.\ adaptive) and species $A$ (reference). (d) Absolute error scales linearly with mean population: $\epsilon \approx 1.11 \times 10^{-4} \langle C \rangle$, corresponding to $0.01\%$ relative error.}
\label{fig:error_accumulation}
\end{figure}

The slowly expanding state-space of the bottleneck system allows us to experimentally compare the flux-aware adaptive FSP against the full FSP method over extended time periods. Because the bottleneck dynamics confine probability mass to a manageable region, we can compute the full FSP solution as a reference and directly measure the accuracy of our adaptive approach. We validate Corollary~\ref{cor:global-error} by comparing the adaptive FSP (quantile + flux-preserving pruning) against an exact FSP on a fixed state space $\{0,\ldots,2\} \times \{0,\ldots,2\} \times \{0,\ldots,10000\}$. The exact solution evaluates the matrix exponential at times $t \in \{0, 10^4, 2\times10^4, \ldots, 10^5\}$ s, while the adaptive method uses $\delta t = 10$ s steps with dynamic truncation.

Figure~\ref{fig:error_accumulation} shows error behavior over the full time interval. Panel (a) demonstrates that relative error in $\langle C \rangle$ decreases from $0.1\%$ to $0.015\%$ as the distribution spreads—the absolute error grows with the mean, but represents a shrinking fraction of the population. Panel (b) reveals constant instantaneous error rate of approximately $4.9 \times 10^{-7}$ s$^{-1}$, confirming the additive error accumulation predicted by Corollary~\ref{cor:global-error}. Panel (c) shows visually indistinguishable mean trajectories for species $C$. Panel (d) establishes linear error scaling $\epsilon \approx 1.11 \times 10^{-4} \langle C \rangle$, yielding $0.01\%$ relative error maintained across three orders of magnitude in population.

\begin{table}[htbp]
\centering
\caption{Distribution errors at $t = 10^5$ s.}
\label{tab:distribution_errors}
\begin{tabular}{lccc}
\hline
\textbf{Metric} & \textbf{Species $A$} & \textbf{Species $B$} & \textbf{Species $C$} \\
\hline
Mean (Adaptive) & $0.9903$ & $9.71 \times 10^{-3}$ & $474.6$ \\
Mean (Exact) & $0.9900$ & $9.95 \times 10^{-3}$ & $474.7$ \\
Absolute Error & $2.5 \times 10^{-4}$ & $2.5 \times 10^{-4}$ & $0.072$ \\
Relative Error & $0.025\%$ & $2.5\%$ & $0.015\%$ \\
\hline
\end{tabular}
\end{table}

Table~\ref{tab:distribution_errors} quantifies final-time accuracy. Species $C$ achieves $0.015\%$ relative error with absolute error $0.072$—the linear scaling $\epsilon \propto \langle C \rangle$ ensures errors remain proportional to population size. The adaptive state space contains $10{,}003$ states versus the exact method's $45{,}009$ states ($4.5\times$ reduction), though the adaptive space grows linearly in time as $C$ accumulates. Flux-based pruning correctly preserves low-probability bottleneck states at small $C$ values that remain connectivity-critical even as the distribution bulk moves to higher values. Aggressive probability pruning would sever these pathways and destroy long-time accuracy.

The results validate the error bound as $\mathcal{O}(10^{-4})$ relative error in means across $10^4$ time steps, constant error rate $\sim 5 \times 10^{-7}$ s$^{-1}$, and additive rather than multiplicative accumulation due to generator contraction.

\subsection{Stochastic Toggle Switch Model}

\begin{figure}[!htbp]
    \centering
    \includegraphics[width=\linewidth]{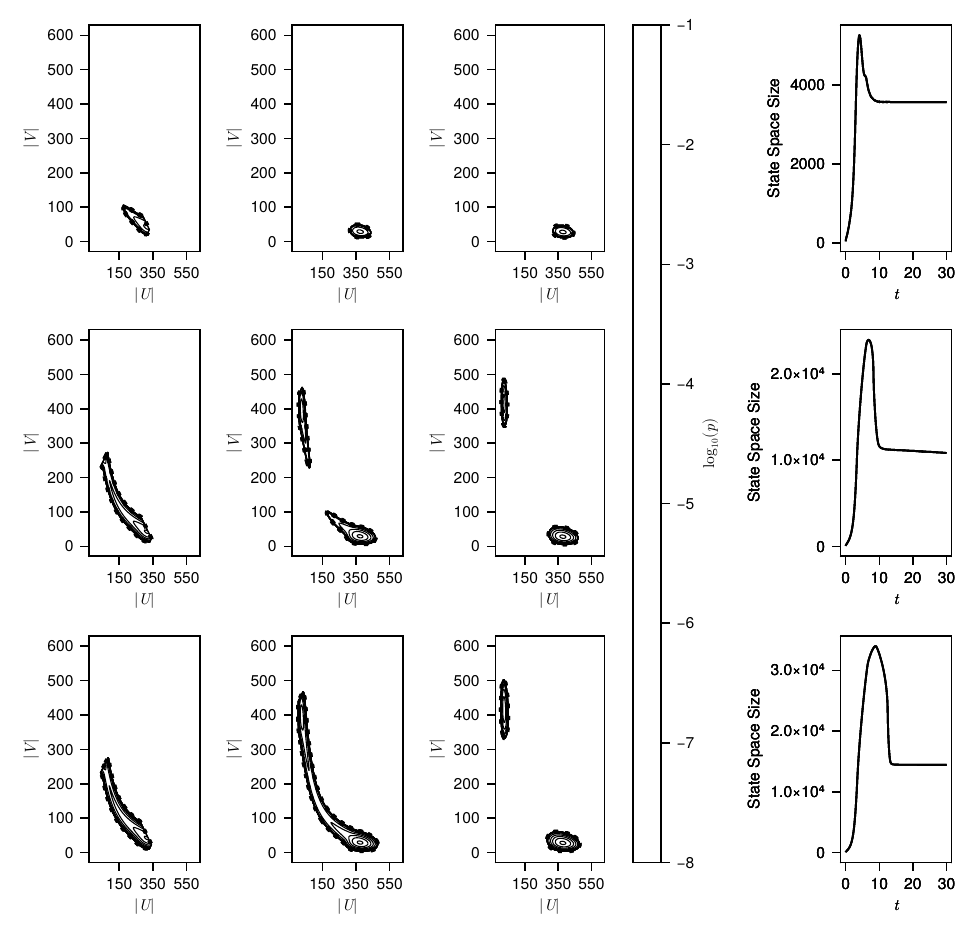}
    \caption{
    Joint probability contours of the stochastic toggle switch at three representative time points (\(t\approx 2, 15, 30\)). 
    Columns correspond to time, while rows correspond to different adaptive--FSP configurations:
    (top) pruning mass \(0.3\) with no flux filter;
    (middle) pruning mass \(0.3\) with flux tolerance \(10^{-7}\);
    (bottom) pruning mass \(0.001\) with no flux filter.
    Black contour lines indicate probability levels, and dotted curves indicate the truncation boundary.
    The right column shows the corresponding state-space size trajectories.
    }
    \label{fig:toggle_results}
\end{figure}

The stochastic toggle switch models a genetic regulatory network where two proteins ($U$ and $V$) mutually repress each other via Hill-type kinetics~\cite{tian_stochastic_2006}. The forward reactions incorporate cooperativity with Hill coefficient $n=3$ to model mutual repression, while the reverse reactions represent degradation. Table~\ref{table:toggle} details the reaction network.

\begin{table}[!htbp]
\centering
\begin{tabular}{|c|c|c|}
\hline
\textit{Reaction} & \textit{Reaction Equation} & \textit{Parameter Values} \\
\hline
1 & \(\emptyset \xrightarrow{\eta\left(\alpha_1 + \frac{\beta_1K_1^3}{K_1^3+V^3}\right)} U\) & $\eta = 1.0$, $\alpha_1 = 20$, $\beta_1 = 400$, $K_1 = 100$ \\
2 & \(U \xrightarrow{d_1+\frac{s\gamma}{1+s}} \emptyset\) & $d_1 = 1.0$, $s = 0.1$, $\gamma = 1.0$ \\
3 & \(\emptyset \xrightarrow{\eta\left(\alpha_2 + \frac{\beta_2K_2^3}{K_2^3+U^3}\right)} V\) & $\eta = 1.0$, $\alpha_2 = 20$, $\beta_2 = 400$, $K_2 = 100$ \\
4 & \(V \xrightarrow{d_2} \emptyset\) & $d_2 = 1.0$ \\
\hline
\end{tabular}
\caption{Reactions and parameter values for the stochastic toggle switch model~\cite{tian_stochastic_2006, Dinh_2020}. Forward reactions incorporate Hill-type kinetics with cooperativity coefficient $n=3$ to capture mutual repression; reverse reactions represent degradation.}
\label{table:toggle}
\end{table}

The system exhibits bistability as probability mass accumulates in two spatially separated regions corresponding to high expression of one protein and low expression of the other. Noise-driven transitions between these stable states occur over long timescales, requiring the adaptive method to simultaneously maintain multiple disconnected regions of state space while tracking rare switching events.

We initialize at $(U, V) = (85, 5)$ with reaction rate parameters taken from~\cite{Dinh_2020} and volume scaling factor $\eta = 100$. The simulation evolves over $t \in [0, 30]$ with initial time step $\delta t = 0.05$. This setup produces a bimodal distribution by the final time, with stable peaks near $(U, V) \approx (90, 10)$ and $(10, 90)$.

Figure~\ref{fig:toggle_results} examines the evolution of the joint distribution under three variants of the adaptive FSP scheme.  
In all cases, the system is initialized at \((U,V)=(85,5)\) and simulated to \(t=30\).  
Each column of the figure shows contour lines of \(\log_{10}(p)\) at three representative times, and each row corresponds to one truncation scheme.

With pruning mass \(0.3\) and no flux filter (top row), the state space expands rapidly after initialization, resolving the dominant drift into the lower metastable well.  
However, the truncation boundary is somewhat irregular and tends to either lag behind or overshoot the motion of the probability mass.  
This behavior reflects the fact that pruning is governed solely by a coarse quantile threshold as many intermediate transition states fall below the quantile cutoff and may be removed prematurely, while other regions are retained longer than necessary.

Introducing a flux tolerance of \(10^{-7}\) while keeping the pruning mass at \(0.3\) produces a dramatically more stable truncation (middle row).  
Here the boundary follows the outermost probability contours much more tightly, especially during the intermediate regime (\(t\approx 10\)--\(20\)) when the distribution stretches along the switching manifold connecting the two wells.  
The flux constraint protects states that carry even a small outward probability flow, preventing the removal of structurally important low-probability connectors.  
The resulting state space is noticeably smaller than in the quantile-only case, while still preserving the correct geometric structure of the distribution.

Reducing the pruning mass to \(0.001\) without a flux filter (bottom row) results in the most conservative truncation.  
The boundary expands to include a wide region of low-probability states, producing the largest active state space among the three experiments.  
Although this approach avoids under-truncation entirely, it incurs a substantial computational cost and offers little improvement over the flux-filtered configuration. Across all three settings, the adaptive FSP framework correctly resolves the emergence and separation of the two metastable wells. The comparison demonstrates the role of the flux tolerance as it preserves accuracy in the thin transition region without requiring an excessively small pruning mass, leading to a substantially more efficient state-space representation.

\subsection{Oscillatory Dynamics in the Oregonator Model}

We examine the Oregonator model, a reduced representation of the Belousov-Zhabotinsky oscillating chemical reaction developed by Field and Noyes \cite{field_noyes}. This three-species system captures the essential autocatalytic and inhibitory feedback mechanisms that generate sustained chemical oscillations. The Oregonator serves as a stringent test for numerical methods due to its combination of multiple timescales, relaxation oscillations, and sensitivity to parameter variations. The model consists of five elementary reactions involving three intermediate species: $X$ (HBrO$_2$), $Y$ (Br$^-$), and $Z$ (Ce$^{4+}$). The complete reaction network is presented in Table \ref{table:oregonator}. 

The reaction network exhibits three distinct dynamical phases. The autocatalytic reaction (R3) drives explosive growth of $X$ when inhibitor $Y$ is depleted. The quadratic termination (R4) prevents unbounded growth at high $X$ concentrations. The recovery reaction (R5) slowly regenerates $Y$ from $Z$, resetting the cycle. Following Gillespie's methodology \cite{Gillespie1977}, we determine rate constants by analyzing the deterministic steady state. At equilibrium, the net flux for each species vanishes:
\begin{align}
\frac{dN_X}{dt} &= k_1 N_Y - k_2 N_X N_Y + k_3 N_X - 2k_4 N_X^2 = 0, \\
\frac{dN_Y}{dt} &= -k_1 N_Y - k_2 N_X N_Y + k_5 N_Z = 0, \\
\frac{dN_Z}{dt} &= k_3 N_X - k_5 N_Z = 0.
\end{align}
We specify the steady-state populations as $(N_X^*, N_Y^*, N_Z^*) = (500, 1000, 2000)$ molecules, placing the system in the oscillatory regime. From the third equation, $k_3 N_X^* = k_5 N_Z^*$, which implies $\mu_3 = \mu_5$, where $\mu_i = k_i N_i^*$ denotes the steady-state flux through reaction $i$. Defining the characteristic fluxes $\mu_1^* = k_1 N_Y^* = 2000$ s$^{-1}$ and $\mu_2^* = k_2 N_X^* N_Y^* = 50000$ s$^{-1}$, we obtain the rate constants summarized in Table~\ref{table:oregonator}.

\begin{table}[!htbp]
\centering
\caption{Oregonator reaction network with mass-action kinetics and rate constants determined by steady-state flux balance at $(N_X^*, N_Y^*, N_Z^*) = (500, 1000, 2000)$ molecules.}
\label{table:oregonator}
\begin{tabular}{|c|l|c|l|}
\hline
\textit{Rxn} & \textit{Reaction} & \textit{Rate Constant} & \textit{Propensity Function}\\
\hline
1 & $Y \xrightarrow{k_1} X$ & $k_1 = 2.0$ s$^{-1}$ & $k_1 N_Y$ \\
2 & $X + Y \xrightarrow{k_2} \varnothing$ & $k_2 = 0.1$ molec$^{-1}$s$^{-1}$ & $k_2 N_X N_Y$ \\
3 & $X \xrightarrow{k_3} 2X + Z$ & $k_3 = 104$ s$^{-1}$ & $k_3 N_X$ \\
4 & $2X \xrightarrow{k_4} \varnothing$ & $k_4 = 0.016$ molec$^{-1}$s$^{-1}$ & $k_4 \frac{N_X(N_X-1)}{2}$ \\
5 & $Z \xrightarrow{k_5} Y$ & $k_5 = 26$ s$^{-1}$ & $k_5 N_Z$ \\
\hline
\end{tabular}
\end{table}

These parameters yield a period of approximately 0.85 time units with timescale separation spanning three orders of magnitude, specifically, the autocatalytic amplification (R3) at $k_3 = 104$ s$^{-1}$ drives explosive bursts, recovery (R5) operates at intermediate rate $k_5 = 26$ s$^{-1}$, and initiation (R1) proceeds slowly at $k_1 = 2.0$ s$^{-1}$. At steady-state concentrations, the inhibitory consumption (R2) dominates with flux $\mu_2^* = 50000$ s$^{-1}$.

At the steady-state initial condition, the total flux is $\Phi_{\mathrm{total}} \approx 158{,}000\,\text{s}^{-1}$, and with $\varepsilon_{\Delta t} = 1$ depth $d=1$ stoichiometric expansion suffices.

\paragraph{Expansion strategy.}
The Oregonator's oscillatory dynamics create a problem-specific constraint on the expansion strategy.  Stoichiometric expansion adds all five stoichiometric neighbors per active state; with $R = 5$ reactions and quantile parameter $\alpha = 0.1$, each step retains $0.9 \times 5 = 4.5$ net new states per existing state on average, causing exponential growth of $|J|$ that renders the simulation infeasible within 200 iterations (Section~\ref{sec:comparison}).  We therefore use \emph{SSA expansion}: at each step, one reaction direction is sampled per active state proportionally to its propensity and the resulting neighbor added to $J$.  At the steady-state initial condition the three dominant reactions (R2, R3, R5) each carry approximately $32\%$ of the total propensity, so SSA expansion concentrates the state space on the dynamically relevant frontier while keeping $|J| \lesssim 1{,}000$.  This is a system-specific choice; other problems with fewer or more uniform reaction channels use stoichiometric expansion.

Figure \ref{fig:oregonator_fsp_results} presents results over four complete oscillation cycles. The adaptive time step (panel a) varies over two orders of magnitude from $\Delta t_{\min} \approx 0.005$ during autocatalytic bursts to $\Delta t_{\max} \approx 0.04$ during recovery, tracking the instantaneous system flux and local stiffness.

The state space size (panel b) demonstrates excellent compression efficiency. Based on the observed mean trajectory ranges in panel c, a rectangular bounding box would require approximately $4000 \times 7000 \times 11000 \approx 3 \times 10^{11}$ states. Yet the adaptive method maintains only 50-2000 active states throughout the simulation: $|S| \approx 50$ during the recovery phase when the distribution concentrates tightly, expanding to $|S| \approx 2000$ during autocatalytic bursts when Poisson fluctuations spread the distribution. This achieves compression ratio $\approx 10^8:1$ while maintaining rigorous flux-based error control.

The mean trajectories (panel c) exhibit characteristic relaxation oscillations. Species $X$ spikes from near-zero to 6000 molecules in $< 0.1$ time units, consistent with the fast autocatalytic timescale $1/k_3$. Species $Y$ depletes rapidly during bursts via reactions R1 and R2, then recovers slowly via R5. Species $Z$ accumulates to 10000 molecules during each burst and decays with characteristic time $1/k_5 \approx 0.04$ time units. The phase relationships reflect the feedback structure where $Y$ depletion enables $X$ growth, $X$ growth produces $Z$, and $Z$ accumulation regenerates $Y$.

The phase space projections (panels d-f) reveal the limit cycle attractor. The $X$-$Y$ plane (panel d) shows the relaxation oscillation "knee": slow recovery where $Y$ increases with $X$ near zero, followed by rapid burst when $X$ spikes and $Y$ depletes. The $X$-$Z$ plane (panel e) displays an elliptical cycle with $Z$ lagging $X$ due to production-decay dynamics. The $Y$-$Z$ plane (panel f) demonstrates phase lag between inhibitor depletion and product accumulation. Counter-clockwise circulation in all projections confirms stable periodicity despite stochastic dynamics.
\begin{figure}[!htbp]
    \centering
    \includegraphics[width=\linewidth]{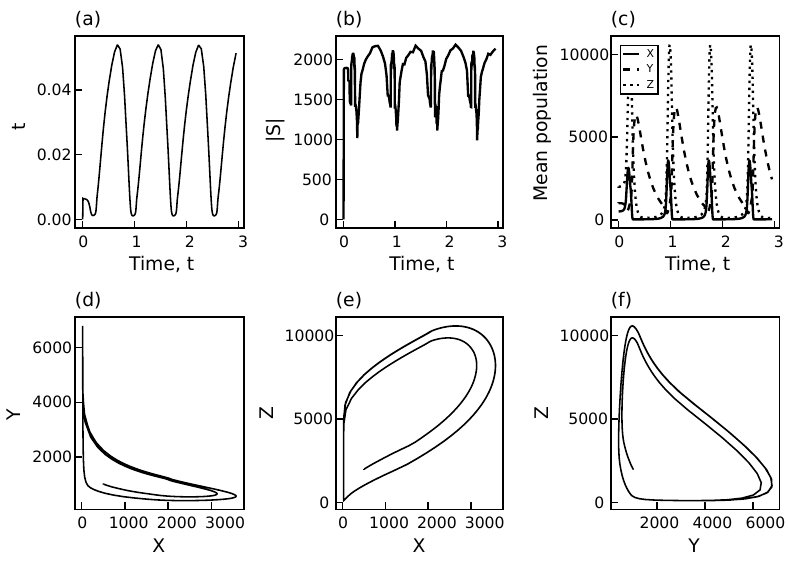}
    \caption{Adaptive FSP simulation of the Oregonator system. (a) Adaptive time step variation tracking instantaneous system stiffness. (b) Dynamic state space size showing compression ratio of $\sim 10^8:1$ relative to the theoretical state space of $3.6 \times 10^{11}$ states. (c) Mean population trajectories for species $X$ (solid), $Y$ (dash-dot), and $Z$ (dotted). (d-f) Phase space projections revealing the limit cycle attractor with intrinsic stochastic fluctuations.}
    \label{fig:oregonator_fsp_results}
\end{figure}

\subsection{Stiff Dynamics in the Robertson Autocatalytic System}

The Robertson system provides a severe test of numerical methods through its extreme stiffness, with reaction rates spanning nine orders of magnitude. Originally formulated by Robertson \cite{Robertson1967} for testing ODE solvers, this autocatalytic network has become a standard benchmark for stochastic simulation algorithms. The system involves three species undergoing the reactions shown in Table \ref{table:robertson}, with rate constants $k_1 = 0.04$ s$^{-1}$, $k_2 = 3 \times 10^7$ molecule$^{-1}$s$^{-1}$, and $k_3 = 10^4$ molecule$^{-1}$s$^{-1}$. This yields a stiffness ratio of $k_2/k_1 \approx 7.5 \times 10^8$, creating a temporal scale separation where the fast autocatalytic reaction R2 operates nearly a billion times faster than the slow initiation reaction R1.

\begin{table}[!htbp]
\centering
\caption{Robertson autocatalytic reaction system with extreme rate constant disparity}
\label{table:robertson}
\begin{tabular}{|c|l|c|l|}
\hline
\textit{Rxn} & \textit{Reaction} & \textit{Rate Constant} & \textit{Propensity Function} \\
\hline
1 & $A \xrightarrow{k_1} B$ & $k_1 = 0.04$ s$^{-1}$ & $a_1 = k_1 N_A$ \\
2 & $2B \xrightarrow{k_2} B + C$ & $k_2 = 3 \times 10^7$ molec$^{-1}$s$^{-1}$ & $a_2 = k_2 \frac{N_B(N_B-1)}{2}$ \\
3 & $B + C \xrightarrow{k_3} A + C$ & $k_3 = 10^4$ molec$^{-1}$s$^{-1}$ & $a_3 = k_3 N_B N_C$ \\
\hline
\end{tabular}
\end{table}

\begin{figure}[!htbp]
    \centering
    \includegraphics[width=\linewidth]{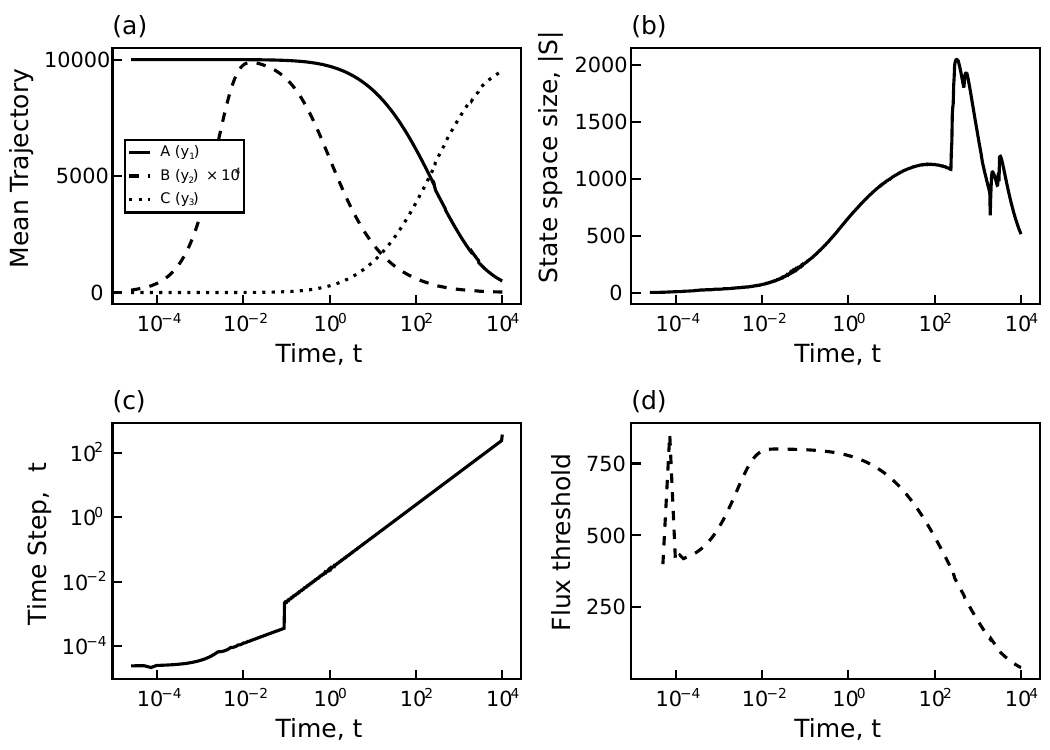}
    \caption{Adaptive FSP simulation of the Robertson system over eight decades of time ($10^{-5}$ to $10^5$ s). (a) Mean population trajectories for species $A$ (solid), $B$ (dashed), and $C$ (dotted) showing three-phase dynamics: slow initiation, explosive autocatalysis, and protracted equilibration. (b) Dynamic state space size achieving $\sim 10^4:1$ compression relative to the theoretical space of $3 \times 10^7$ states. (c) Adaptive time step spanning seven orders of magnitude in response to changing flux. (d) Total flux tracking system activity and guiding both time step and state space adaptation.}
    \label{fig:robertson_results}
\end{figure}

The system conserves total mass: $N_{\text{tot}} = N_A + N_B + N_C = 10000$ molecules. We initialize with $(N_A, N_B, N_C) = (10000, 0, 0)$, triggering characteristic three-phase dynamics that span eight decades of time. Figure~\ref{fig:robertson_results} demonstrates how the adaptive FSP method navigates this extreme stiffness through coordinated adjustment of time step, state space size, and flux monitoring.

At initialization, only R1 is active, giving $\Phi_{\mathrm{total}} = k_1 N_A = 400\,\text{s}^{-1}$.
As $B$ accumulates, R2 rapidly comes to dominate $\Phi_{\mathrm{total}}$ throughout the simulation.
With $\varepsilon_{\Delta t} = 0.01$ and $R = 3$ reactions, depth $d=1$ stoichiometric expansion suffices.

\paragraph{Expansion strategy.}
With $R=3$ reactions, stoichiometric expansion adds at most three neighbors per active state, keeping state-space growth tractable.  We use \emph{stoichiometric expansion}; this drives $\Phi_{\mathrm{out}} \to 0$ at each step, so the time-step bound $\|\boldsymbol{\varepsilon}\|_1 \lesssim \varepsilon_{\Delta t} = 0.01$ holds tightly and the step size is controlled entirely by the flux magnitude.

During the initial phase ($t < 10^{-2}$ s), species $A$ converts slowly to $B$ through reaction R1 at rate 0.04 s$^{-1}$. Panel (a) shows $A$ decreasing from 10000 molecules while $B$ (scaled $\times 10^4$) accumulates slowly and $C$ remains near zero. Time steps (panel c) remain near $\Delta t \approx 10^{-4}$ s, matching the dominant flux timescale. The state space (panel b) contains fewer than 50 states, reflecting the narrow distribution around the nearly deterministic trajectory. The flux threshold (panel d) hovers around 400 s$^{-1}$, dominated by reaction R1 with propensity $k_1 N_A \approx 400$ s$^{-1}$. This continues until $B$ reaches approximately 0.3 molecules near $t \approx 10^{-2}$ s.

The system then enters explosive transient phase ($10^{-2}$ s $< t < 10^2$ s) driven by reaction R2. Time steps contract to resolve the explosive dynamics, while the state space expands to over 2000 states as autocatalytic bursts amplify stochastic fluctuations scaling as $\sqrt{N_B}/N_B$. The flux threshold peaks near 800 s$^{-1}$. Using conservation law $N_A + N_B + N_C = 10000$ with observed ranges $N_A \in [0, 10000]$ and $N_B \in [0, 10]$, a rectangular bounding box requires approximately $10^5$ states. The adaptive method maintains only 2000 states at peak, achieving compression ratio $50:1$.

Following the transient ($t > 10^2$ s), reaction R3 slowly regenerates $A$ from $B$ and $C$, driving toward quasi-equilibrium near $(N_A, N_B, N_C) \approx (10000, 0, 0)$. Time steps increase progressively to $\Delta t \approx 10^2$ s by $t = 10^5$ s, spanning six orders of magnitude and the state space contracts to 500-1000 states as the distribution concentrates near quasi-equilibrium. The flux threshold decreases to approximately 100 s$^{-1}$ by $t = 10^4$ s, confirming approach to steady state.

\subsection{Performance Evaluation of Matrix Construction}

In this subsection we evaluate the matrix construction strategy introduced in Section~\ref{sec:matrix_construction}. The forward enumeration approach constructs the generator by iterating through reactions for each state, directly computing destination states as $y = x + \nu_k$. We benchmark this strategy against a standard implementation that uses backward state expansion on the Oregonator, ROBER, and toggle switch models.

Figure~\ref{fig:benchmark} and Table~\ref{tab:benchmark} quantify the performance difference on the ROBER model using $10^4$ trials via BenchmarkTools.jl~\cite{BenchmarkTools.jl-2016}. The benchmark captures a representative time step at $t \approx 170$ where the state space expands from 1106 to 1431 states. Standard construction exhibits median execution time of 6.95~ms with moderate variability (std. dev. 2.33~ms), while forward enumeration achieves 0.54~ms median time (std. dev. 1.12~ms), a 12.9$\times$ speedup. The cumulative distribution functions reveal that forward enumeration completes $>99\%$ of trials by 3~ms, whereas standard construction requires $>10$~ms for comparable completion rates.

\begin{figure}[!htbp]
\centering
\includegraphics[width=\textwidth]{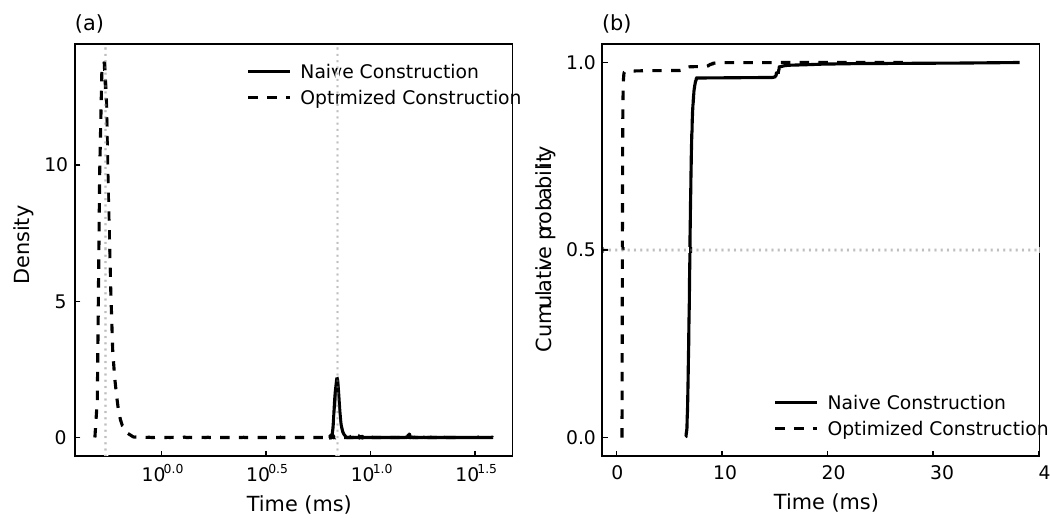}
\caption{Performance comparison of standard construction (solid line) and forward enumeration (dashed line) at $t \approx 170$ with state space sizes 1106 $\to$ 1431. (a) Probability density functions on logarithmic scale with median markers at 6.95~ms and 0.54~ms. (b) Cumulative distribution functions showing 12.9$\times$ speedup: forward enumeration completes 50\% of trials by 0.54~ms versus 6.95~ms for standard construction.}
\label{fig:benchmark}
\end{figure}

\begin{table}[h]
\centering
\caption{Benchmark statistics for matrix construction methods on ROBER model at $t \approx 170$ (state space sizes 1106 $\to$ 1431).}
\label{tab:benchmark}
\begin{tabular}{lccc}
\hline
Method & Median (ms) & Mean (ms) & Std. Dev. (ms) \\
\hline
Standard construction & 6.95 & 7.38 & 2.33 \\
Forward enumeration & 0.54 & 0.71 & 1.12 \\
\hline
Speedup & 12.9$\times$ & 10.4$\times$ & --- \\
\hline
\end{tabular}
\end{table}
Figure~\ref{fig:speed_comparison} demonstrates cumulative impact across three benchmark systems. The Oregonator (panel a) exhibits oscillatory state space dynamics with cost spikes exceeding $10^{-2}$~s during rapid expansion phases under standard construction; forward enumeration maintains cost below $10^{-3}$~s throughout. The stiff ROBER system (panel b) shows pronounced cost fluctuations between $10^{-3}$~s and $10^{-1}$~s with standard construction, reduced to approximately $10^{-4}$~s with forward enumeration. The toggle switch (panel c) demonstrates the largest proportional gain, with forward enumeration reducing per-step cost from $\sim$$10^{0}$~s to below $10^{-1}$~s.

\begin{figure}[!htbp]
\centering
\includegraphics[width=\textwidth]{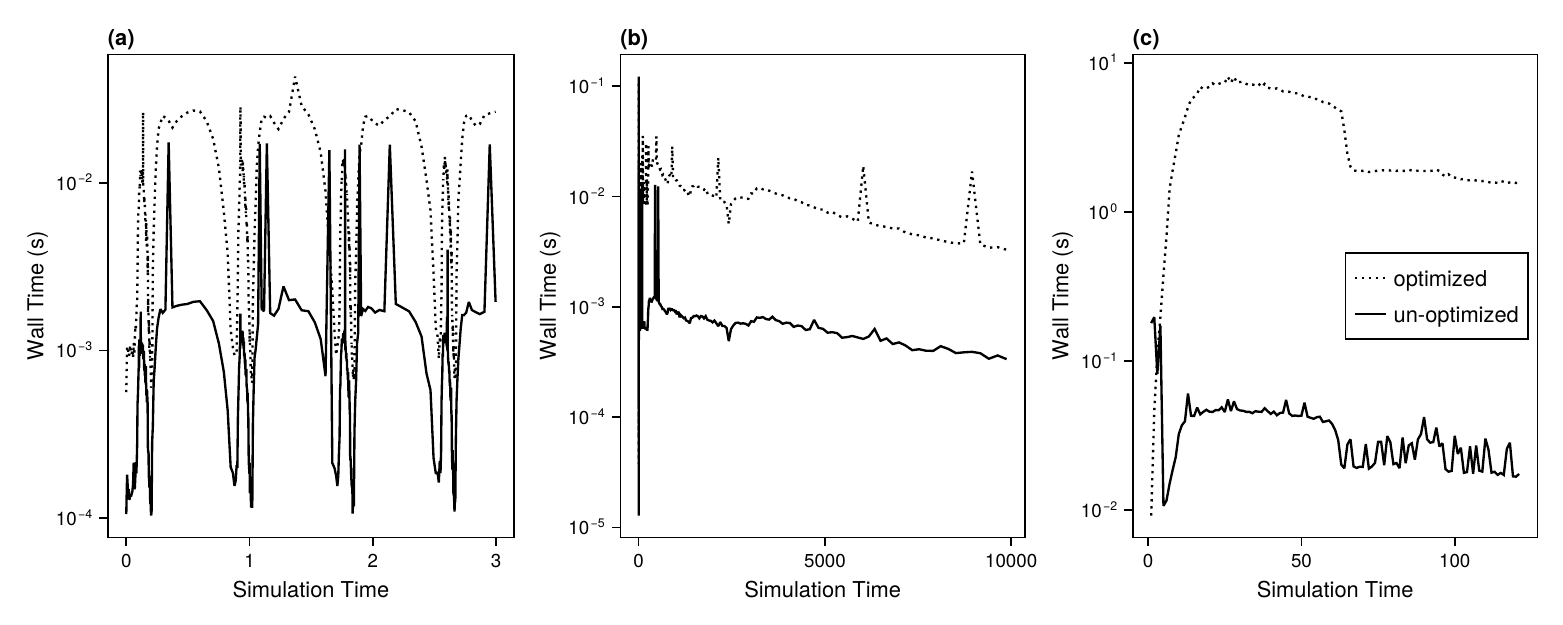}
\caption{Wall-clock time per integration step for FSP simulations with forward enumeration (solid) and standard construction (dotted). (a) Oregonator. (b) ROBER. (c) Toggle switch. Forward enumeration consistently outperforms standard construction across all systems.}
\label{fig:speed_comparison}
\end{figure}

Table~\ref{tab:walltime} summarizes total simulation times. The toggle switch achieves the largest speedup at 9.8$\times$ (364~s reduced to 37~s), benefiting from slower reaction timescales and compact state space geometry. The Oregonator and ROBER systems achieve 3.9$\times$ and 3.5$\times$ speedups respectively (4000~s $\to$ 1093~s and 10421~s $\to$ 3015~s). The consistent performance gains across diverse system types demonstrate that forward enumeration is the preferred construction method.

\begin{table}[h]
\centering
\caption{Total wall-clock times for complete FSP simulations.}
\label{tab:walltime}
\begin{tabular}{lccc}
\hline
System & Standard (s) & Optimized (s) & Speedup \\
\hline
Oregonator & 4000 & 1093 & 3.9$\times$ \\
ROBER & 10421 & 3015 & 3.5$\times$ \\
Toggle switch & 364 & 37 & 9.8$\times$ \\
\hline
\end{tabular}
\end{table}

\subsection{Comparison with Krylov-FSP-SSA}
\label{sec:comparison}

We compare with the Krylov-FSP-SSA algorithm of Vo and Sidje~\cite{Sidje2015}. This baseline shares the same expand--evolve--prune skeleton and Krylov matrix exponential integrator but differs in three key respects: (i) pruning uses a joint probability-and-derivative threshold (Algorithm~2, Step~7 of~\cite{Vo2017}: state $i$ is dropped only if $p_i < \texttt{droptol}$ \emph{and} $(\mathbf{A}\boldsymbol{p})_i < \texttt{droptol}'$, without flux-based protection), (ii) time stepping uses a mass-conservation accept/reject criterion (halving $\tau$ when retained mass falls below $1 - \varepsilon_{\mathrm{FSP}}(t+\tau)/t_f$) rather than the flux-adaptive rule $\Delta t = \varepsilon_{\Delta t}/\Phi$, and (iii) expansion is SSA-driven with $r$-step reachability enumeration. Figure~\ref{fig:unified_comparison} compares the two methods on the three benchmark systems; each exposes a qualitatively distinct failure mode. Krylov-FSP-SSA was run with $\varepsilon{=}10^{-2}$ and $r{=}1$ throughout, with $\tau_0{=}10^{-6}$ (Oregonator), $\tau_0{=}100$ (Bottleneck), and $\tau_0{=}0.1$ with the absorbing-form generator (Robertson). FP-FSP parameters follow the tolerance guidance of Section~\ref{sec:results}: $(\varepsilon_{\Delta t}{=}1,\,\alpha{=}0.1,\,\varepsilon_{\mathrm{flux}}{=}1)$ for the Oregonator; $(\delta t{=}10,\,\alpha{=}0.9,\,\varepsilon_{\mathrm{flux}}{=}10^{-12})$ for the Bottleneck; and $(\varepsilon_{\Delta t}{=}0.01,\,\alpha{=}0.4,\,\varepsilon_{\mathrm{flux}}{=}10^{-9})$ for Robertson.

\subsubsection{Oregonator.}
FP-FSP completes $t_f = 0.11$ in 36\,s with max $|\mathcal{S}| = 1{,}057$ (panel~(a)), while Krylov-FSP-SSA grows to 13{,}294 states and reaches only $t = 0.003$ (2.7\%) after 3{,}000 iterations and 74\,s. The explosion arises because the joint probability-and-derivative filter retains expanding boundary states with significant $(\mathbf{A}\boldsymbol{p})_i > \texttt{droptol}'$ even when $p_i$ is small; without flux-based pruning the active set grows monotonically. The $\tau$-lock at $\tau \approx 10^{-6}$ is fundamental: large propensities ($k_3 XZ \sim 10^5$) cause mass-conservation failures for any larger step, independent of $\varepsilon$, so parameter tuning cannot resolve it.

\subsubsection{Bottleneck.}
FP-FSP (panel~(b)) completes $t_f = 10^4$ in 1.0\,s with $\langle C \rangle = 4.978$ (${<}0.1\%$ error); the flux-adaptive step naturally limits to $\sim 1/k_2 = 10$\,s during the bottleneck phase, keeping depth-1 expansion pace-matched with the probability wavefront. Krylov-FSP-SSA ($\varepsilon=10^{-2}$, $\tau_{\mathrm{init}}=100$, $r=1$, $\mathtt{droptol}=10^{-10}$) also reaches $t_f$ in 7 accepted steps, but produces $\langle C \rangle \approx 0$ (${\gg}1{,}000\times$ underestimate). The failure is \emph{wavefront truncation}: the connector state $B{=}1$ is correctly retained ($p(B{=}1) \approx k_1 t \gg 10^{-10}$; both methods agree on $\langle B \rangle = 0.010$), but $\tau$ doubles after each accepted step (100\,$\to$\,3{,}700\,s) while $r=1$ adds only one new $C$-state per step. After 7 steps the projection covers only $C \leq 12$, while the true distribution of $C\mid B{=}1$ has spread over $[0,\,k_2 t_f] \approx [0, 1{,}000]$. With the absorbing-form generator, probability flux to $C \geq 13$ is absorbed into a sink; the retained mass within the 14-state projection is concentrated near $C{=}0$ and yields $\langle C \rangle \approx 0$. The mass-conservation criterion does not detect the inadequate coverage because the absorbed mass (${\approx}1\%$ by $t_f$) falls within the FSP tolerance. Tuning $(\varepsilon, \tau, r)$ over a wide range achieves at best $\langle C \rangle \approx 4.72$ (${\sim}5\%$ error) at comparable wall time to FP-FSP, because capping $\tau \leq 1/k_2$ forces ${\sim}1{,}000$ steps while recovering the efficiency advantage of large accepted steps requires $r \gtrsim k_2 \tau$, creating a fundamental conflict.

\subsubsection{Robertson.}
FP-FSP with the flux-adaptive step $\delta t = \varepsilon_{\Delta t}/\Phi$ tracks the non-monotonic burst-and-relax structure of the autocatalytic Robertson system (panel~(c)): $\delta t$ drops sharply during each burst as probability concentrates on high-flux states, then rises as the burst dissipates and $A$ depletes, peaking at $\approx 0.026$ during the equilibration phase ($t \gtrsim 3{,}000$). FP-FSP completes $t_f = 10^4$ in $662{,}472$ steps ($1{,}117$\,s) with $\langle A \rangle = 507$, $\langle C \rangle = 9{,}493$. Krylov-FSP-SSA with the correct absorbing-form generator (Algorithm~2 of~\cite{Vo2017}) incurs a 67\% rejection rate driven by FSP mass-conservation failures: the stiff $k_2 = 3{\times}10^7$ propensity causes the probability distribution to rapidly hit the projection boundary whenever $B{\geq}2$ states are populated, and the accepted step sizes $\tau$ fluctuate widely as the algorithm repeatedly halves $\tau$ to satisfy the FSP criterion. In 500 iterations ($1.9$\,s) it reaches only $t = 227.8$ ($2.3\%$ of $t_f$).

\begin{figure}[!htbp]
    \centering
    \includegraphics[width=\linewidth]{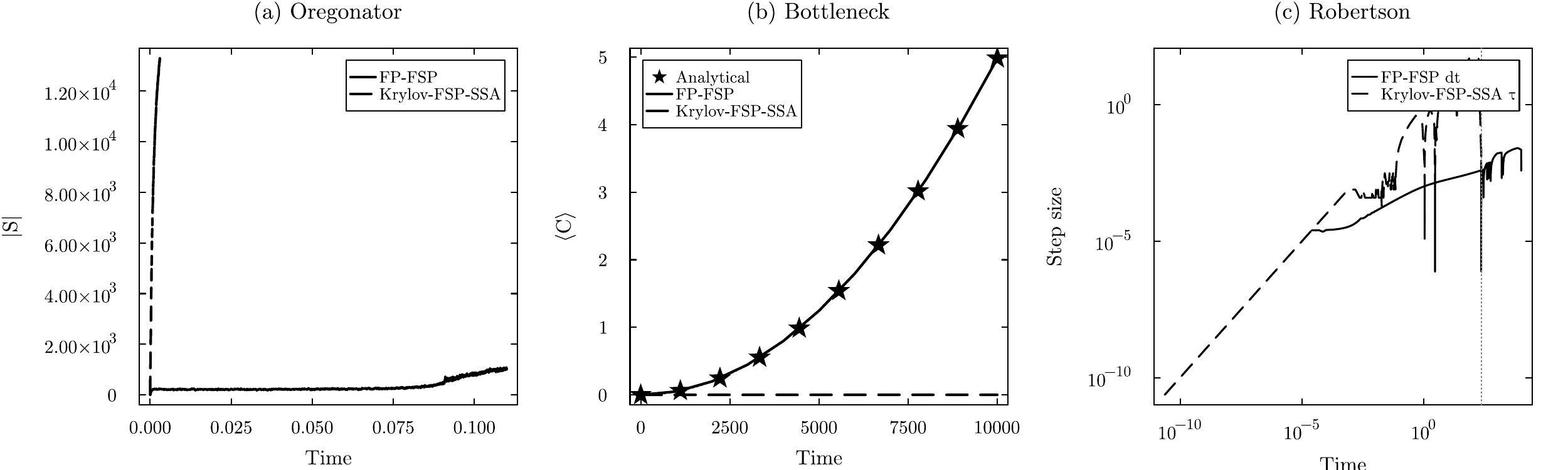}
    \caption{FP-FSP (solid) vs.\ Krylov-FSP-SSA (dashed) on three benchmark systems.
    (a)~Oregonator: state space size $|\mathcal{S}|$ vs.\ time. Krylov-FSP-SSA grows to ${\sim}13{,}300$ states; FP-FSP stays below 1{,}100 via flux-based pruning.
    (b)~Bottleneck ($A{\to}B{\to}B{+}C$, $t_f{=}10^4$): mean $\langle C \rangle$ vs.\ time with analytical reference (dotted, stars). FP-FSP tracks the exact solution; Krylov-FSP-SSA underestimates by $40{\times}$ due to wavefront truncation.
    (c)~Robertson ($t_f{=}10^4$): per-step $\delta t$ (FP-FSP, solid, full horizon) and accepted $\tau$ (Krylov-FSP-SSA, dashed; dotted vertical line marks $t{=}228$ where it stops) on a log scale. FP-FSP's non-monotonic $\delta t$ reflects repeated autocatalytic bursts; Krylov-FSP-SSA incurs 67\% rejections from $k_2{=}3{\times}10^7$ propensities and reaches only 2.3\% of $t_f$.}
    \label{fig:unified_comparison}
\end{figure}

\section{Discussion}
\label{sec:discussion}

The numerical results indicate that flux-based pruning can preserve network connectivity across a range of benchmark systems while maintaining competitive compression and accuracy. The comparison with the Krylov-FSP-SSA baseline in Section~\ref{sec:comparison} demonstrates that the two main contributions---flux-aware pruning and flux-adaptive time stepping---provide complementary robustness advantages on multiscale systems. The method targets two coupled difficulties in adaptive FSP: controlling the loss of important transitions when states are pruned, and selecting time steps that reflect the instantaneous activity level of the system. The flux-based pruning rule protects low-probability states that carry significant probability flow, and the flux-adaptive time-step selection aligns the numerical step size with the dominant rates in the truncated system.

The expansion rule for the proposed method is deliberately simple (fixed $r$-step expansion), and the pruning parameters $(\alpha, \varepsilon_{\mathrm{flux}})$ are fixed in time. This suffices for the test problems considered here, but more complicated networks may require additional structure in the adaptation strategy. For example, systems with strong directional drift or localized reaction zones could benefit from combining flux-based criteria with short trajectory bursts \cite{Dinh2016} to identify dynamically relevant regions before committing them to the FSP state space.

The present approach is particularly well suited to mass-conserving systems where the dynamically relevant region remains effectively bounded. Even in that setting, however, flux-aware pruning does not by itself prevent the active set from growing when probability mass is transported over long distances in state space. The bottleneck system in Section~\ref{sec:results} is an example: the distribution is quasi-bimodal, with mass slowly transferring from $(1,0,0)$ to $(0,1,C_{\max}(t))$ along a long chain of intermediate states that carry very low probability but high flux. Our flux-based criterion correctly retains this entire pathway for connectivity, but this also means many “bridge” states must remain active. In such situations, aggregation methods \cite{aggregation,peles,zhang_cao,Cao2016} and conservation-law-based reductions such as slack reactants \cite{slack} provide complementary tools as the intermediate bottleneck states are natural candidates to be lumped into a small number of macro-states while preserving the net flux across the bottleneck. Designing such hybrid schemes and maintaining a transparent error budget in the presence of aggregation remains an interesting problem for future work.

The choice of representation for the generator and the probability vector is another direction for extension. For the moderately sized state spaces in this work ($10^4$–$10^5$ states), Krylov-based exponential integrators \cite{Sidje1998} and sparse matrix representations are sufficient. For higher-dimensional systems, tensor-structured methods \cite{Kazeev2014,Dinh_2020} may be necessary. Extending flux diagnostics and pruning rules to tensor formats would require formulating flux quantities in terms of low-rank factors, which is nontrivial but may be essential for spatial or multicomponent models.

A more structural question is how the adaptive truncation perturbs the long-time behavior of the full chain. In this work we only use conductance and Cheeger-type ideas loosely to motivate flux preservation in Section~\ref{sec:flux-connectivity}, and our analysis focuses on finite-time $\ell^1$ error bounds via flux-based estimates and semigroup contraction. We do not yet track how spectral quantities (such as the spectral gap or mixing time) change as states are added and removed. A natural next step would be to make this connection rigorous by combining conductance bounds with ergodicity-coefficient techniques \cite{Rhodius2000Ergodicity, RHODIUS1997141} and perturbation bounds for Markov semigroups \cite{pazy_1983}, using flux information to control how much each pruning event can contract the Dobrushin coefficient. This would complement the finite-time, flux-based error analysis developed here by providing spectral stability guarantees for the truncated generators.

From an implementation perspective, several steps of the algorithm can be parallelized. Boundary expansion factorizes over reactions, matrix construction over columns, and Krylov iterations over sparse matrix–vector products. Distributed-memory implementations that partition the state space and communicate flux information along partition boundaries could extend the reach of the method to larger problems without changing the underlying analysis.

Finally, we have focused on time-homogeneous generators. Time-dependent propensities $\alpha_k(x,t)$, arising for example in driven systems or externally controlled networks, fit naturally into the flux-based framework: both the exit rates $w(x,t)$ and the total flux $\Phi_{\mathrm{total}}(t)$ become explicitly time dependent. The same error bounds apply on each step, provided the generator is frozen over that step, and the adaptive step-size rule still responds directly to the changing activity level. A more detailed analysis of time-dependent generators, and of how closely truncated semigroups track the long-time behavior of the full system, would complement the finite-time flux-based error bounds derived here.

\section{Conclusion}
\label{sec:conclusions}

We presented an adaptive FSP method that addresses fundamental challenges in stiff chemical kinetics through flux-based pruning to preserve network connectivity and flux-adaptive time stepping to handle varying activity levels across disparate timescales. The flux-based mechanism identifies and retains low-probability bottleneck states that carry significant flux, preventing catastrophic failures where probability-only truncation severs essential reaction pathways. We derived global error bounds showing that errors accumulate additively due to contraction properties of stochastic generators. Numerical experiments on stiff and metastable benchmark systems demonstrate that the method maintains accuracy with relative errors $\sim$0.01--0.02\% while using state spaces orders of magnitude smaller than exact approaches. A head-to-head comparison with Krylov-FSP-SSA reveals three qualitatively distinct failure modes---state-space explosion, wavefront truncation, and stiffness-induced $\tau$-lock---and confirms that flux-aware pruning and flux-adaptive time stepping address all three. The method successfully handles problems where probability-only pruning fails by removing connector states, though threshold selection requires problem-specific tuning based on system stiffness and timescale separation.

\section{Code Availability}
The implementation is publicly available as a Julia package at \url{https://github.com/AdityaDendukuri/DiscStochSim.jl}.

\section{Acknowledgments} We thank the anonymous reviewers for their valuable feedback. The suggestion to examine the bottleneck state example was particularly impactful in clarifying the core motivation for our flux-based approach. The first author also acknowledges Ricky Lee (UC Santa Barbara) for helpful discussions on the mathematical derivations.

\bibliographystyle{siamplain}
\bibliography{references}

@article{Gillespie1977,
  title={Exact stochastic simulation of coupled chemical reactions},
  author={Gillespie, Daniel T},
  journal={The Journal of Physical Chemistry},
  volume={81},
  number={25},
  pages={2340--2361},
  year={1977}
}

@article{CAO20083472,
title = {Slow-scale tau-leaping method},
journal = {Computer Methods in Applied Mechanics and Engineering},
volume = {197},
number = {43},
pages = {3472-3479},
year = {2008},
note = {Stochastic Modeling of Multiscale and Multiphysics Problems},
issn = {0045-7825},
doi = {https://doi.org/10.1016/j.cma.2008.02.024},
url = {https://www.sciencedirect.com/science/article/pii/S0045782508000959},
author = {Yang Cao and Linda Petzold},
keywords = {Tau-leaping, Stochastic simulation algorithm, Chemically reacting systems, Partial equilibrium, Discrete stochastic simulation},
abstract = {For chemical systems involving both fast and slow-scales, stiffness presents challenges for efficient stochastic simulation. Two different avenues have been explored to solve this problem. One is the slow-scale stochastic simulation (ssSSA) based on the stochastic partial equilibrium assumption. The other is the tau-leaping method. In this paper, we propose a new algorithm, the slow-scale tau-leaping method, which combines some of the best features of these two methods. Numerical experiments are presented which illustrate the effectiveness of this approach.}
}

@article{aggregation,
author = {Bobbio, A. and Trivedi, K.},
title = {Computing Cumulative Measures of Stiff Markov Chains Using Aggregation},
year = {1990},
issue_date = {October 1990},
publisher = {IEEE Computer Society},
address = {USA},
volume = {39},
number = {10},
issn = {0018-9340},
url = {https://doi.org/10.1109/12.59859},
doi = {10.1109/12.59859},
abstract = {An aggregation method for computing transient cumulative measures of large, stiff Markov models is presented. The method is based on classifying the states of the original problem into slow, fast-transient, and fast-current states. The authors aggregate fast-transient states and fast-recurrent states so that an approximate value to the desired cumulative measure can be obtained by solving a nonstiff set of linear differential equations defined over a reduced subset of slow states only. Several examples are included to illustrate how stiffness arises naturally in actual queuing and reliability models, and to show that cumulative measures provide a better characterization of the time-dependent system behavior.},
journal = {IEEE Trans. Comput.},
month = oct,
pages = {1291–1298},
numpages = {8},
keywords = {transient cumulative measures, time-dependent system behavior, stiffness, stiff Markov models, stiff Markov chains, slow states, reliability theory., reliability models, queueing theory, queueing models, nonstiff set, linear differential equations, fault tolerant computing, fast-transient states, fast-current states, cumulative measure, approximate value, aggregation, Markov processes}
}

@article{peles,
    author = {Peleš, Slaven and Munsky, Brian and Khammash, Mustafa},
    title = {Reduction and solution of the chemical master equation using time scale separation and finite state projection},
    journal = {The Journal of Chemical Physics},
    volume = {125},
    number = {20},
    pages = {204104},
    year = {2006},
    month = {11},
    abstract = {The dynamics of chemical reaction networks often takes place on widely differing time scales—from the order of nanoseconds to the order of several days. This is particularly true for gene regulatory networks, which are modeled by chemical kinetics. Multiple time scales in mathematical models often lead to serious computational difficulties, such as numerical stiffness in the case of differential equations or excessively redundant Monte Carlo simulations in the case of stochastic processes. We present a model reduction method for study of stochastic chemical kinetic systems that takes advantage of multiple time scales. The method applies to finite projections of the chemical master equation and allows for effective time scale separation of the system dynamics. We implement this method in a novel numerical algorithm that exploits the time scale separation to achieve model order reductions while enabling error checking and control. We illustrate the efficiency of our method in several examples motivated by recent developments in gene regulatory networks.},
    issn = {0021-9606},
    doi = {10.1063/1.2397685},
    url = {https://doi.org/10.1063/1.2397685},
    eprint = {https://pubs.aip.org/aip/jcp/article-pdf/doi/10.1063/1.2397685/15391547/204104_1_online.pdf},
}

@ARTICLE{BenchmarkTools.jl-2016,
  author =	 {{Chen}, Jiahao and {Revels}, Jarrett},
  title =	 "{Robust benchmarking in noisy environments}",
  journal =	 {arXiv e-prints},
  keywords =	 {Computer Science - Performance, 68N30, B.8.1, D.2.5},
  year =	 2016,
  month =	 "Aug",
  eid =		 {arXiv:1608.04295},
  archivePrefix ={arXiv},
  eprint =	 {1608.04295},
  primaryClass = {cs.PF},
  adsurl =	 {https://ui.adsabs.harvard.edu/abs/2016arXiv160804295C},
  adsnote =	 {Provided by the SAO/NASA Astrophysics Data System}
}

@article{Gillespie2001,
  title={Approximate accelerated stochastic simulation of chemically reacting systems},
  author={Gillespie, Daniel T},
  journal={The Journal of Chemical Physics},
  volume={115},
  number={4},
  pages={1716--1733},
  year={2001}
}

@BOOK{Wilkinson2006-so,
  title     = "Stochastic modelling for systems biology",
  author    = "Wilkinson, Darren J",
  publisher = "CRC Press",
  series    = "Chapman \& Hall/CRC Mathematical and Computational Biology",
  month     =  apr,
  year      =  2006,
  address   = "Boca Raton, FL"
}

@article{slack,
    author = {Kim, Jinsu and Dark, Jason and Enciso, German and Sindi, Suzanne},
    title = {Slack reactants: A state-space truncation framework to estimate quantitative behavior of the chemical master equation},
    journal = {The Journal of Chemical Physics},
    volume = {153},
    number = {5},
    pages = {054117},
    year = {2020},
    month = {08},
    abstract = {State space truncation methods are widely used to approximate solutions of the chemical master equation. While most methods of this kind focus on truncating the state space directly, in this work, we propose modifying the underlying chemical reaction network by introducing slack reactants that indirectly truncate the state space. More specifically, slack reactants introduce an expanded chemical reaction network and impose a truncation scheme based on desired mass conservation laws. This network structure also allows us to prove inheritance of special properties of the original model, such as irreducibility and complex balancing. We use the network structure imposed by slack reactants to prove the convergence of the stationary distribution and first arrival times. We then provide examples comparing our method with the stationary finite state projection and finite buffer methods. Our slack reactant system appears to be more robust than some competing methods with respect to calculating first arrival times.},
    issn = {0021-9606},
    doi = {10.1063/5.0013457},
    url = {https://doi.org/10.1063/5.0013457},
    eprint = {https://pubs.aip.org/aip/jcp/article-pdf/doi/10.1063/5.0013457/15577941/054117_1_online.pdf},
}

@article{shev,
author = {MacNamara, Shev and Burrage, Kevin and Sidje, Roger B.},
title = {Multiscale Modeling of Chemical Kinetics via the Master Equation},
journal = {Multiscale Modeling \& Simulation},
volume = {6},
number = {4},
pages = {1146-1168},
year = {2008},
doi = {10.1137/060678154},

URL = { 
    
        https://doi.org/10.1137/060678154
    
    

},
eprint = { 
    
        https://doi.org/10.1137/060678154
    
    

}
,
    abstract = { We present numerical methods for both the direct solution and simulation of the chemical master equation (CME), and, compared to popular methods in current use, such as the Gillespie stochastic simulation algorithm (SSA) and \$\tau\$-Leap approximations, this new approach has the advantage of being able to detect when the system has settled down to equilibrium. This improved performance is due to the incorporation of information from the associated CME, a valuable complementary approach to the SSA that has often been felt to be too computationally inefficient. Hybrid methods, that combine these complementary approaches and so are able to detect equilibrium while maintaining the efficiency of the leap methods, are also presented. Amongst CME-solvers the recently suggested finite state projection algorithm is especially well suited to this purpose and has been adapted here for the task, leading to a type of “exact \$\tau\$-Leap.” It is also observed that a CME-solver is often more efficient than an SSA or even a \$\tau\$-Leap approach for computing moments of the solution such as the mean and variance. These techniques are demonstrated on a test suite of five biologically inspired models, namely, stochastic models of the genetic toggle, receptor oligomerization, the Schlögl reactions, Goutsias' model of regulated gene transcription, and a decaying-dimerizing reaction set. For the gene toggle it is observed that important experimentally measurable traits such as the percentage of cells that undergo so-called switching may also be more efficiently approximated via a CME-based approach. }
}

@INPROCEEDINGS{zhang_cao,

  author={Jingwei Zhang and Watson, Layne T. and Yang Cao},

  booktitle={2008 8th IEEE International Conference on BioInformatics and BioEngineering}, 

  title={Adaptive aggregation method for the chemical master equation}, 

  year={2008},

  volume={},

  number={},

  pages={1-6},

  keywords={Chemicals;Equations;State-space methods;Biological system modeling;Computational modeling;Stochastic systems;Sparse matrices;Biology computing;Biological systems;Approximation error},

  doi={10.1109/BIBE.2008.4696725}}

@article{Munsky2006,
  title={Finite state projection for the solution of the chemical master equation},
  author={Munsky, Brian and Khammash, Mustafa},
  journal={The Journal of Chemical Physics},
  volume={124},
  number={4},
  pages={044104},
  year={2006}
}

@article{Sidje1998,
  title={Expokit: A software package for computing matrix exponentials},
  author={Sidje, Roger B},
  journal={ACM Transactions on Mathematical Software},
  volume={24},
  number={1},
  pages={130--156},
  year={1998}
}

@incollection{Robertson1967,
  author    = {Robertson, H. H.},
  title     = {The Solution of a Set of Reaction Rate Equations},
  booktitle = {Numerical Analysis: An Introduction},
  editor    = {Walsh, J.},
  pages     = {178--182},
  publisher = {Academic Press},
  address   = {Cambridge, Massachusetts},
  year      = {1967}
}

@article{field_noyes,
    author = {Field, Richard J. and Noyes, Richard M.},
    title = {Oscillations in chemical systems. IV. Limit cycle behavior in a model of a real chemical reaction},
    journal = {The Journal of Chemical Physics},
    volume = {60},
    number = {5},
    pages = {1877-1884},
    year = {1974},
    month = {03},
    abstract = {The chemical mechanism of Field, Körös, and Noyes for the oscillatory Belousov reaction has been generalized by a model composed of five steps involving three independent chemical intermediates. The behavior of the resulting differential equations has been examined numerically, and it has been shown that the system traces a stable closed trajectory in three dimensional phase space. The same trajectory is attained from other phase points and even from the point corresponding to steady state solution of the differential equations. The model appears to exhibit limit cycle behavior. By stiffly coupling the concentrations of two of the intermediates, the limit cycle model can be simplified to a system described by two independent variables; this coupled system is amenable to analysis by theoretical techniques already developed for such systems.},
    issn = {0021-9606},
    doi = {10.1063/1.1681288},
    url = {https://doi.org/10.1063/1.1681288},
    eprint = {https://pubs.aip.org/aip/jcp/article-pdf/60/5/1877/18889964/1877_1_online.pdf},
}

@article{kuntz2010,
author = {Kuntz, Juan and Thomas, Philipp and Stan, Guy-Bart and Barahona, Mauricio},
title = {The Exit Time Finite State Projection Scheme: Bounding Exit Distributions and Occupation Measures of Continuous-Time Markov Chains},
journal = {SIAM Journal on Scientific Computing},
volume = {41},
number = {2},
pages = {A748-A769},
year = {2019},
doi = {10.1137/18M1168261},

URL = { 
    
        https://doi.org/10.1137/18M1168261
    
    

},
eprint = { 
    
        https://doi.org/10.1137/18M1168261
    
    

}
,
    abstract = { We introduce the exit time finite state projection (ETFSP) scheme, a truncation-based method that yields approximations to the exit distribution and occupation measure associated with the time of exit from a domain (i.e., the time of first passage to the complement of the domain) of time-homogeneous continuous-time Markov chains. We prove that (i) the computed approximations bound the measures from below; (ii) the total variation distances between the approximations and the measures decrease monotonically as states are added to the truncation; and (iii) the scheme converges, in the sense that, as the truncation tends to the entire state space, the total variation distances tend to zero. Furthermore, we give a computable bound on the total variation distance between the exit distribution and its approximation, and we delineate the cases in which the bound is sharp. We also revisit the related finite state projection scheme and give a comprehensive account of its theoretical properties. We demonstrate the use of the ETFSP scheme by applying it to two biological examples: the computation of the first passage time associated with the expression of a gene, and the fixation times of competing species subject to demographic noise. }
}

@article{Wolf2010,
  title     = "Solving the chemical master equation using sliding windows",
  author    = "Wolf, Verena and Goel, Rushil and Mateescu, Maria and Henzinger,
               Thomas A",
  abstract  = "BACKGROUND: The chemical master equation (CME) is a system of
               ordinary differential equations that describes the evolution of
               a network of chemical reactions as a stochastic process. Its
               solution yields the probability density vector of the system at
               each point in time. Solving the CME numerically is in many cases
               computationally expensive or even infeasible as the number of
               reachable states can be very large or infinite. We introduce the
               sliding window method, which computes an approximate solution of
               the CME by performing a sequence of local analysis steps. In
               each step, only a manageable subset of states is considered,
               representing a ``window'' into the state space. In subsequent
               steps, the window follows the direction in which the probability
               mass moves, until the time period of interest has elapsed. We
               construct the window based on a deterministic approximation of
               the future behavior of the system by estimating upper and lower
               bounds on the populations of the chemical species. RESULTS: In
               order to show the effectiveness of our approach, we apply it to
               several examples previously described in the literature. The
               experimental results show that the proposed method speeds up the
               analysis considerably, compared to a global analysis, while
               still providing high accuracy. CONCLUSIONS: The sliding window
               method is a novel approach to address the performance problems
               of numerical algorithms for the solution of the chemical master
               equation. The method efficiently approximates the probability
               distributions at the time points of interest for a variety of
               chemically reacting systems, including systems for which no
               upper bound on the population sizes of the chemical species is
               known a priori.",
  journal   = "BMC Syst. Biol.",
  publisher = "Springer Science and Business Media LLC",
  volume    =  4,
  number    =  1,
  pages     = "42",
  month     =  apr,
  year      =  2010,
  language  = "en"
}

@article{Sunkara2012,
author = {Sunkara, Vikram and Hegland, Markus},
year = {2010},
month = {05},
pages = {1579-1586},
title = {An optimal Finite State Projection Method},
volume = {1},
journal = {Procedia CS},
doi = {10.1016/j.procs.2010.04.177}
}

@article{Kazeev2014,
  title={Direct solution of the chemical master equation using tensor trains},
  author={Kazeev, Vladimir and Khammash, Mustafa and Nip, Mathias},
  journal={PLoS Computational Biology},
  volume={10},
  number={3},
  pages={e1003359},
  year={2014}
}

@article{Dinh2016,
  title={Understanding the finite state projection and related methods for solving the chemical master equation},
  author={Dinh, Khanh N and Sidje, Roger B},
  journal={Physical Biology},
  volume={13},
  number={3},
  pages={035003},
  year={2016}
}

@article{MUNSKY2007818,
title = {A multiple time interval finite state projection algorithm for the solution to the chemical master equation},
journal = {Journal of Computational Physics},
volume = {226},
number = {1},
pages = {818-835},
year = {2007},
issn = {0021-9991},
doi = {https://doi.org/10.1016/j.jcp.2007.05.016},
url = {https://www.sciencedirect.com/science/article/pii/S0021999107002100},
author = {Brian Munsky and Mustafa Khammash},
keywords = {Chemical master equation, Stochastic gene regulatory networks, Markov processes},
abstract = {At the mesoscopic scale, chemical processes have probability distributions that evolve according to an infinite set of linear ordinary differential equations known as the chemical master equation (CME). Although only a few classes of CME problems are known to have exact and computationally tractable analytical solutions, the recently proposed finite state projection (FSP) technique provides a systematic reduction of the CME with guaranteed accuracy bounds. For many non-trivial systems, the original FSP technique has been shown to yield accurate approximations to the CME solution. Other systems may require a projection that is still too large to be solved efficiently; for these, the linearity of the FSP allows for many model reductions and computational techniques, which can increase the efficiency of the FSP method with little or no loss in accuracy. In this paper, we present a new approach for choosing and expanding the projection for the original FSP algorithm. Based upon this approach, we develop a new algorithm that exploits the linearity property of super-position. The new algorithm retains the full accuracy guarantees of the original FSP approach, but with significantly increased efficiency for some problems and a greater range of applicability. We illustrate the benefits of this algorithm on a simplified model of the heat shock mechanism in Escherichia coli.}
}

@article{tian_stochastic_2006,
	title = {Stochastic models for regulatory networks of the genetic toggle switch},
	volume = {103},
	url = {https://www.pnas.org/doi/abs/10.1073/pnas.0507818103},
	doi = {10.1073/pnas.0507818103},
	abstract = {Bistability arises within a wide range of biological systems from the λ phage switch in bacteria to cellular signal transduction pathways in mammalian cells. Changes in regulatory mechanisms may result in genetic switching in a bistable system. Recently, more and more experimental evidence in the form of bimodal population distributions indicates that noise plays a very important role in the switching of bistable systems. Although deterministic models have been used for studying the existence of bistability properties under various system conditions, these models cannot realize cell-to-cell fluctuations in genetic switching. However, there is a lag in the development of stochastic models for studying the impact of noise in bistable systems because of the lack of detailed knowledge of biochemical reactions, kinetic rates, and molecular numbers. In this work, we develop a previously undescribed general technique for developing quantitative stochastic models for large-scale genetic regulatory networks by introducing Poisson random variables into deterministic models described by ordinary differential equations. Two stochastic models have been proposed for the genetic toggle switch interfaced with either the SOS signaling pathway or a quorum-sensing signaling pathway, and we have successfully realized experimental results showing bimodal population distributions. Because the introduced stochastic models are based on widely used ordinary differential equation models, the success of this work suggests that this approach is a very promising one for studying noise in large-scale genetic regulatory networks.},
	number = {22},
	journal = {Proceedings of the National Academy of Sciences},
	author = {Tian, Tianhai and Burrage, Kevin},
	year = {2006},
	note = {\_eprint: https://www.pnas.org/doi/pdf/10.1073/pnas.0507818103},
	pages = {8372--8377},
}

@inproceedings{Burrage2006AKF,
  title={A Krylov-based finite state projection algorithm for solving the chemical master equation arising in the discrete modelling of biological systems},
  author={Kevin Burrage and Markus Hegland and Shev MacNamara and Roger B. Sidje},
  year={2006},
  url={https://api.semanticscholar.org/CorpusID:115616271}
}

@article{Vo2017,
    author = {Vo, Huy D. and Sidje, Roger B.},
    title = {An adaptive solution to the chemical master equation using tensors},
    journal = {The Journal of Chemical Physics},
    volume = {147},
    number = {4},
    pages = {044102},
    year = {2017},
    month = {07},
    abstract = {Solving the chemical master equation directly is difficult due to the curse of dimensionality. We tackle that challenge by a numerical scheme based on the quantized tensor train (QTT) format, which enables us to represent the solution in a compressed form that scales linearly with the dimension. We recast the finite state projection in this QTT framework and allow it to expand adaptively based on proven error criteria. The end result is a QTT-formatted matrix exponential that we evaluate through a combination of the inexact uniformization technique and the alternating minimal energy algorithm. Our method can detect when the equilibrium distribution is reached with an inexpensive test that exploits the structure of the tensor format. We successfully perform numerical tests on high-dimensional problems that had been out of reach for classical approaches.},
    issn = {0021-9606},
    doi = {10.1063/1.4994917},
    url = {https://doi.org/10.1063/1.4994917},
    eprint = {https://pubs.aip.org/aip/jcp/article-pdf/doi/10.1063/1.4994917/15530015/044102\_1\_online.pdf},
}

@inproceedings{Sidje2015,
author = {Vo, Huy and Sidje, Roger},
year = {2016},
month = {10},
pages = {},
title = {Improved Krylov-FSP Method for Solving the Chemical Master Equation}
}

@article{Dinh_2020,
doi = {10.1088/1478-3975/aba1d2},
url = {https://dx.doi.org/10.1088/1478-3975/aba1d2},
year = {2020},
month = {oct},
publisher = {IOP Publishing},
volume = {17},
number = {6},
pages = {065014},
author = {Dinh, Trang and Sidje, Roger B},
title = {An adaptive solution to the chemical master equation using quantized tensor trains with sliding windows},
journal = {Physical Biology},
abstract = {To cope with an extremely large or even infinite state space when solving the chemical master equation in biological problems, a potent strategy is to restrict to a finite state projection (FSP) and represent the transition matrix and probability vector in quantized tensor train (QTT) format, leading to savings in storage while retaining accuracy. In an earlier adaptive FSP–QTT algorithm, the multidimensional state space was downsized and kept in the form of a hyper rectangle that was updated when needed by selectively doubling some of its side dimensions. However, this could result in a much larger state space than necessary, with the effect of hampering both the execution time and stepping scheme. In this work, we improve the algorithm by enabling sliding windows that can dynamically slide, shrink or expand, with updates driven by a number of stochastic simulation algorithm trajectories. The ensuing state space is a considerably reduced hyper rectangle containing only the most probable states at each time step. Three numerical experiments of varying difficulty are performed to compare our approach with the original adaptive FSP–QTT algorithm.}
}

@INPROCEEDINGS{63529,

  author={Mihail, M.},

  booktitle={30th Annual Symposium on Foundations of Computer Science}, 

  title={Conductance and convergence of Markov chains-a combinatorial treatment of expanders}, 

  year={1989},

  volume={},

  number={},

  pages={526-531},

  keywords={Convergence;Graph theory;Linear algebra;Sampling methods;Symmetric matrices;Eigenvalues and eigenfunctions;State-space methods},

  doi={10.1109/SFCS.1989.63529}}

@article{Diaconis1991,
  title = {Geometric Bounds for Eigenvalues of Markov Chains},
  volume = {1},
  ISSN = {1050-5164},
  url = {http://dx.doi.org/10.1214/aoap/1177005980},
  DOI = {10.1214/aoap/1177005980},
  number = {1},
  journal = {The Annals of Applied Probability},
  publisher = {Institute of Mathematical Statistics},
  author = {Diaconis,  Persi and Stroock,  Daniel},
  year = {1991},
  month = feb 
}

@book{Ethier1986,
  title = {Markov Processes: Characterization and Convergence},
  ISBN = {9780470316658},
  ISSN = {1940-6347},
  url = {http://dx.doi.org/10.1002/9780470316658},
  DOI = {10.1002/9780470316658},
  journal = {Wiley Series in Probability and Statistics},
  publisher = {Wiley},
  author = {Ethier,  Stewart N. and Kurtz,  Thomas G.},
  year = {1986},
  month = mar 
}

@article{Cao2016,
  title={Adaptive aggregation of state-space models for the chemical master equation},
  author={Cao, Youfang and Liang, Jie and Pettigrew, Derrick and Kou, S},
  journal={Multiscale Modeling \& Simulation},
  volume={14},
  number={1},
  pages={28--48},
  year={2016},
  publisher={SIAM}
}

@article{Jahnke2010,
  title={Adaptive wavelet methods for the efficient solution of the chemical master equation},
  author={Jahnke, Thomas},
  journal={SIAM Journal on Scientific Computing},
  volume={31},
  number={6},
  pages={4373--4394},
  year={2010}
}

@misc{catalyst2021,
  title        = {Catalyst.jl – Modeling Chemical Reaction Networks},
  author       = {{Catalyst.jl Development Team}},
  year         = {2021},
  howpublished = {\url{https://catalyst.sciml.ai}},
  note         = {Accessed: 2025-03-16}
}

@misc{expokit2021,
  title        = {ExpoKit.jl – Efficient Matrix Exponential Computations in Julia},
  author       = {{ExpoKit.jl Development Team}},
  year         = {2021},
  howpublished = {\url{https://github.com/SciML/ExpoKit.jl}},
  note         = {Accessed: 2025-03-16}
}

@book{pazy_1983,
  author    = {Amnon Pazy},
  title     = {Semigroups of Linear Operators and Applications to Partial Differential Equations},
  publisher = {Springer-Verlag},
  year      = {1983},
  series    = {Applied Mathematical Sciences},
  volume    = {44},
  address   = {New York},
  isbn      = {978-0-387-90845-2},
  doi       = {10.1007/978-1-4612-5561-1}
}

@article{Rhodius2000Ergodicity,
  author  = {Adolf Rhodius},
  title   = {On Ergodicity Coefficients of Infinite Stochastic Matrices},
  journal = {Zeitschrift f{\"u}r Analysis und ihre Anwendungen},
  volume  = {19},
  number  = {3},
  pages   = {873--887},
  year    = {2000},
  doi     = {10.4171/ZAA/986}
}

@article{RHODIUS1997141,
title = {On the maximum of ergodicity coefficients, the Dobrushin ergodicity coefficient, and products of stochastic matrices},
journal = {Linear Algebra and its Applications},
volume = {253},
number = {1},
pages = {141-154},
year = {1997},
issn = {0024-3795},
doi = {https://doi.org/10.1016/0024-3795(95)00706-7},
url = {https://www.sciencedirect.com/science/article/pii/0024379595007067},
author = {Adolf Rhodius},
abstract = {For a sequence of stochastic matrices we consider conditions for weak ergodicity of infinite products taken in an arbitrary order. The main tools of the investigations are ergodicity coefficients corresponding to a norm. Comparisons between different ergodicity coefficients show the important role played by the Dobrushin ergodicity coefficient. In particular, the application of the Dobrushin ergodicity coefficient yields generalizations of Leizarowitz's ergodicity conditions.}
}
\end{document}